\newtheorem{definition}{Definition}
\newtheorem{theorem}{Theorem}
\newtheorem{proposition}{Proposition}
\newtheorem{lemma}{Lemma}
\DeclareMathOperator*{\argmax}{arg\,max}
\DeclareMathOperator*{\argmin}{arg\,min}
\renewcommand{\t}{\theta}
\renewcommand{\t}{\theta}
\newcommand{\dif}{\mathrm{d}}
\begin{document}
\title{Dynamic routing for social information sharing}
\author{Yunpeng~Li,~Costas~Courcoubetis,~and~Lingjie~Duan
\thanks{Y. Li, C. Courcoubetis and L. Duan are with the Engineering Systems and Design Pillar, Singapore University of Technology and Design, Singapore 487372,
Singapore (e-mail: yunpeng\_li@mymail.sutd.edu.sg, \{costas, lingjie\_duan\}@sutd.edu.sg).
}}
\maketitle

\begin{abstract}
 Today mobile users are intensively interconnected thanks to the emerging mobile social networks, where they
 share location-based information with each other when traveling on different routes and visit different areas of the city.
 In our model the information collected is aggregated over all users' trips and made publicly available as a public good.
 Due to information overlap, the total useful content amount increases with the diversity in path choices made by the users,
 and it is crucial to motivate selfish users to choose different paths despite the potentially higher costs associated with their trips.
 In this paper we combine the benefits from social information  sharing with the fundamental routing problem
 where a unit mass of non-atomic selfish users decide their trips in a non-cooperative game by choosing between a high-cost and a low-cost path.
 To remedy the inefficient low-content equilibrium where all users choose to explore a single path (the low-cost path),
 we propose and analyse two new incentive mechanisms that can be used by the social network application,
 one based on side payments and the other on restricting access to content for users that choose the low cost path.
 Under asymmetric information about user types (their valuations for content quality),
 both mechanisms efficiently penalise the participants that use the low-cost path and reward the participants that take the high-cost path.
 They lead to greater path diversity and hence to more total available content at the social cost of reduced user participation
 or restricted content to part of the users.
 We show that user heterogeneity can have opposite effects on social efficiency depending on the mechanism used.
 We also obtain interesting price of anarchy results that show some fundamental tradeoffs between achieving path diversity and
 maintaining greater user participation, motivating a combined mechanism to further increase the social welfare.
Our model extends classical dynamic routing in the case of externalities caused from traffic on different paths of the network.
\end{abstract}

\section{Introduction}
Emerging mobile social networks are developing fast to strengthen mobile users' social ties and allow them to share useful location-based information with others. The information is usually aggregated over all users' trips and made publicly available to benefit all users that participate.
The shared information (e.g., about shopping promotions and locations, restaurant discovery, air quality, traffic conditions, etc.)
weigh heavily on users' daily life, and is considered a public good that benefits all.
Plenty of such information sharing applications (e.g., TripAdvisor and Yelp\footnote{See www.tripadvisor.com and www.yelp.com})
have been proposed to involve millions of mobile users,
and the revenue of the related location-based services is expected to increase to around US\$40 billions by 2019 \cite{lbs}.
For example, Microsoft has recruited workers through Gigwalk, a famous mobile platform, to photograph
3D panoramas of businesses and restaurants, which are integrated to Bing Maps data \cite{GigwalkRicardo}.
Another example is that to collect air quality data, mobile participatory sensing systems are developed
that request participants to choose diverse routes to sense \cite{ganti2011mobile}.

Social information sharing via mobile routing and sensing comes at a cost to the mobile users.
Travelling on a path generally incurs a travel cost, e.g., cost of time and gas  to a user.
When making routing decisions to sense information, users are selfish and may only choose the routes with such minimum costs.
Hence, incentive mechanisms are crucial for inducing larger participation and motivating individuals to perform diverse routing and sensing tasks that are more valuable
to the community. Such mechanisms, designed by a ``social planner", must consider the various costs and benefits related to the potential information collected by the participants. Prior routing game literature did not study any social benefit from information sharing among users,
but only looked at users' travel costs (e.g., congestion generated by users on the same route)
for equilibrium routing analysis.
The prior literature mainly used concepts Nash equilibrium for best-response users or Wardrop equilibrium for non-atomic users (e.g., \cite{roughgarden2002bad}, \cite{richman2007topological}, \cite{acemoglu2006paradoxes}, \cite{altman2001routing}), and analyzed price of anarchy under different pricing schemes to control congestion.
Differently, this paper studies how to regulate dynamic routing that includes not only the travel costs along the paths but also
the benefit users get from the shared information collected along all paths.
Hence, our model extends in a fundamental way the traditional dynamic routing setup to include
the \emph{positive externalities} generated by users traveling on different paths:
a user traveling on one path benefits from the content collected by users traveling on another path.

We also note that this paper is different from the recent mobile crowdsensing studies (e.g., \cite{ganti2011mobile}, \cite{yang2012crowdsourcing}, \cite{chorppath2013trading}, \cite{duan2014motivating}, \cite{cheung2015distributed}, ), which focused on the analysis of participation incentives via auction, contract or pricing in a principal-agent structure under asymmetric information. In our model users are non-atomic,
and while travelling they collect information that is useful to all, acting as both information contributors and consumers.
Borrowing concepts from standard public good models (e.g., \cite{courcoubetis2006incentives}, \cite{courcoubetis2012economic}), our model combines information sharing with routing decision incentives to collect information from different routes when this
has a high informational value to the community.
Our aim is to design optimal incentive mechanisms for heterogeneous users
to use paths with potentially higher costs in order to increase the level of public good provisioning, i.e., the diversity and value of information available to the community.

Our basic content sharing model starts with a single path. Participants using this path collect information
that is aggregated by the system planner in a sub-additive way (due to possible information overlaps),
and made available to
the community as a public good with controlled access.
The valuation of this public good may depend on a user's type that is private information.
Under asymmetric information, the social planner knows the fraction of users in each type but cannot distinguish between individual users.
In the case of multiple paths, information collected from different paths is independent, and hence
the same number of users traveling over different paths
generates more total content compared to traveling over a single path.

As shown in Fig.~\ref{model}, our network routing model is the simplest possible involving only two paths (similar to the fundamental routing problem in \cite{altman2001routing}):
$H$ with some non-zero high cost, and $L$ with low cost, for simplicity equal to zero.
Though simple, this network model captures users' heterogeneity in travel cost and content valuation,
and allows us to explain intuitively and clearly the loss of efficiency due to users' private information that various incentive policies exhibit.
Adding a non-linear congestion cost term in each path does not change the qualitative properties of our equlibrium results,
but makes the analysis unnecessarily more complex.

\begin{figure}[!t]
  \centering
  \includegraphics[width=2.5in]{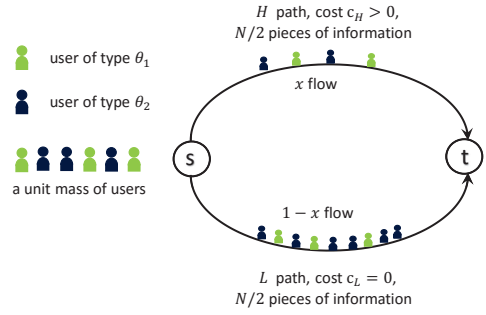}
  \caption{A unit mass of users travels from source $s$ to destination $t$. Travelling on the (high cost) $H$-path generates to a user a cost of $c_H>0$, while travelling on the (low cost) $L$-path generates a cost of $c_L=0$. Each path contains $N/2$ pieces of information.}
  \label{model}
\end{figure}

We analyse a non-cooperative game where a unit mass of selfish non-atomic users must decide
their trips by choosing between the two paths.
Our model extends traditional routing game models (e.g., \cite{roughgarden2002bad}, \cite{richman2007topological}, \cite{acemoglu2006paradoxes}) by considering both the costs of the trips and the information value obtained by having access to the total amount of collected content.
Each user has a ``type" regarding her valuation for content:
how much she benefits from the content collected (per unit),
or equivalently, in the case of using an application processing this content,
how the quality of the calculations depend on the amount of content made available.
Now, each user obtains a net utility that is proportional to her type (defined above)
multiplied by the total amount of content collected,
minus her travel cost.
Without any incentive, since each non-atomic user's content contribution is infinitesimal compared to the total  amount of content collected,
the only equilibrium is for all users to take the low-cost path. However, the social optimum in general can
involve a fraction of users taking the high-cost path to increase significanlty path diversity and hence total content available, even if this is more costly.

For this system we address the problem of constructing incentive  mechanisms for motivating
the optimal fraction of users to take the high-cost path,  achieving
the best possible tradeoff between travel cost and content value.
Our benchmark for evaluating social efficiency is the centralized model where the social planner dictates the optimal routing actions for all the users.
We analyse the properties of two individually rational, incentive compatible and budget balanced incentive mechanisms,
one using side payments and the other restricting access to content.
Side payments are collected from users choosing the low-cost path and create
subsidies to users that take the high-cost path, to keep the budget balanced.
We expect such a mechanism to be implemented by the content sharing social network application;
for example, by reducing membership fees to users that take the high-cost path.

Content-restriction is implemented by controlling
the fraction $a<1$ of the content made accessible to users that take the low-cost path
(same as ``destroying" a fraction $1-a$ of the content being collected),
and does not require a billing system as side-payment does.
Both schemes penalise directly or indirectly the users that choose the low-cost path and reward the users that take the high-cost path.
We capture the important trade-off between achieving greater path diversity at the expense of lower user participation or higher content destruction,
and show that higher user diversity in their valuation for content can have opposite effects in the two mechanisms.

Though simple, this network model captures heterogeneity (i) of paths in travel cost and (ii) of users in content valuation.
It shows the loss of efficiency due to users' private information that reduces the capabilities of the social planner (application) to control routing decisions.

Our main findings are summarised as follows.
\begin{itemize}
\item
Side payments inflict the same cost/benefit to all user independently of their types
and there is a tension between raising the cost for users choosing the low-cost path
and violating the participation condition of low valuation users.
It turns out that when users have homogeneous valuations for content, side payments
don't suffer from this participation issue and achieve  social optimality (with price of anarchy ($PoA$) equal to 1),
while when users have diverse content valuations, under the optimum incentive
scheme users of low-valuation type may not participate, reducing efficiency and $PoA$ to 1/2 from the social optimum
where the social planner has full power.
\item
By restricting content, users who value content more have a stronger incentive to take the high-cost path.
This incentive scheme affects more the high-valuation user type, even if this is private information.
Contrary to the side-payment case mentioned earlier, user diversity improves performance since high valuation type users may be willing to move to the H path even if information restriction is low on the L path.
Hence under some conditions on user type distribution, content restriction can be more efficient than
side payments. Again, it achieves a $PoA$ 1/2.
\item
Combining the two incentive schemes produces strictly better results than using any single one.
Applying a degree of content restriction reduces the amount of side payments needed to induce path diversity, hence increasing low-valuation users' participation.
We show that the $PoA$ is beyond 0.7. 
\item
Robustness of the results concerning the two incentive schemes is shown by extending our model
to a more general network model including more than two paths  and with possible path overlap.
We successfully address the new challenges of these models including multiple and unstable equilibria
in the optimal incentive design.
\end{itemize}

The rest of the paper is organized as follows. Section~\ref{systemmodel} introduces the system model when crowdsensing meets routing. Sections \ref{sidepayment} and \ref{contentrestriction} present and analyze the two kinds of mechanisms. Section \ref{combined} shows the combined mechanism. Section \ref{simulation} shows simulation results. Section \ref{generalization} and Section \ref{extensions} show some extensions of our model. Section \ref{conclusion} concludes the paper.  

\section{System Model and Problem Formulation}\label{systemmodel}
When a user travels, she continuously senses information that is useful for the user community.
This information is aggregated over all users and offered back to the user community as
a public good.
We consider the network model from source node $s$ to destination node $t$ in Fig.~\ref{model},
where there are two paths 
from the set $\mathcal{P}:=\{H,L\}$. Though simple, the two-path network model still captures all the interesting aspects of incentive-based routing policies in a clear and educational way. A more general network with multiple paths and path overlap is further analyzed in Section \ref{generalization}.

Travelling on the $H$-path ($L$-path) incurs a travel cost, e.g., cost of time and gas, $c_H$ ($c_L$) to a user, where $c_H>c_L$.
For simplicity,  we normalize these costs so that $c_H>0$ and $c_L =0$. Such costs are determined by well-known exogenous factors like traveling distance, average congestion during time of travel, and quality of road surface, which are therefore common for all users. 
Note that the travel costs per user are fixed and
do not depend on the amount of traffic along the paths. This is sensible our system users (e.g., in Gigwalk) are assumed to represent a small fraction of the total population that travels.\footnote{When future users no longer represent a small fraction, we will consider traffic-related costs depending on users' routing decisions as in the routing game literature (e.g., \cite{roughgarden2002bad}, \cite{richman2007topological}, \cite{acemoglu2006paradoxes}).
}

Given the two path candidates $H$-path and $L$-path, we except a large number $n$ of users
that make selfish routing decisions in a one-shot game. 
Since $n$ is assumed large for routing, we reasonably consider an individual user to be non-atomic as in most of the routing game literature
(e.g., \cite{roughgarden2002bad}, \cite{richman2007topological}, \cite{acemoglu2006paradoxes}), and model our $n$ users by a single infinitely divisible
unit mass of non-atomic users.
In this one-shot game, each user chooses a path $P\in\mathcal{P}$.
We denote by $x_H$ the fraction of the users that choose the $H$-path, leaving $1-x_H$ to use the $L$-path.

Users obtain a net benefit that is determined by travel costs and the information collected.
To model this information value we need to define an information or content model, i.e., how different user routing decisions
affect the total information collected. We present now a specific model
for information collection that motivates the general properties
that such information collection models should possess.

 For simplicity, we model the total information available in the network as a set of $N$ different information pieces uniformly
distributed along the two paths, i.e., each path has $N/2$  independent information pieces with no information overlap
between paths.\footnote{Our model and analysis can be easily extended to any information partition between the two paths.}
Since paths are symmetric in term of available information, we denote by\footnote{We use the subscript 1 in $Q$ to denote the content collected from a single link.} $Q_1(x)$ the expected number of independent
pieces of information collected over an arbitrary path if (i)  a fraction $x$ of the $n$ users (or a mass
$0\leq x\leq 1$ among the total unit mass of users) travel over that path,  (ii)
each user collects $\phi$ distinct information pieces among the $N/2$ available,
and (iii) each user selects her information items independently of other users.
In this model each information piece has probability $\phi/(N/2)$ to be sensed by a given user,
and probability $1-(1-\phi/(N/2))^{nx}$ to be sensed by at least one among the $n x$ users.
Hence, the total average number of items sensed by our $nx$ users (the average ``information content'')
on this single link is
\begin{equation}
\label{Q}
Q_1(x)=\frac{N}{2}\big(1-(1-\frac{2\phi}{N})^{nx}\big) \,,
\end{equation}
which is concave and increasing in $x$.
Similarly, the information value on the other path is $Q_1(1-x)$ since $n(1-x)$ users will take this second path.
We denote by $Q(x,1)$ the total information content collected and aggregated from both paths when a fraction $x$ of the
total unit mass of users choose
one of the paths, which is given by
\begin{equation}
\label{content1}
Q(x,1)=Q_1(x)+Q_1(1-x)\,.
\end{equation}
This  is concavely increasing in $x$ when $0\le x\le 0.5$ and decreasing when $0.5<x\leq 1$.
As not all users may participate, we may extend this notation and denote by
\begin{equation}
\label{eq:b}
Q(x,b)=Q_1(x)+Q_1(b-x),
\end{equation}
where the total information content
obtained if a mass $x$ out of a total of $b$ chooses one of the paths (hence $b-x$ chooses the other path). Hence, $b$ is the active participation fraction among users and the shape of our content function depends on the richness of the content (value of $N$) and the actual member of users that are abstracted into a continuous mass.

The above example motivates the properties of more general information content functions $Q(x,1)$
that are based on different forms of the $Q_1(x)$ function. We require that $Q(x,1)$:
(i) increasing and concave for $0\le x\le 0.5$, and (ii) $Q(x,1) = Q(1-x,1)$.
This implies that creating a more balanced traffic in the two links  increases the total sensed information.
Our results in the later sections do not depend on the specific model in (\ref{Q}), assuming any general (but ``sensible'') model
of information content mentioned above.
In the rest of the paper we use the $H$-path as the reference path, corresponding to the use of  $x = x_H$ in the above formulas.
Hence, $Q(x_H, 1)$ is the total content available assuming that $x_H$ traffic goes through the $H$-path.

We next address that users may have different valuations for content, i.e., have different sensitivities to the
total amount of content $Q(x_H, 1)$ made available. For example, a user
might have little use of the total content or require a small fraction of this content (equivalently, care less about the quality of the service provided by the application).
We model this by a multiplicative parameter $\theta$.
To make our analysis simpler,
we consider two valuation types among users, 
denoted by $\theta_1$ and $\theta_2$,
where $\theta_1\leq \theta_2$.
An important parameter of the model is the average valuation $\theta_0$ given by
\begin{equation}
\label{type}
\theta_0=\eta\theta_1+(1-\eta)\theta_2\,,
\end{equation}
where $\eta$ and $1-\eta$ are the user proportions in low and high valuations respectively.
We assume that user types are private information (i.e., only a user knows its type $\t_i$) and that
the system only knows the value of $\eta$ (i.e., the fraction of each type in the total population).
If a user with type $i\in\{1,2\}$ chooses path $P\in\{H,L\}$, her payoff is the difference between her perceived information value $\theta_iQ(x_H,1)$ and the travel cost on path $P$. That is,
\begin{equation}
\label{payoff}
u_i(P,x_H)=\left\{
\begin{array}{ll}
\theta_iQ(x_H,1), &\mbox{if }P=L,\\
\theta_iQ(x_H,1)-c_H, &\mbox{if }P=H,
\end{array}
\right.
\end{equation}
and the user will prefer the path with a higher payoff. Observe that in our non-atomic user model
 an individual user has an infinitesimal contribution to $Q(x_H,1)$,
 thus her perceived information value $\theta_iQ(x_H,1)$ is not affected by her path choice $P$.
 But it depends on the choices made by the rest of the users through the value of $x_H$.
We naturally formulate such interaction among users in path choosing as a \emph{content routing game}.
We denote a feasible flow's path partition  by $(x_H,1-x_H)$, or just by $x_H$ since the flow on the $L$-path
follows uniquely from $x_H$. The next definition is the equivalent of a Wardrop equilibrium (we place a hat over a symbol to denote equilibrium flow).
\begin{definition}
\label{def1}
A feasible flow $\hat{x}_H$ in the content routing game is an equilibrium if no user traveling over the $H$-path or the $L$-path will
profit by  deviating from her current path choice to increase her payoff.
\end{definition}

The social welfare (total system efficiency) is defined as
\begin{eqnarray}
SW(x_H)&=&\eta\theta_1Q(x_H,1)+(1-\eta)\theta_2Q(x_H,1)-x_Hc_H\nonumber\\
&=&\theta_0Q(x_H,1)-x_Hc_H.\label{eq:SW}
\end{eqnarray}
If we can perfectly control all users' decisions in a centralized way, we can achieve the social optimum flow $x_H^*$
that maximises $SW(x_H)$ in \eqref{eq:SW} without any constraint:
\begin{equation}
\label{swl}
x_H^*\in\argmax\limits_{0\le x_H\le1}\{\theta_0Q(x_H,1)-x_Hc_H\}\,.
\end{equation}
(We use superscript $^*$  to denote optimal values).
It is easy to see that $x_H^*\in[0,0.5]$,
otherwise we can replace it by $1-x_H^*$ to achieve the same information value in (\ref{content1}) at a smaller total cost on the $H$-path.
\subsection{Equilibrium without Incentive Design}
In real world users are selfish and may not behave optimally without appropriate incentives.
We now analyse the users' behavior in the content routing game without any added incentives.
Notice  no matter which path a user of type-$i$ chooses, the information value she perceived is always $\theta_iQ(x_H,1)$ according to (\ref{payoff}) and hence her choice only depends on path cost.
Therefore, the selfish routing strategy is for every user to choose the $L$-path at the equilibrium.
\begin{proposition}\label{prop}
There exists a unique equilibrium for our content routing game: $\hat{x}_H=0$ and the resultant social welfare is $SW(0)=\theta_0Q(0,1)$.
\end{proposition}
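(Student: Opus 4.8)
The plan is to exploit the non-atomic structure of the game, which decouples the benefit term from each user's own routing decision and reduces the individual choice to a pure cost comparison. First I would fix an arbitrary candidate flow $x_H$ and read off from the payoff expression \eqref{payoff} that a type-$i$ user earns $\theta_i Q(x_H,1)$ on the $L$-path and $\theta_i Q(x_H,1)-c_H$ on the $H$-path. Because the user is non-atomic, her infinitesimal contribution leaves $Q(x_H,1)$ unchanged whether or not she switches paths, so the common benefit term $\theta_i Q(x_H,1)$ cancels in the $H$-versus-$L$ comparison. Since $c_H>0=c_L$, every user of either type therefore strictly prefers the $L$-path, irrespective of her valuation $\theta_i$. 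This single observation drives the whole argument.

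For existence, I would verify that $\hat{x}_H=0$ satisfies Definition~\ref{def1}. At this flow no mass travels over the $H$-path, so the only deviations to check are those of $L$-path users; by the cost comparison above, any such user switching to $H$ would strictly lower her payoff by $c_H$, so no profitable deviation exists and $\hat{x}_H=0$ is an equilibrium.

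For uniqueness, I would argue by contradiction: suppose some equilibrium had $\hat{x}_H>0$, so a positive mass of users uses the $H$-path. Picking any such user, deviating to the $L$-path keeps her perceived value at $\theta_i Q(\hat{x}_H,1)$ (again by non-atomicity) while eliminating the cost $c_H>0$, strictly raising her payoff and contradicting the equilibrium condition. Hence no equilibrium with positive $H$-traffic can exist, and $\hat{x}_H=0$ is the unique equilibrium. Finally, substituting $x_H=0$ into \eqref{eq:SW} yields $SW(0)=\theta_0 Q(0,1)-0\cdot c_H=\theta_0 Q(0,1)$, completing the statement. The only point demanding genuine care is the non-atomicity step, namely making precise that an individual deviation does not perturb $Q(x_H,1)$ and therefore that the Wardrop-type condition in Definition~\ref{def1} need only be checked against the unused $H$-path; everything else is an immediate consequence of $c_H>c_L$.
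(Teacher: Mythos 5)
Your proposal is correct and follows essentially the same argument as the paper: by non-atomicity the perceived information value $\theta_i Q(x_H,1)$ is identical on both paths, so each user's choice reduces to a cost comparison, and $c_H>0=c_L$ forces all traffic onto the $L$-path. Your explicit existence/uniqueness verification and the substitution into \eqref{eq:SW} merely formalize what the paper states in the sentence preceding the proposition.
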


There is a gap between optimal social welfare and the social welfare attained at
the above equilibrium. We measure the gap by \emph{price of anarchy} ($PoA$) \cite{nisan2007algorithmic},
which is the ratio between the lowest social welfare at any equilibrium and the optimal social welfare $SW(x_H^*)$,
 by searching over all possible system parameters and $Q(\cdot, 1)$ functions. In our content routing game, the specific formula of $PoA$ without any added incentives is
$$PoA=\min\limits_{\theta_i,Q(\cdot,1),c_H}\frac{\theta_0Q(0,1)}{SW(x_H^*)}.$$
\begin{proposition}
\label{prop2}
The price of anarchy of the content routing game without incentive design is $PoA=1/2.$
\end{proposition}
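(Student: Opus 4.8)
The plan is to establish $PoA=1/2$ through a matching pair of bounds: first showing that the ratio $\theta_0 Q(0,1)/SW(x_H^*)$ is never smaller than $1/2$ for any admissible choice of $\theta_i$, $c_H$ and content function, and then exhibiting a family of instances along which it tends to $1/2$. For the lower bound I would start from the social optimum and discard the nonnegative cost term, writing $SW(x_H^*)=\theta_0 Q(x_H^*,1)-x_H^* c_H\le \theta_0 Q(x_H^*,1)$. Since $x_H^*\in[0,0.5]$ and $Q(\cdot,1)$ is increasing on $[0,0.5]$, this is at most $\theta_0 Q(0.5,1)$. The equilibrium welfare in the denominator is $\theta_0 Q(0,1)$ by Proposition~\ref{prop}, so everything reduces to comparing $Q(0.5,1)$ with $Q(0,1)$.

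The heart of the argument is the structural inequality $Q(0.5,1)\le 2\,Q(0,1)$. Using the decomposition $Q(x,1)=Q_1(x)+Q_1(1-x)$ together with $Q_1(0)=0$ (an empty path collects nothing), the two quantities become $Q(0.5,1)=2Q_1(0.5)$ and $Q(0,1)=Q_1(1)$. Because $Q_1$ is increasing, $Q_1(0.5)\le Q_1(1)$, which yields $Q(0.5,1)=2Q_1(0.5)\le 2Q_1(1)=2Q(0,1)$. Chaining this with the previous display gives $SW(x_H^*)\le 2\theta_0 Q(0,1)$, i.e. $\theta_0 Q(0,1)/SW(x_H^*)\ge 1/2$, and hence $PoA\ge 1/2$. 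Intuitively, funnelling the whole unit mass onto a single path already harvests at least half of the content obtainable under a perfectly balanced split, so the no-incentive equilibrium can never be worse than twice suboptimal.

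For tightness I would drive both of the above inequalities to equality simultaneously. I take a single-path content function that saturates, so that $Q_1(0.5)$ and $Q_1(1)$ nearly coincide (this happens, for example, in the explicit model \eqref{Q} as $n$ grows, or for a $Q_1$ that ramps up and then flattens), giving $Q(0,1)=Q_1(1)\approx Q_1(0.5)=\half Q(0.5,1)$. Letting $c_H\rightarrow 0^+$ forces the optimal split $x_H^*\rightarrow 0.5$ and $SW(x_H^*)\rightarrow \theta_0 Q(0.5,1)=2\theta_0 Q_1(0.5)$, so the ratio tends to $\theta_0 Q_1(1)/(2\theta_0 Q_1(1))=1/2$. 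Combined with the lower bound, this pins the worst-case ratio at exactly $1/2$.

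The step I expect to be delicate is the structural inequality $Q(0.5,1)\le 2Q(0,1)$: it does \emph{not} follow from the stated concavity and symmetry of $Q(\cdot,1)$ alone, since a concave symmetric profile with a small but positive value at $x=0$ would violate it, and must instead be extracted from the additive decomposition into single-path contributions with $Q_1(0)=0$ and $Q_1$ monotone. Verifying that the maximiser $x_H^*$ genuinely approaches $0.5$ as $c_H\rightarrow 0$, rather than stalling at an interior point, is then a routine monotonicity check once the saturating $Q_1$ is fixed.
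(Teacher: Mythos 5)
Your proof is correct and follows essentially the same route as the paper's: both drop the cost term to bound $SW(x_H^*)\le\theta_0 Q(0.5,1)$, reduce the comparison to $Q_1(1)$ versus $2Q_1(0.5)$ via the decomposition $Q(x,1)=Q_1(x)+Q_1(1-x)$ with $Q_1(0)=0$ and $Q_1$ nondecreasing, and establish tightness with a saturating (ramp-then-flat) $Q_1$ together with $c_H\rightarrow 0^+$, which is exactly the paper's construction in \eqref{fucdef}. Your explicit remark that the key inequality $Q(0.5,1)\le 2Q(0,1)$ needs the single-path decomposition rather than just concavity and symmetry of $Q(\cdot,1)$ is a valid observation that the paper leaves implicit.
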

\begin{proof}
Note that the maximal quantity of content is attained by letting half flow go through the high cost path and half go through the low cost path, hence
$$SW(x_H^*)\le\max\limits_{0\le x_H\le1}\{\theta_0Q(x_H,1)\}=\theta_0Q(0.5,1).$$
Thus the price of anarchy is lower-bounded as follows:
$$PoA\ge\min\limits_{\theta_i,Q(\cdot,1),c_H}\frac{\theta_0Q(0,1)}{\theta_0Q(0.5,1)}=\min\limits_{Q_1(\cdot)}\frac{Q_1(1)}{2Q_1(0.5)}\ge\frac{1}{2},$$
The last inequality is because that $Q_1(x_H)$ is a nondecreasing function of $x_H$.

We can show that the bound is also tight and hence $PoA=1/2$  by constructing a suitable function $Q_1(x)$. Let
\begin{equation}
\label{fucdef}
Q_1(x)=\left\{
\begin{array}{ll}
2qx &0\le x\le0.5;\\
q &0.5\le x\le1.
\end{array}
\right.
\end{equation}
Then $Q(x,1)=q+2qx$, $0\le x\le 0.5$. This corresponds to having a finite amount of content, and that each user discovers some new piece of content with no overlaps. Consider any fixed value of $c_H$ that is less than $2\theta_0q$, then $x_H^*=0.5$. Using this setup, the $PoA$ for this specific instance becomes
$$PoA=\min\limits_{\theta_i,c_H}\frac{\theta_0Q(0,1)}{SW(0.5)}=\min\limits_{\theta_i,c_H}\frac{\theta_0q}{2\theta_0q-0.5c_H}=\frac{1}{2}.$$
This completes the proof that $PoA=\frac{1}{2}$.
\end{proof}

The $H$-path cost $c_H>0$ motivates all users to take the $L$-path and leads to the inefficient equilibrium.
Hence a sufficiently small but positive cost leads to the biggest loss of efficiency, as the equilibrium ignores
completely the content on the $H$-path.
Realizing the inefficiency, we want to design incentive mechanisms  to approach social optimum $SW(\hat{x}_H)$
with the following properties.
\begin{enumerate}
  \item Individual rationality (IR): The payoff of each $H$- or $L$-path participant should be non-negative to guarantee participation.
  \item Incentive compatibility (IC): Each user with type-$i$ should truthfully decide routing according to her real valuation
  $\theta_i$.
  \item Budget balance (BB): The money collected from some users (if any) should be distributed to the rest of the users. Many crowdsensing applications (e.g., Waze) want to maximize the social welfare and are not profitable in forming a community among users.
\end{enumerate}
\begin{table}[!t]
\centering
\caption{Summary of Key Notations}
\begin{tabular}[tb]{|c|c|}\hline
Symbol & Physical meaning\\\hline
$c_P$&Travel cost on $P$- path, $P\in\{H,L\}$ \\\hline
$N$&Total number of information contained on the two paths \\\hline
$n$&Total number of non-atomic users \\\hline
$x_H$&Fraction of $H$-path participants in $n$ users  \\\hline
$Q_1(x_H)$& $H$-path's content collected by $x_H$ proportion of users \\\hline
$Q(x_H,1)$& Content of both paths given the partition $(x_H,1-x_H)$ \\\hline
$\theta_i$& Type-$i$ users' private valuation towards information, $i\in\{1,2\}$ \\\hline
$\eta$& Fraction of type-$1$ users out of $n$ users \\\hline
$u_i(P,x_H)$& A type-$i$ user's payoff, $i\in\{1,2\}$, by choosing path $P$ under $x_H$ \\\hline
   $SW(x_H)$& Social welfare with $x_H$ proportion of $H$-path participants  \\\hline
    $x_H^*$& Free social welfare maximizer in \eqref{swl}\\\hline
     $\hat{x}_H$& Equilibrium of the content routing game \\\hline
    $b$& Active participation fraction (less than 1) of the unit mass of users\\\hline
    $g(x_H)$& Side-payment function under $x_H$ \\\hline
    $a$& Content-restriction coefficient (less than 1) \\\hline
\end{tabular}
\label{table1}
\end{table}

We will introduce in the following sections two incentive schemes which satisfy the above properties: side-payment  and content-restriction.

We list all the key notations used in this paper in Table \ref{table1}.

\section{Side Payments as Incentive}\label{sidepayment}
The idea of using side-payment as an incentive is to incentivize more users to choose $H$-path by charging more users
that take the $L$-path (and providing positive subsidies to users of the $H$-path).
When applying side-payment (on the $L$-path), it is possible that some users with low-valuation type $\theta_1$
(``low valuation users'') may choose not to participate,
since their payoffs can be negative. In this case, only a proportion $b\leq 1$ of the unit mass of users will participate.
To address this reduction of the total mass of users in the system
we use the definition in \eqref{eq:b}.
We can now formally introduce the side-payment mechanism as follows.

\emph{Side-payment Mechanism:} We charge an extra cost $g(x_H)$ on each of $(b-x_H)n$ $L$-path participants and refund an amount $(b-x_H)g(x_H)/x_H$ equally to each among the $nx_H$ $H$-path participants to keep the budget balanced\footnote{
 Note that in our notation $g(x_H)$ is charged to the $(1-x_H)$ users in the  $L$-path flow, although we denote it as
 a function of the value $x_H$ of the $H$-path flow.}.

 Our side-payment mechanism only depends on a user's path choice and not on her private type, and will ensure IC.
This  side-payment mechanism requires a billing system (e.g., Amazon Mechanical Turk) to enable monetary transfer between users. As long as users want to access the public information, they need to accept this mechanism's policy and use the billing system. In Section IV, we will introduce another incentive mechanism based on content-restriction, that does not
require such a billing system.

 After introducing the side-payment mechanism, the payoffs of users change. Depending on her path choice ($H$ or $L$), a user of type $i\in\{1,2\}$ will receive a refund $(b-x_H)g(x_H)/x_H$ or have to submit payment $g(x_H)$. Her payoff changes from (\ref{payoff}) to
\begin{equation}
\footnotesize
\label{payoff2}
u_i(P,x_H)=\left\{
\begin{array}{ll}
\theta_iQ(x_H,b)-g(x_H)&\mbox{if }P=L;\\
\theta_iQ(x_H,b)-c_H+\frac{b-x_H}{x_H}g(x_H)&\mbox{if }P=H.
\end{array}
\right.
\end{equation}

Consider any positive equilibrium $\hat{x}_H>0$. A type-$i$ user's payoff difference
between taking the $H$ and $L$ paths is zero when
$$-c_H+\frac{b-\hat{x}_H}{\hat{x}_H}g(\hat{x}_H)+g(\hat{x}_H)=\frac{b}{\hat{x}_H}g(\hat{x}_H)-c_H\,,$$
which does not depend on its type. Hence in order to have a positive flow
on both paths we need the above incentive difference to be zero, i.e.,
the payment function $g$ must satisfy
\begin{equation}
\label{condition}
g(\hat{x}_H)=\hat{x}_H\frac{c_H}{b}\,.
\end{equation}

We say an equilibrium is asymptotically stable \cite{fudenberg1991game} if a small perturbation of the traffic on the two links, say adding
(subtracting) small $\epsilon$ to the $H$-path and subtracting (adding) the $\epsilon$ from $L$-path, does not move the system away from the equilibrium. Stability is an important criterion for an equilibrium to be implementable,
thus we only consider stable equilibria in this paper.
We say that \emph{we can incentivise (as an equilibrium) the flow value $\hat{x}_H$}
 if $\hat{x}_H$ can be obtained as a stable equilibrium for some payment function $g(x_H)$.
We now provide the design of a budget balanced payment scheme that incentivises any flow partition between the two paths
(without considering IR, which will be dealt later).
\begin{proposition}
\label{lemma1}
Assume BB but not necessarily IR, a total mass of users $b\leq1$, and any target equilibrium $\hat{x}_H$.
The following side-payment function $g_{\hat{x}_H}(x_H)$ incentivizes $\hat{x}_H$:
\begin{equation}
\label{function}
g_{\hat{x}_H}(x_H)=\frac{\hat{x}_Hc_H(b-x_H)}{b(b-\hat{x}_H)}\,.
\end{equation}
Observe that at the target equilibrium $\hat{x}_H$, each user's total perceived cost
(travel cost plus payment or minus subsidy) is equal to $\hat{x}_Hc_H/b$, independently of the path choice.
Thus the equilibrium payoff of a type-$i$ user is
 \begin{equation}
\label{payoff6}
\theta_iQ(\hat{x}_H,b)-g_{\hat{x}_H}(\hat{x}_H)=\theta_iQ(\hat{x}_H,b)-\frac{\hat{x}_H}{b}c_H\,.
\end{equation}
\end{proposition}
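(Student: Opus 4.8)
The plan is to establish three claims in sequence: that $\hat{x}_H$ is an equilibrium under the payment \eqref{function}, that this equilibrium is asymptotically stable, and that it delivers the payoff \eqref{payoff6}. I would start from the equilibrium condition. Substituting $x_H=\hat{x}_H$ into \eqref{function} gives at once $g_{\hat{x}_H}(\hat{x}_H)=\hat{x}_Hc_H/b$, which is precisely condition \eqref{condition} derived earlier. Feeding this back into the payoff pair \eqref{payoff2}, the payoff difference between the $H$- and $L$-paths at $\hat{x}_H$ vanishes and is independent of the type $\theta_i$, so neither type of participant can gain by switching paths; by Definition~\ref{def1}, $\hat{x}_H$ is therefore an equilibrium.

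The substantive step is stability, and this is where the particular shape of \eqref{function} does the work. For an arbitrary off-equilibrium flow $x_H$, I would form the (type-independent) payoff gap from \eqref{payoff2},
\begin{equation*}
u_i(H,x_H)-u_i(L,x_H)=\frac{b}{x_H}\,g_{\hat{x}_H}(x_H)-c_H,
\end{equation*}
and insert \eqref{function}. After clearing denominators the numerator $\hat{x}_H(b-x_H)-x_H(b-\hat{x}_H)$ collapses — the cross terms cancel — to $b(\hat{x}_H-x_H)$, so that
\begin{equation*}
u_i(H,x_H)-u_i(L,x_H)=\frac{b\,c_H\,(\hat{x}_H-x_H)}{x_H\,(b-\hat{x}_H)}.
\end{equation*}
For an interior target ($0<\hat{x}_H<b$) the two remaining factors are positive, so the sign of the gap is controlled entirely by $\hat{x}_H-x_H$. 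A perturbation raising $x_H$ above $\hat{x}_H$ makes the $H$-path strictly less attractive and drives flow back down, while one lowering $x_H$ makes $H$ strictly more attractive and draws flow back up; this restoring force is exactly the asymptotic stability required by the definition preceding the proposition.

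Finally, for the payoff \eqref{payoff6} I would evaluate either branch of \eqref{payoff2} at $x_H=\hat{x}_H$ using $g_{\hat{x}_H}(\hat{x}_H)=\hat{x}_Hc_H/b$: the $L$-branch is $\theta_iQ(\hat{x}_H,b)-\hat{x}_Hc_H/b$ immediately, and a one-line check shows the $H$-branch equals it, so each user's total perceived cost is $\hat{x}_Hc_H/b$ regardless of path choice. I expect the main obstacle to be the stability argument rather than the equilibrium condition: condition \eqref{condition} only fixes $g$ at the single point $\hat{x}_H$, and many functions meet it there, so the real content of the proposition is that the decreasing interpolation \eqref{function} is the one generating the correct negative feedback. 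The clean cancellation in the numerator is what makes this transparent, and I would be careful to record the interior assumption $0<\hat{x}_H<b$ that keeps the denominator positive and the argument valid (with the boundary case $\hat{x}_H=0$, where $g\equiv 0$, handled separately).
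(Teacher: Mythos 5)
Your proof is correct and takes essentially the same route as the paper: check that \eqref{function} reduces to the indifference condition \eqref{condition} at $x_H=\hat{x}_H$ (so the target is an equilibrium with the stated payoff), then argue asymptotic stability from the restoring force the decreasing payment creates off-equilibrium. Your explicit payoff-gap computation $u_i(H,x_H)-u_i(L,x_H)=\tfrac{b\,c_H(\hat{x}_H-x_H)}{x_H(b-\hat{x}_H)}$, together with the interior caveat $0<\hat{x}_H<b$, is just a sharper rendering of the paper's informal comparison of $g_{\hat{x}_H}(x_H)$ against the equilibrium payment level.
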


Notice that the payment function (\ref{function}) becomes (\ref{condition}) when $x_H=\hat{x}_H$, which ensures that $\hat{x}_H$ is an equilibrium under the side-payment design in (\ref{function}). (\ref{function}) also
implies convergence from any $x_H$ to the stable target $\hat{x}_H$:
\begin{itemize}
\item If $x_H>\hat{x}_H$, the corresponding (\ref{function}) is smaller than equilibrium payment in (\ref{condition}), motivating more users to choose $L$-path with smaller penalty and $x_H$ decreases.
\item If $x_H<\hat{x}_H$, (\ref{function}) is larger than (\ref{condition}) and we expect a larger $x_H$ for choosing $H$-path to get refund.
    \end{itemize}

At the stable equilibrium $\hat{x}_H$, each $L$-path participant perceives an extra cost $\hat{x}_Hc_H/b$ as in (\ref{condition}), where each user is indifferent in choosing between the two paths, but this cost may not be covered by her information value and can drive her out of the crowdsensing system. Hence, besides Proposition \ref{lemma1}, we need to consider IR in the side-payment design.
We say that an equilibrium $\hat{x}_H$ satisfies IR for any type-$i$ user if,
\begin{equation}
\label{IR}
\theta_iQ(\hat{x}_H,b)-\frac{\hat{x}_H}{b}c_H\ge0\,.
\end{equation}
In the following, we show that if users are homogeneous in their valuation types (i.e., $\theta_1=\theta_2$), IR always holds when using Proposition \ref{lemma1}; whereas if they are heterogeneous with $\theta_1<\theta_2$, IR does not hold in general.

\subsection{Side-payment for Homogeneous Users}\label{sidepaymenthomo}
Due to $\theta_1=\theta_2$ in this subsection, the average valuation $\theta_0$ equals $\theta_1$ or $\theta_2$. To avoid trivial discussion, we assume the optimal social welfare at the flow maximizer $x_H^*$ in (\ref{swl}) is nonnegative, i.e.,
{\setlength\abovedisplayskip{1pt}
\setlength\belowdisplayskip{1pt}
$$SW(x_H^*)=\theta_0Q(x_H^*,1)-x_H^*c_H\geq 0,$$
}which tells that  under full participation ($b=1$), any user's payoff in (\ref{IR}) is non-negative. Thus, IR always holds for the homogeneous user case and according to Proposition \ref{lemma1} we can ideally choose $\hat{x}_H=x_H^*$ in (\ref{swl}) to
inentivize the social optimum. We denote by $PoA_g$ the $PoA$ we can achieve using incentive payments.
\begin{proposition}
\label{theorem1}
 IR holds always for the homogeneous user case and we can
use the the side-payment mechanism $g(x_H)$ in (\ref{function}) to incentivize the social optimum $x_H^*$ in (\ref{swl}).
In this case  $PoA_g=1$.
\end{proposition}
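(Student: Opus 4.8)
The plan is to verify the three assertions in sequence: that individual rationality (IR) always holds under homogeneity, that the payment function of Proposition \ref{lemma1} can select the unconstrained maximizer $x_H^*$ as the operative equilibrium, and that the resulting welfare ratio is exactly one.

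First I would dispose of IR. With $\theta_1=\theta_2=\theta_0$ no user has a type-dependent reason to exit, so I would argue that the active mass stays at full participation $b=1$; substituting $b=1$ and $\hat{x}_H=x_H^*$ into the IR inequality (\ref{IR}) collapses it to $\theta_0 Q(x_H^*,1)-x_H^* c_H\ge 0$, which is precisely the standing assumption $SW(x_H^*)\ge 0$ made at the start of the subsection. Hence IR is satisfied for every participant, and the full-participation hypothesis $b=1$ is internally consistent.

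Next, because $b=1$ and IR holds, Proposition \ref{lemma1} directly furnishes the budget-balanced payment $g_{x_H^*}(x_H)$ of (\ref{function}) that makes $\hat{x}_H=x_H^*$ a stable equilibrium. The convergence remarks immediately following (\ref{function}) ensure this is the equilibrium reached from any initial flow, so it is the relevant equilibrium when evaluating $PoA_g$.

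Finally I would identify the equilibrium welfare with the optimum. Budget balance means the payments levied on the $L$-path users are redistributed entirely as subsidies to the $H$-path users, so these internal transfers cancel and drop out of the aggregate welfare; the social welfare at $\hat{x}_H=x_H^*$ is therefore $\theta_0 Q(x_H^*,1)-x_H^* c_H = SW(x_H^*)$, the unconstrained maximum of (\ref{swl}). Since the equilibrium welfare coincides with the optimal welfare for every admissible choice of $\theta_i$, $Q(\cdot,1)$ and $c_H$, the minimizing ratio defining $PoA_g$ equals one.

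The main obstacle here is conceptual rather than computational. The two facts doing all the work are that budget balance makes the side payments vanish from the welfare objective, and that homogeneity collapses the IR constraint onto the global optimality assumption, so that choosing $\hat{x}_H=x_H^*$ is always feasible. Once these are made explicit the three claims follow with no genuine calculation; the only care needed is to confirm that $x_H^*$ is indeed the stable equilibrium selected by $g_{x_H^*}$, rather than some spurious fixed point, which the convergence property of (\ref{function}) already guarantees.
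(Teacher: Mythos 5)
Your proposal is correct and follows essentially the same route as the paper, which proves this proposition in the text preceding it: homogeneity makes the per-user IR payoff $\theta_0 Q(x_H^*,1)-x_H^*c_H$ coincide with the standing assumption $SW(x_H^*)\ge 0$, Proposition~\ref{lemma1} then incentivizes $\hat{x}_H=x_H^*$ as a stable equilibrium, and budget balance makes the transfers cancel so the equilibrium welfare equals $SW(x_H^*)$, giving $PoA_g=1$. Your explicit remark that the side payments vanish from aggregate welfare is left implicit in the paper but is exactly the right justification.
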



\subsection{Side-payment for Heterogeneous Users}\label{subsection:heterogeneous}
Now we turn to a more general case with $\theta_1<\theta_2$ and
for simplicity of the analysis from now on we assume the users' proportions in both types are equal ($\eta=0.5$)\footnote{Our analysis and results can be easily extended to other values of $\eta$.}.
This $\eta = 0.5$ value is assumed to be common knowledge, obtained in practice by obtaining users' type statistics.
Then we have $\theta_0=(\theta_1+\theta_2)/2$ with $\theta_1<\theta_0<\theta_2$. In this heterogeneous case,
$SW(x_H^*)\geq 0$ no longer guarantees IR at the desired $x_H^*$ for the lower type-1 users,
and we cannot apply Proposition \ref{lemma1} directly.
\begin{figure}[!t]
  \centering
  \includegraphics[width=3in]{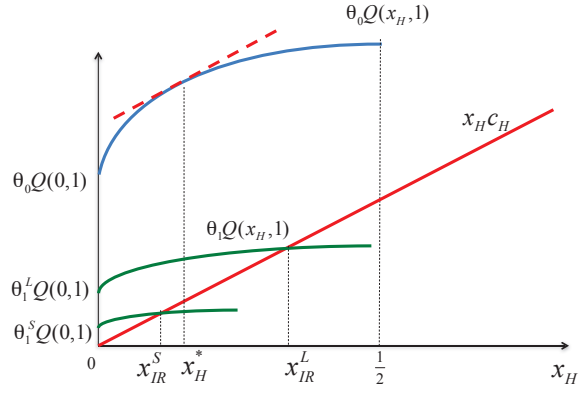}
  \caption{The IR constraint in the SW optimization problem. The free SW maximization chooses $x_H^*$. When $\theta_1$ is small (e.g., $\theta_1^S$ here), the IR constraint becomes binding since the corresponding small $x_{IR}^S$ is less than $x_H^*$. For large $\theta_1^L$ the IR constraint is relaxed with $x_{IR}^L\geq x_H^*$. Note that $x_{IR}$ moves to the left as $c_H$ increases or as $\theta_1$ decreases.}
  \label{figure1}
\end{figure}
The IR condition for type-1 is (see \eqref{IR} with $b=1$)
\begin{equation}
\label{cons}
\theta_1Q(x_H,1)-x_Hc_H\geq0\,,
\end{equation}
indicating that if we like to keep type-1 users in the system we must
restrict our targeted equilibria in the range $[0,x_{IR}]$,
which is the unique solution to \eqref{cons} holding with equality.
Fig. \ref{figure1} plots the user-average information value $\theta_0Q(x_H,1)$
and the type-1 user's information value $\theta_1Q(x_H,1)$ as functions of $x_H$,
 for a large and a small value of $\theta_1$ values (we can choose any $\t_1 < \t_0$ and take
 $\t_2 =2\t_0 - \t_1$).
 We observe that the social optimum $x_H^*$
 of the free optimisation problem (no IR consideration for type-1) is achieved when the gap between $\theta_0Q(x_H,1)$
 and total cost $x_Hc_H$ is maximized as in (\ref{swl}).
 Whether we can achieve $x_H^*$ in the constraint problem (with IR for type-1) depends on the value of $\t_1$:
 \begin{itemize}
   \item When $\theta_1$ is large (e.g., $\theta_1=\theta_1^L$ in Fig. \ref{figure1}), the corresponding $x_{IR}=x_{IR}^{L}$ satisfies $x_{IR}^L > x_H^*$ in Fig. \ref{figure1},
   and  IR is satisfied at the social optimum $x_H^*$. In this case we apply (\ref{function}) to achieve $SW(x_H^*)$.
   \item When $\theta_1$ is small (e.g., $\theta_1=\theta_1^S$ in Fig. \ref{figure1}), we have $x_{IR}=x_{IR}^S$ and $x_{IR}^S < x_H^*$ in Fig. \ref{figure1},
   and  IR is not satisfied at the social optimum $x_H^*$.
   The best equilibrium decision $x_H^*$ is to either operate the system at  $x_{IR}^S$ and ensure both types' IR
   (the so-called \emph{full participation case} with $b=1$), or to incentivise a flow $x_H>x_{IR}^S$
   that maximises the efficiency in a system where only type-2 users participate
   (in the so-called \emph{half participation case} with $b=0.5$).
 \end{itemize}

How to maximize social welfare under our side-payment mechanism is now clear. First, we let $\tilde{x}_H$ be the solution to the free
social welfare maximisation problem in the half participation case (only type-2 participate) with $b=0.5$:
\begin{equation}
\label{half}
\tilde{x}_H\in\argmax_{x_H\in[0,0.5]}\{0.5\theta_2Q(x_H,0.5)-x_Hc_H\}\,.
\end{equation}
\begin{figure}[!t]
  \centering
  \includegraphics[width=3in]{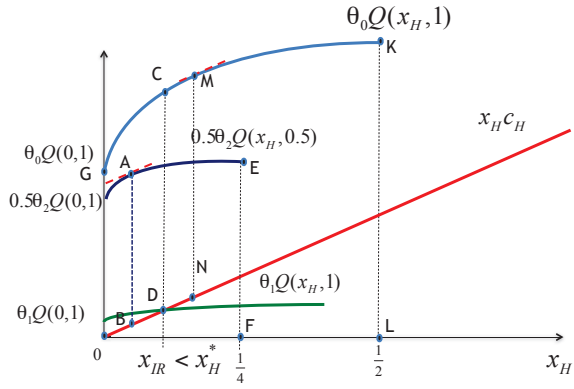}
  \caption{The solution of SW optimization problem where both types or only type-2 participates. $MN$ with $x_H^*$ is the social optimum and $AB$ with $\tilde{x}_H$ is the optimum value when we only include type-2 only. $CD$ with $x_{IR}$ is the solution if we do not violate IR for type-1 and include both types.}
  \label{figure2}
\end{figure}

 Figure \ref{figure2} shows the solutions of our three optimisation problems: the free optimisation problem, the
case where only type-2 participate (half participation),
and the best we can do if we insist in keeping both types in the system (full participation).
By comparing the optimum social welfare we can achieve in these three cases we obtain the following
procedure for choosing the equilibirum:
\begin{theorem}
\label{prop3}
In our heterogeneous user model, the equilibrium flow partition $\hat{x}_H$
and the resultant optimal social welfare $SW_g$
are chosen as follows:
\begin{itemize}
  \item \emph{Social optimum}: If $x_H^*\le x_{IR}$, then we optimally choose $\hat{x}_H=x_H^*$ and $SW_g=SW(x_H^*)$. (See $MN$ in Fig. 3).
  \item \emph{Full participation}: If $x_H^*> x_{IR}$ and $SW(x_{IR})\ge SW(\tilde{x}_H)$, then we optimally choose $\hat{x}_H=x_{IR}$ and $SW_g=SW(x_{IR})$. (See $CD$ in Fig. 3).
  \item \emph{Half participation}: If $x_H^*> x_{IR}$ and $SW(x_{IR})<SW(\tilde{x}_H)$, we optimally choose $\hat{x}_H=\tilde{x}_H$ and $SW_g=SW(\tilde{x}_H)$. (See $AB$ in Fig. 3).
\end{itemize}
By searching over all possible parameters and information value function $Q_1(\cdot)$, the side-payment mechanism for two user types achieves $PoA_g=1/2$.
\end{theorem}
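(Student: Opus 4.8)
The plan is to split the argument into the structural characterisation of $SW_g$ and then the two-sided bound on $PoA_g$. For the characterisation I would first invoke Proposition~\ref{lemma1}, which guarantees that \emph{any} target flow can be sustained as a stable equilibrium through the payment rule \eqref{function}, so that the only genuine restriction on the achievable $\hat{x}_H$ comes from IR. Since the common per-user perceived cost at equilibrium equals $\hat{x}_H c_H/b$ and $\theta_1<\theta_2$, the type-1 payoff in \eqref{payoff6} is the smaller of the two; hence only type-1's IR can bind, and under full participation ($b=1$) the feasible equilibria form exactly the interval $[0,x_{IR}]$. Because $SW(x_H)=\theta_0 Q(x_H,1)-x_H c_H$ is concave with unconstrained maximiser $x_H^*$, maximising it over $[0,x_{IR}]$ returns $x_H^*$ when $x_H^*\le x_{IR}$ (the social-optimum case) and the boundary point $x_{IR}$ when $x_H^*>x_{IR}$ (the best full-participation choice). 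The only other alternative is to exclude type-1 altogether and run a $b=0.5$ system optimised by $\tilde{x}_H$ in \eqref{half}; comparing $SW(x_{IR})$ against $SW(\tilde{x}_H)$ then yields precisely the three-way split stated in the theorem.

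For the lower bound $PoA_g\ge 1/2$, the key observation is that the mechanism can always retreat to $\hat{x}_H=0$: with a zero target flow the payment \eqref{function} vanishes, no charge is levied, IR holds trivially, and the induced welfare is $SW(0)=\theta_0 Q(0,1)$. In every one of the three cases the selected $SW_g$ is at least $SW(0)$ --- in the social-optimum case because $x_H^*$ is the global maximiser, and in the full- and half-participation cases because $SW_g\ge SW(x_{IR})\ge SW(0)$, using that $SW$ is increasing on $[0,x_H^*]$ and $x_{IR}\in[0,x_H^*]$. I would then recycle the content estimate underlying Proposition~\ref{prop2}: $SW(x_H^*)\le \theta_0 Q(0.5,1)$, while $Q(0,1)=Q_1(1)\ge Q_1(0.5)=\tfrac12 Q(0.5,1)$ by monotonicity of $Q_1$, so that $SW_g/SW(x_H^*)\ge Q(0,1)/Q(0.5,1)\ge 1/2$.

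Tightness I would establish through a single family of instances built on the piecewise-linear $Q_1$ of \eqref{fucdef}, for which $Q(x,1)=q+2qx$ on $[0,0.5]$. Driving $\theta_1\rightarrow 0$ with $\eta=0.5$ (so $\theta_0\rightarrow\theta_2/2$) forces $x_{IR}\rightarrow 0$, collapsing the full-participation welfare to $SW(0)=\theta_0 q$. The crucial computation is that for this $Q_1$ the half-participation content is \emph{flat}: $Q(x,0.5)=2qx+2q(0.5-x)=q$ for all $x\in[0,0.5]$, whence the half-participation optimum is attained at $\tilde{x}_H=0$ with value $0.5\theta_2 q$, which in the limit $\theta_1\rightarrow 0$ equals $\theta_0 q=SW(0)$; therefore $SW_g\rightarrow SW(0)$. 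Finally letting $c_H\rightarrow 0^+$ (keeping $c_H<2\theta_0 q$ so that $x_H^*=0.5$) gives $SW(x_H^*)\rightarrow 2\theta_0 q$, and the ratio tends to $1/2$.

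I expect the main obstacle to lie in the tightness construction rather than in the bound itself: the parameters must be chosen so that \emph{both} surviving options --- the IR-constrained full-participation optimum at $x_{IR}$ and the type-2-only half-participation optimum $\tilde{x}_H$ --- degrade to $SW(0)$ \emph{simultaneously}, which is exactly why the flatness of $Q(\cdot,0.5)$ for the piecewise-linear $Q_1$ is doing the real work. A secondary point to check carefully is that the fall-back $\hat{x}_H=0$ is genuinely admissible under BB and IR, so that the lower bound indeed holds uniformly across all three regimes of the characterisation.
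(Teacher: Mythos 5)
Your overall route coincides with the paper's own proof: the same three-way characterisation built on Proposition~\ref{lemma1} and the IR interval $[0,x_{IR}]$, the same lower bound obtained from $SW_g\ge\theta_0Q(0,1)$ together with $SW(x_H^*)\le\theta_0Q(0.5,1)$ and $Q_1(1)\ge Q_1(0.5)$, and the identical tightness instance (piecewise-linear $Q_1$ from \eqref{fucdef}, $\eta=0.5$, $\theta_1\to0$, then $c_H\to0^+$), including the decisive observation that $Q(\cdot,0.5)$ is flat for that instance. However, there is one genuine gap in your characterisation step.

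When you assert that ``comparing $SW(x_{IR})$ against $SW(\tilde{x}_H)$ then yields precisely the three-way split,'' you implicitly assume that the half-participation optimum $\tilde{x}_H$ of \eqref{half} is always \emph{implementable}, i.e.\ that there exists an equilibrium of the full game in which type-1 users stay out while type-2 users split as $(\tilde{x}_H,0.5-\tilde{x}_H)$. That is false in general: $\tilde{x}_H$ solves an unconstrained problem, and if $\tilde{x}_H<x_{IR}$ a type-1 user facing the $b=0.5$ equilibrium charge $g(\tilde{x}_H)=2\tilde{x}_Hc_H$ from \eqref{condition} can have strictly positive payoff --- for instance at $\tilde{x}_H=0$ the charge vanishes and her payoff from joining the $L$-path is $\theta_1Q_1(0.5)>0$ --- so she would enter, the posited flow would not be an equilibrium, and the welfare $SW(\tilde{x}_H)$ claimed in the third bullet would not be attained. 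The paper closes exactly this hole: it restricts half participation to the range $[x_{IR},0.5]$, where $\theta_1Q(x_H,0.5)-2x_Hc_H\le\theta_1Q(x_H,1)-x_Hc_H\le0$ guarantees that type-1 genuinely opts out, and then proves by contradiction that whenever $SW(x_{IR})\le SW(\tilde{x}_H)$ one must have $\tilde{x}_H\ge x_{IR}$: otherwise $SW(\tilde{x}_H)=0.5\theta_2Q(\tilde{x}_H,0.5)-\tilde{x}_Hc_H\le\theta_0Q(\tilde{x}_H,1)-\tilde{x}_Hc_H\le\theta_0Q(x_{IR},1)-x_{IR}c_H=SW(x_{IR})$, using $0.5\theta_2\le\theta_0$, $Q(x,0.5)\le Q(x,1)$, and the monotonicity of the full-participation welfare on $[0,x_H^*]$. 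You need this lemma (or an equivalent consistency check) to justify the half-participation bullet of the theorem; the remainder of your argument, including the $PoA_g\ge1/2$ bound and its tightness, is sound and matches the paper.
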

\begin{proof}
We have to compare the full participation case with $b=1$ and the half participation case with $b=0.5$ to optimally decide $\hat{x}_H$. If $x_H^*\le x_{IR}$ then our optimum of the full participation case with $b=1$ is clearly at $x_H^*$ which  is the global optimum. Then we don't need to consider the half participation case with $b=0.5$.

If $x_H^*> x_{IR}$, then we must consider the half participation case with $b=0.5$. This is because we may perform even better by operating a single type $\theta_2$ system in the range $[x_{IR},0.5]$. In the half participation case with $b=0.5$, we operate the type $\theta_2$ system alone since type $\theta_1$ refuse to participate. The best we can do in this case is $SW^\prime\le SW(\tilde{x}_H)$ since it could be that the optimal $\tilde{x}_H$ is $\tilde{x}_H<x_{IR}$ and not in the allowable range $[x_{IR},0.5]$. Assume $SW(x_{IR})> SW(\tilde{x}_H)$, then when $\tilde{x}_H<x_{IR}$, $SW(x_{IR})> SW(\tilde{x}_H)\ge SW^\prime$; when $\tilde{x}_H\ge x_{IR}$, $SW(x_{IR})> SW(\tilde{x}_H)=SW^\prime$. Thus when $SW(x_{IR})> SW(\tilde{x}_H)$, it is optimal to keep both types in the system. Assume finally that $SW(x_{IR})\le SW(\tilde{x}_H)$, then it cannot be the case that $\tilde{x}_H<x_{IR}$ which will be shown next, implying that $SW_g=SW(\tilde{x}_H)=SW^\prime$. Thus when $SW(x_{IR})\le SW(\tilde{x}_H)$, it is optimal to keep only type-2 in the system.

Now we need to prove our claim that $\tilde{x}_H\ge x_{IR}$ when $SW(x_{IR})\le SW(\tilde{x}_H)$. Remember that we are in the region $x_{IR}<x_H^*$. Assume that $\tilde{x}_H<x_{IR}$, and $SW(\tilde{x}_H)>SW(x_{IR})$. We will get a contradiction. At $\tilde{x}_H$, clearly $0.5\theta_2Q(\tilde{x}_H,0.5)\le\theta_0Q(\tilde{x}_H,1)$, hence $SW(\tilde{x}_H)=0.5\theta_2Q(\tilde{x}_H,0.5)-\tilde{x}_Hc_H\le\theta_0Q(\tilde{x}_H,1)-\tilde{x}_Hc_H$. Also because of the concavity of $Q(x_H,1)$, $\theta_0Q(x_H,1)-x_Hc_H$ is increasing for $x\le x_H^*$, hence $\theta_0Q(\tilde{x}_H,1)-\tilde{x}_Hc_H\le \theta_0Q(x_{IR},1)-x_{IR}c_H=SW(x_{IR})$. But this implies that $SW(\tilde{x}_H)\le SW(x_{IR})$, a contradiction.

Next we prove $POA_g=1/2$. Note that, if $ x_H^*<x_{IR}$,
$$SW_g=SW( x_H^*)\ge\theta_0Q(0,1),$$
and if $ x_H^*\ge x_{IR}$,
$$SW_g\ge SW(x_{IR})\ge\theta_0Q(0,1).$$
Thus,
\begin{eqnarray}
\label{poa}
PoA_g&=&\min\limits_{\theta_i,c_H,Q(\cdot,1)}\frac{SW_g}{SW( x_H^*)}\ge\min\limits_{\theta_i,Q(\cdot,1)}\frac{\theta_0Q(0,1)}{SW( x_H^*)}\nonumber\\
&\ge&\min\limits_{Q(\cdot,1)}\frac{Q(0,1)}{Q(0.5,1)}=\min\limits_{Q_1(\cdot)}\frac{Q_1(1)}{2Q_1(0.5)}\ge\frac{1}{2}\nonumber
\end{eqnarray}

We next show that the bound is tight. Define $Q_1(x)$ by (\ref{fucdef}), and consider any fixed value of $c_H$ that is less than $2\theta_0q$. We consider what happens as $\theta_1\rightarrow0$. One can easily check the following:
\begin{itemize}
  \item[1.] The free social optimum: The optimum is attained for $x_H=0.5$, $SW( x_H^*)=2\theta_0q-0.5c_H$.
  \item[2.] The optimal solution of the full participation case with $b=1$: The IR condition for type $\theta_1$ is $\theta_1(q+2qx_H)-x_Hc_H\ge0$. Solving this we obtain
       $$x_{IR}=\frac{\theta_1q}{c_H-2\theta_1q}$$
        which converges to 0 when $\theta_1\rightarrow0$. Hence, as $\theta_1\rightarrow0$ the optimum value $SW(x_{IR})\rightarrow\theta_0Q(0,1)=\theta_0q$.
  \item[3.] The optimal solution of the half participation case with $b=0.5$: Since for our constructed function $Q(x_H,0.5)=q$ for $x_H\in[0,0.5]$. The optimum is achieved at $x=0$ and $SW(\tilde{x}_H)=0.5\theta_2Q(0,0.5)=0.5\theta_2q$ which is $\theta_0q$ in the limit as $\theta_1\rightarrow0$.
\end{itemize}
The $PoA_g$ for this specific instance becomes
$$PoA_g=\min\limits_{c_H}\frac{\max\{SW(x_{IR}),SW(\tilde{x}_H)\}}{SW( x_H^*)}$$
$$=\min\limits_{c_H}\frac{\theta_0q}{2\theta_0q-0.5c_H}=\frac{1}{2}.$$
This completes the proof.
\end{proof}

Unlike $PoA_g=1$ in the homogeneous case, here in the worst case our incentive mechanism can be as bad as the original system without incentives
 (as in Proposition \ref{prop2}). Intuitively, when users are diverse,
 we have to decide to selectively include less users (exclude low valuation users by raising the incentive payment) to achieve better path diversity,
 or to keep  a larger number of users participating at the cost of path diversity (by lowering the incentive payment
 and hence weakening the incentives to use the $H$-path).
 In the half participation case ($b=0.5$), we miss the low-valuation users' contribution in sensing. In the full participation case ($b=1$),
 we reduce path diversity and mostly collect one-path information.
 If path diversity and full participation are both critical to achieving the optimal social welfare, we may lose half of the optimum as indicated by Theorem \ref{prop3}. In Section~\ref{simulation}, we will choose $Q_1(\cdot)$ as (\ref{Q}) and use extensive simulations to show that this mechanism is still efficient in most of the time.

 It goes beyond the scope of this paper to discuss issues related to the form of the content function $Q(x,1)$. For instance, if our unit mass of users
corresponds to a very large value $n$ of actual users, even a small fraction $x$ on a path can obtain a substantial amount of the total content
and path diversity is important.
If $n$ is very small, we expect path diversity to play a less role, since these few users will discover new content even if they travel on the same path.

\section{Content-restriction as Incentive}\label{contentrestriction}
As we have already discussed, using side-payment as an incentive mechanism needs monetary transfer among users, which may not be possible in some applications. This motivates us to consider a payment-free incentive design: as a penalty  to $L$-path participants, restrict the amount of content/information made available to them. Content restriction in the information sharing system can be done directly by making some information invisible or indirectly by offering a lower quality application service to the given user.

 \emph{Content-restriction Mechanism:} To motivate more users to choose the $H$-path,
the system planner provides a fraction $a$ of the total information, i.e,  $aQ(x_H,1)$,
 to $L$-path participants, where $a\in[0,1]$ is the content restriction coefficient.
 For example, if we choose $a=1$ (or 0), the system offers full (or zero) information to $L$-path participants.

 Our content-restriction mechanism\,only\,depends\,on\,a\,user's path choice and is independent of her type. It satisfies the IC and IR requirements
(since users can always take the $L$-path at zero cost, we have always full participation).
Under the mechanism with coefficient $a$, a type-$i$ user's payoff changes from (\ref{payoff}) to
\[
\label{payoff3}
u_i(P,x_H)=\left\{
\begin{array}{ll}
a\theta_iQ(x_H,1), &\mbox{if }P=L;\\
\theta_iQ(x_H,1)-c_H, &\mbox{if }P=H.
\end{array}
\right.
\]
If $u_i(H,x_H)\ge u_i(L,x_H)$ or simply
\begin{equation}
\label{content2}
Q(x,1)\ge\frac{c_H}{(1-a)\theta_i},
\end{equation}
she will choose $H$-path. Otherwise, she will choose $L$-path.
As $a$ decreases, we purposely ``destruct" more shared information to $L$-path participants
and condition (\ref{content2}) for choosing $H$-path is more likely to hold
(even more for the high user types).
\subsection{Content-restriction for Homogeneous Users}
Let us first consider the homogenous case with $\theta_i=\theta_0$ for $i=1,2$. If we expect positive flows with $\hat{x}_H\in (0,1)$ on both paths at the equilibrium, all users should be indifferent in choosing between the two paths. Users' payoffs by choosing $L$- and $H$-path are equal, i.e.,
\begin{equation}
\label{condition4}
Q(\hat{x}_H,1)=\frac{c_H}{(1-a)\theta_0}.
\end{equation}

In Fig. \ref{figure3}, we plot $Q(x_H,1)$ as a function of $x_H$ and compare to the path decision threshold $c_H/((1-a)\theta_0)$ in (\ref{condition4}). An intersection point (if any) of these two curves tells the equilibrium where the users are indifferent in choosing between  the two paths. The threshold $c_H/((1-a)\theta_0)$ increases with $a$ and can have at most two interaction points with $Q(x_H,1)$. To fully characterize different intersection or equilibrium results, we define two non-negative thresholds for $a$:
\begin{equation}
\label{underlinea}
\underline{a}=\max\{0,\frac{\theta_0\underline{Q}-c_H}{\theta_0\underline{Q}}\},\ \mbox{where}\ \underline{Q}=Q(0,1)\,,
\end{equation}
\begin{equation}
\label{overlinea}
\overline{a}=\max\{0,\frac{\theta_0\overline{Q}-c_H}{\theta_0\overline{Q}}\},\ \mbox{where}\ \overline{Q}=Q(0.5,1)\,.
\end{equation}
Such thresholds help define the following three regimes:
\begin{itemize}
  \item \emph{Strong content-restriction regime} ($0\le a<\underline{a}$): When $a$ is small and there is much content destruction as penalty to $L$-path participants, $c_H/((1-a)\theta_0)$ is small and does not intersect with $Q(x_H,1)$, and the only stable equilibrium is that all users choose $H$-path (i.e., $\hat{x}_H=1$).
  The social welfare is $SW=\theta_0\underline{Q}-c_H$ which is the same for any $a\in[0,\underline{a})$.
  \item \emph{Medium content-restriction regime} ($\underline{a}\le a<\overline{a}$): When $a$ is medium, $c_H/((1-a)\theta_0$ in Fig. \ref{figure3} intersects with $Q(x,1)$ at two points that are equilibria:
  one is $x_{H0}\in(0.5,1]$ and the other is $1-x_{H0}$. We can show that only $x_{H0}$ is stable\footnote{$1-x_{H0}$ is not stable after some perturbation. For example, reducing $x_H^*$ slightly encourages more $H$-path participants to churn to $L$-path. Similarly, equilibrium  $x_{H0}$ is stable: decrease from $x_H^*$ encourages more $L$-path participants to churn to $H$-path, and increase from $x_H^*$ encourages more $H$-path participants to churn to $L$-path.}. In this regime both paths offer the same payoffs to the users, hence each user obtains a payoff of $\t_0Q(x_{H0},1)-c_H$. Figure \ref{figure3} also shows that as $a$ increases, $x_{H0}$ moves towards $0.5$,
    and the corresponding social welfare $\theta_0Q(x_{H0},1)-c_H$ increases towards the maximum $\theta_0\overline{Q}-c_H$. Note that $x_H=0.5$ is not a stable equilibrium, and hence we like to get arbitrarily close:  choose $a=\bar{a}-\epsilon$ with infinitesimal $\epsilon>0$ to reach the social optimum $SW=\theta_0\overline{Q}-c_H$ asymptotically. 
  \item \emph{Weak content-restriction regime} ($\overline{a}\le a\le1$): When $a$ is large, the only stable equilibrium is that all users choose $L$-path. The corresponding social welfare is $SW=\theta_0a\underline{Q}$ and it is maximised by deciding minimum destruction ($a=1$).
\end{itemize}
\begin{figure}[!t]
  \centering
  \includegraphics[width=2.5in]{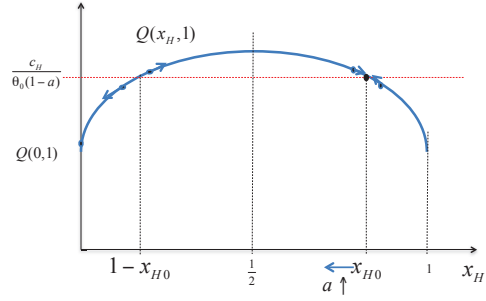}
  \caption{Intersection of $Q(x_H,1)$ and $c_H/(\theta_0(1-a))$ to decide equilibrium $x_H^*$. From the two possible equilibria ($x_{H0}$ and $1-x_{H0}$), only $x_{H0}>0.5$ is stable, where more traffic is routed towards the H-path. As $a$ increases, both equilibria tend to move towards $0.5$.}
  \label{figure3}\vspace{-10pt}
\end{figure}

Strong or weak information destruction on $L$-path participants causes no path-diversity at the equilibrium. Only in the medium destruction regime we reach perfect path-diversity ($\hat{x}_H\rightarrow0.5$) by choosing $a=\overline{a}-\epsilon$.
When $c_H$ is small, the benefit gained due to path diversity in the medium destruction regime covers the  efficiency loss due to
 content destruction. But when $c_H$ is large, content destruction and hence efficiency loss becomes very large
 if one likes to incentivize path-diversity, motivating the use of $a=1$.
\begin{theorem}
\label{prop7}
Let $SW_a$ be the maximum social welfare by applying the content-restriction.
\begin{itemize}
  \item If the travel cost over $H$-path is small (i.e., $c_H<\theta_0(\overline{Q}-\underline{Q})$), we optimally choose $a=\overline{a}-\epsilon$ to
  approach perfect path diversity ($\hat{x}_H\rightarrow0.5$) and optimum social welfare ($SW_a\rightarrow\theta_0\overline{Q}-c_H$).
  \item If the travel cost is large ($c_H\ge\theta_0(\overline{Q}-\underline{Q})$), it is optimal to choose $a=1$ since
  content destruction must be too excessive to incentivize path diversity.
  The corresponding social welfare is $SW_a=\theta_0\underline{Q}$
\end{itemize}
By searching over all possible parameters and information value function $Q_1(\cdot)$, the content-restriction mechanism for uniform user types achieves $PoA_a=1/2$.
\end{theorem}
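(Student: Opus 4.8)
The statement bundles two claims: the optimal choice of the restriction coefficient $a$ (with the resulting $SW_a$), and the value $PoA_a=1/2$. My plan is to read the optimal $a$ straight off the three-regime analysis already established, and then to bound the ratio $SW_a/SW(x_H^*)$ from below by $1/2$ and exhibit an instance that attains it.

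First I would settle the optimal $a$. The preceding regime analysis produces three candidate welfare levels: $\theta_0\underline{Q}-c_H$ in the strong regime ($\hat{x}_H=1$); $\theta_0 Q(x_{H0},1)-c_H$ in the medium regime, which increases in $a$ up to the supremum $\theta_0\overline{Q}-c_H$ as $\hat{x}_H\to 0.5$; and $\theta_0 a\underline{Q}$ in the weak regime, maximised at $a=1$ with value $\theta_0\underline{Q}$. Since $\underline{Q}\le\overline{Q}$, the strong regime is dominated by the medium one, so $SW_a=\max\{\theta_0\overline{Q}-c_H,\ \theta_0\underline{Q}\}$. The two terms satisfy $\theta_0\overline{Q}-c_H\ge\theta_0\underline{Q}$ precisely when $c_H\le\theta_0(\overline{Q}-\underline{Q})$, which is exactly the dichotomy in the statement: for small cost we take $a=\overline{a}-\epsilon$ (medium regime, $SW_a\to\theta_0\overline{Q}-c_H$) and for large cost we take $a=1$ (weak regime, $SW_a=\theta_0\underline{Q}$).

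For the lower bound $PoA_a\ge 1/2$ I would use $SW_a\ge\theta_0\underline{Q}$, immediate from the maximum above, together with $SW(x_H^*)\le\theta_0\overline{Q}$, which holds because $Q(x,1)$ peaks at $x=0.5$ with value $\overline{Q}$ while the travel term $x_H^* c_H$ is nonnegative. Dividing gives $SW_a/SW(x_H^*)\ge\underline{Q}/\overline{Q}=Q_1(1)/(2Q_1(0.5))\ge 1/2$, where the final step uses $Q_1(0)=0$ and the monotonicity of $Q_1$ (so $Q_1(1)\ge Q_1(0.5)$). Since this holds for every instance, $PoA_a\ge 1/2$.

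The hard part is tightness, and here the naive linear instance (\ref{fucdef}) does \emph{not} work: there the free optimum already sits at $x_H^*=0.5$, so both $SW_a$ and $SW(x_H^*)$ pay travel cost on a full half-mass, and the ratio bottoms out at $2/3$ rather than $1/2$. To close the gap I would sharpen the construction so the centralised optimum collects essentially all the content at negligible travel cost while the mechanism cannot. Concretely, take a steep-ramp $Q_1$ saturating at a small point $s$: $Q_1(x)=(q/s)x$ for $0\le x\le s$ and $Q_1(x)=q$ for $s\le x\le 1$, giving $\underline{Q}=q$, $\overline{Q}=2q$, hence $\underline{Q}/\overline{Q}=1/2$. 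Fixing $c_H$ at the case boundary $\theta_0(\overline{Q}-\underline{Q})=\theta_0 q$ forces $SW_a=\theta_0\underline{Q}=\theta_0 q$, while the free optimum is attained at the kink $x_H^*=s$ with $SW(x_H^*)=2\theta_0 q-s c_H\to\theta_0\overline{Q}$ as $s\to 0$. The ratio is then $1/(2-s)\to 1/2$, completing the proof. I expect the only calculations needing care to be verifying that the cost slope does not pull the free optimum below the kink (so that indeed $x_H^*=s$) and checking the welfare value exactly at the case boundary.
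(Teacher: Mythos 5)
Your proposal is correct, and its first half (the optimal-$a$ dichotomy $SW_a=\max\{\theta_0\overline{Q}-c_H,\theta_0\underline{Q}\}$ and the lower bound via $\theta_0\underline{Q}/\bigl(\theta_0\overline{Q}\bigr)\ge 1/2$) coincides with the paper's argument. Where you genuinely diverge is the tightness construction, and the divergence traces to which equilibrium sits in the numerator of $PoA_a$. The paper, following its Section II definition (``the lowest social welfare at any equilibrium''), uses the \emph{worst} equilibrium under the best $a$, i.e. $\max_a\min_{\hat{x}_H\in NE(a)}SW(\hat{x}_H)$; it first proves this max-min equals $\theta_0\underline{Q}$ --- for every $a<1$ a bad equilibrium survives (all-$L$ with welfare $a\theta_0\underline{Q}$ in the medium/weak regimes, all-$H$ with welfare $\theta_0\underline{Q}-c_H$ in the strong regime), so the prudent optimum is $a=1$ --- and then the piecewise-linear instance with $c_H\to 0$ \emph{is} tight, because under that definition the numerator stays pinned at $\theta_0 q$ however small $c_H$ is. So your claim that the paper's instance ``does not work'' holds only under your reading, in which the numerator is $SW_a$ itself (the best achievable welfare), exactly as the theorem's opening sentence defines it. Under that optimistic reading your computation is right: the linear instance bottoms out at $2/3$ (at $c_H=\theta_0 q$), and your steep-ramp instance (kink at $s\to 0$, cost at the regime boundary, ratio $1/(2-s)\to 1/2$) is both necessary and correct --- it is in fact the same device the paper deploys later for the heterogeneous-type theorem. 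The two routes buy different things: the paper's max-min argument yields the stronger \emph{lower-bound} guarantee (even adversarial equilibrium selection under the optimal $a$ loses at most half; note that transferring your lower bound to this definition needs the easy extra remark that at $a=1$ the all-$L$ equilibrium is unique), while your construction yields the stronger \emph{tightness} statement (even the most favorable stable equilibrium the mechanism can induce may be stuck at half the optimum). Together the two proofs show that the pessimistic and the optimistic price of anarchy both equal $1/2$.
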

\begin{proof}
In the medium content-restriction regime, we have shown that the maximal social welfare attained at a stable path-diversity equilibrium ($0<\hat{x}_H<1$) is $\theta_0\overline{Q}-c_H$ which is reached asymptotically. In the strong and weak content-restriction regime, we can infer that the maximal social welfare attained at a stable zero path-diversity equilibrium ($\hat{x}_H=0$ or $1$) is $\theta_0\underline{Q}$. Thus when  $c_H<\theta_0(\overline{Q}-\underline{Q})$, we have
$$\theta_0\overline{Q}-c_H>\theta_0\underline{Q}.$$
Then we optimally choose $a=\overline{a}-\epsilon$\footnote{Note that this value of $a$ is positive since $c_H<\theta_0\overline{Q}$.} to approach perfect path diversity ($\hat{x}_H\rightarrow0.5$) and optimum social welfare $SW_a\rightarrow\theta_0\overline{Q}-c_H$. When $c_H\ge\theta_0(\overline{Q}-\underline{Q})$, we have
$$\theta_0\overline{Q}-c_H\le\theta_0\underline{Q}.$$
Then we optimally choose $a=1$ to reach optimum social welfare $SW_a=\theta_0\underline{Q}$.

Next we prove $PoA_a=1/2$. We denote the set of all the equilibria when we apply content restriction with coefficient $a$ by $NE(a)$, for example, $NE(1)=\{0\}$ means when $a=1$ there exists only one equilibrium which is all users choose the low cost path. Based on this notation we have
$$PoA_a=\min\limits_{\theta_i,Q(\cdot,1),c_H}\frac{\max\limits_{a}\min\limits_{\hat{x}_H\in NE(a)}\{SW(\hat{x}_H)\}}{SW(\hat{x}_H)}.$$
We claim that among all the equilibria the full content-preserving equilibrium $a=1$ maximises the worst case equilibrium for all values of $c_H$. Now we prove our claim. When $a=1$, $\hat{x}_H=0$ is the unique equilibrium and the corresponding social welfare is $\theta_0\underline{Q}$. Assume that for some other $a^\prime$, the worse equilibrium has efficiency higher than that. Then this equilibrium must be either one where users are indifferent between the two paths, or one where all users choose the same path. In the first case, we know that there is a second equilibrium where customers select the the low cost path with efficiency $a^\prime\theta_0\underline{Q}$  and this must be superior to the one where users are indifferent. But leads to contradiction since it would imply $a^\prime\theta_0\underline{Q}>\theta_0\underline{Q}$ while $a^\prime<1$. Suppose that the best equilibrium corresponds to users not being indifferent. But then for all values of $0\le a^\prime\le1$ this cannot be superior to the one with $a=1$ we proposed.

We just proved that $\max\limits_{a}\min\limits_{\hat{x}_H\in NE(a)}\{SW(\hat{x}_H)\}=\theta_0\underline{Q}$, then
$$PoA_a=\min\limits_{\theta_i,Q(\cdot,1),c_H}\frac{\theta_0\underline{Q}}{SW(\hat{x}_H)}\ge\frac{\underline{Q}}{\overline{Q}}\ge\frac{1}{2}.$$
We next show that the bound is tight. Define $Q_1(x)$ by (\ref{fucdef}), and consider any fixed value of $c_H$ that is less than $2\theta_0q$. The $PoA_a$ for this specific instance becomes
$$PoA_a=\min\limits_{\theta_i,Q(\cdot,1),c_H}\frac{\theta_0\underline{Q}}{SW(\hat{x}_H)}=\min\limits_{c_H}\frac{\theta_0q}{2\theta_0q-0.5c_H}=\frac{1}{2}.$$
Therefore, $PoA_a=1/2$.
\end{proof}

Using content-restriction as incentive, we either accept the original zero path-diversity equilibrium or perfect path-diversity in information collection. Unless $c_H$ is so high that we always prefer a zero diversity equilibrium, we can achieve perfect path-diversity by content restriction. Though this mechanism is not efficient when $c_H$ is large, we will see that user diversity helps improve it in the next subsection.

\subsection{Content-restriction for Heterogenous Users}
Now we turn to the more general case with $\theta_1<\theta_2$. In this heterogeneous case, we search for an $a\in[0,1]$ to maximise social welfare at the equilibrium. If $a$ is small or large, we only have zero path-diversity equilibria with either $\hat{x}_H=1$ or 0. As long as the $H$-path cost $c_H$ is not too large, we should optimally choose a medium value of $a$ to incentivise the perfect path-diversity equilibrium ($\hat{x}_H=0.5$).

If we expect positive flows with $\hat{x}_H\in(0,1)$ on both path at the equilibrium, we need to make either type-1 or type-2 be indifferent between the two paths. In Fig. \ref{figure4}, we plot $Q(x_H,1)$ and $c_H/(\theta_i(1-a)),\ i=1,2$ as functions of $x_H$. If we make type-1 users be indifferent between the two paths, then
$$Q(\hat{x}_H,1)=\frac{c_H}{(1-a)\theta_1}.$$
Since $\theta_2>\theta_1$, we have
$$Q(\hat{x}_H,1)>\frac{c_H}{(1-a)\theta_2},$$
we infer that type-2 users will choose $H$-path and the corresponding equilibrium $\hat{x}_H\in[0.5,1]$. If we make type-2 users be indifferent between the two paths, then
$$Q(\hat{x}_H,1)=\frac{c_H}{(1-a)\theta_2}.$$
Since $\theta_2>\theta_1$, we have
$$Q(\hat{x}_H,1)<\frac{c_H}{(1-a)\theta_1},$$
we infer that type-1 users will choose $L$-path and the corresponding equilibrium $\hat{x}_H\in[0,0.5]$.
\begin{figure}[!t]
  \centering
  \includegraphics[width=2.5in]{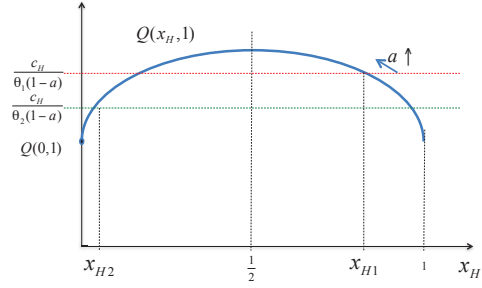}
  \caption{Intersection of $Q(x_H,1)$ and $c_H/(\theta_i(1-a)),\ i=1,2$ to decide the equilibrium $x_H^*$. Only $x_{H1}$ and $x_{H2}$ in the figure are possible equilibria and $x_{H1}$ is stable, $x_{H2}$ is unstable.}
  \label{figure4}
\end{figure}

In Fig. \ref{figure4},  we plot $Q(x_H,1)$ as a function of $x_H$ and compare to the path decision threshold $c_H/((1-a)\theta_1)$ and $c_H/((1-a)\theta_2)$. An the intersection point (if any) of $c_H/(\theta_i(1-a))$ and $Q(x_H,1)$ tells the possible equilibrium where the users are indifferent in choosing between the two paths. In Fig. \ref{figure4}, only $x_{H1}$ and $x_{H2}$ in the figure are possible equilibria. To fully characterize different intersection or equilibrium results, we define four non-negative thresholds for $a$:
\begin{equation}
\label{underlinea}
\underline{a}_i=\max\{0,\frac{\theta_i\underline{Q}-c_H}{\theta_i\underline{Q}}\},\ \mbox{where}\ \underline{Q}=Q(0,1)\ \,,
\end{equation}
\begin{equation}
\label{overlinea}
\overline{a}_i=\max\{0,\frac{\theta_i\overline{Q}-c_H}{\theta_i\overline{Q}}\},\ \mbox{where}\ \overline{Q}=Q(0.5,1)\,.
\end{equation}
First, we assume $\theta_2/\theta_1>\overline{Q}/\underline{Q}$, i.e., $\overline{a}_1<\underline{a}_2$. This condition ensures that the decision thresholds are separated in a sense that when the decision threshold for type-1 is below $\overline{Q}$, the information threshold for type-2 must be below $\underline{Q}$ and when the information threshold for type-2 is beyond $\underline{Q}$, the information threshold for type-1 must be beyond $\overline{Q}$. Clearly, we have $\underline{a}_1<\overline{a}_1<\underline{a}_2<\overline{a}_2<1$.
Such thresholds help define the following four regimes:
\begin{itemize}
\item \emph{Strong content restriction regime} ($0\le a<\underline{a}_1$): When $a$ is small and there is much content destruction as penalty to $L$-path participants, both $c_H/((1-a)\theta_1)$ and $c_H/((1-a)\theta_2)$  is small and does not intersect with $Q(x_H,1)$, and the only stable equilibrium is that all users choose $H$-path (i.e., $\hat{x}_H=1$).
     The social welfare is $SW=\theta_0\underline{Q}-c_H$ which is the same for any $a\in[0,\underline{a})$.
  \item \emph{Medium content-restriction regime} ($\underline{a}_1\le a\le\overline{a}_1$): In this regime, $c_H/((1-a)\theta_2)$ in Fig. \ref{figure4} is below $Q(x_H,1)$ and $c_H/((1-a)\theta_1$ intersects with $Q(x_H,1)$ at two points. All type-2 users will choose $H$-path and only the right intersection point $x_{H1}\in[0.5,1]$ is an positive path-diversity equilibrium. We can show that $x_{H1}$ is stable\footnote{Equilibrium  $x_{H1}$ is stable: decrease from $x_{H1}$ encourages more $L$-path participants to churn to $H$-path, and increase from $x_{H1}$ encourages more $H$-path participants to churn to $L$-path.}. Figure \ref{figure4} also shows that as $a$ increases, $x_{H1}$ moves towards $0.5$, and the corresponding social welfare increases towards the maximum $\theta_0\overline{Q}-c_H$. Note that $x_H=0.5$ is a stable equilibrium, and hence we choose $a=\overline{a}_1$ to reach the social optimum $SW=\theta_0\overline{Q}-c_H$.
  \item \emph{Lower medium content restriction regime} ($\overline{a}_1< a<\underline{a}_2$):
  In this regime, $c_H/((1-a)\theta_2)$ in Fig. \ref{figure4} is below $Q(x_H,1)$ and $c_H/((1-a)\theta_1$ is above $Q(x_H,1)$.  Thus all type-1 users will choose $L$-path and all type-2 users will choose $H$-path. Thus the only equilibrium is 0.5 and it is stable. Thus we choose $a=\underline{a}_2-\epsilon$ with infinitesimal $\epsilon>0$  to reach the social optimum $$\theta_0\underline{a}_2\overline{Q}=\theta_0\overline{Q}-\frac{\theta_1\overline{Q}+\theta_2\underline{Q}}{2\theta_2\underline{Q}}c_H.$$
\item \emph{Weak content restriction regime} ($\underline{a}_2< a\le1$): When $a$ is large, $c_H/((1-a)\theta_1)$ in Fig. \ref{figure4} is above $Q(x_H,1)$ and $c_H/((1-a)\theta_2$ intersects with $Q(x_H,1)$ at two points. All type-1 users will choose $L$-path and only the left intersection point $x_{H2}\in[0,0.5]$ is an positive path-diversity equilibrium. But it is not stable\footnote{Equilibrium  $x_{H2}$ is not stable: decrease from $x_{H2}$ encourages more $H$-path participants to churn to $L$-path, and increase from $x_{H2}$ encourages more $L$-path participants to churn to $H$-path}.
    The only stable equilibrium in this regime is that all users choose $L$-path and the corresponding social welfare is $SW=\theta_0a\underline{Q}$. We choose $a=1$ to reach the social optimum $\theta_0\underline{Q}$.
\end{itemize}

Next, we assume $\theta_2/\theta_1\le\overline{Q}/\underline{Q}$, i.e., $\overline{a}_1\le\underline{a}_2$, and we have the following three regimes
\begin{itemize}
\item \emph{Strong content restriction regime} ($0\le a<\underline{a}_1$): When $a$ is small and there is much content destruction as penalty to $L$-path participants, both $c_H/((1-a)\theta_1)$ and $c_H/((1-a)\theta_2)$  is small and does not intersect with $Q(x_H,1)$, and the only stable equilibrium is that all users choose $H$-path (i.e., $\hat{x}_H=1$). The social welfare is $SW=\theta_0\underline{Q}-c_H$ which is the same for any $a\in[0,\underline{a})$.
  \item \emph{Medium content-restriction regime} ($\underline{a}_1\le a\le\overline{a}_1$): In this regime, $c_H/((1-a)\theta_2)$ in Fig. \ref{figure4} may intersect with $Q(x_H,1)$ at two points and $c_H/((1-a)\theta_1$ intersects with $Q(x_H,1)$ at two points. The right intersection point of $c_H/((1-a)\theta_2)$ and $Q(x_H,1)$ $x_{H1}\in[0.5,1]$ is an positive path-diversity equilibrium. The left intersection point of $c_H/((1-a)\theta_1)$ and $Q(x_H,1)$ $x_{H1}\in[0.5,1]$ (if exists) is another positive path-diversity equilibrium. However, only $x_{H1}$ is stable. Figure \ref{figure4} also shows that as $a$ increases, $x_{H1}$ moves towards $0.5$, and the corresponding social welfare increases towards the maximum $\theta_0\overline{Q}-c_H$. Note that $x_H=0.5$ is a stable equilibrium, and hence we choose $a=\overline{a}_1$ to reach the social optimum $SW=\theta_0\overline{Q}-c_H$.
\item \emph{Weak content restriction regime} ($\overline{a}_1< a\le1$): When $a$ is large, $c_H/((1-a)\theta_1)$ in Fig. \ref{figure4} is above $Q(x_H,1)$ and $c_H/((1-a)\theta_2$ may intersect with $Q(x_H,1)$ at two points. All type-1 users will choose $L$-path and only the left intersection point $x_{H2}\in[0,0.5]$ (if exists) is an positive path-diversity equilibrium. But it is not stable. The only stable equilibrium in this regime is that all users choose $L$-path and the corresponding social welfare is $SW=\theta_0a\underline{Q}$. We choose $a=1$ to reach the social optimum $\theta_0\underline{Q}$.
\end{itemize}

Similar to the homogeneous case, when the travel cost $c_H$ is too high, we may not choose the path-diversity, as it is prohibitively expensive for any user type to cover the $H$-path. The following theorem shows how to design an $a$ to achieve maximum social welfare. Let $SW_a$ denote the maximum social welfare achieved by our content-restriction mechanism by optimizing over $a$.
\begin{theorem}
\label{prop10}
When user types are diverse (i.e., $\theta_2/\theta_1>\overline{Q}/\underline{Q}$ with $\underline{Q}$ and $\overline{Q}$ given in (\ref{underlinea}) and (\ref{overlinea})), the optimal $a$ depends on $c_H$:
\begin{itemize}
\item If the travel cost on $H$-path is small, i.e., $$c_H<\frac{(\theta_1+\theta_2)(\overline{Q}-\underline{Q})\theta_2\underline{Q}}{\theta_1\overline{Q}+\theta_2\underline{Q}},$$
    we choose
     $a=(\theta_2\underline{Q}-c_H)/(\theta_2\underline{Q})-\epsilon$ with infinitesimal $\epsilon>0$ to reach the perfect path-diversity ($\hat{x}_H=0.5$) and social welfare, $$SW_a\rightarrow\theta_0\overline{Q}-\frac{\theta_1\overline{Q}+\theta_2\underline{Q}}{2\theta_2\underline{Q}}c_H.$$
\item Otherwise, the travel cost on $H$-path is too high to send any user there $(\hat{x}_H=0)$, and we choose
     $a=1$ to avoid any content destruction. The corresponding social welfare is $SW_a=\theta_0\underline{Q}$.
    \end{itemize}
    When user types are similar (i.e.,  $\theta_2/\theta_1\le\overline{Q}/\underline{Q}$), the choice of $a$ also depends on $c_H$:
\begin{itemize}
  \item If the travel cost on $H$-path is small (i.e., $c_H<\theta_0(\overline{Q}-\underline{Q})$), we optimally decide $a=(\theta_1\overline{Q}-c_H)/(\theta_1\overline{Q})$ to reach perfect path-diversity ($\hat{x}_H=0.5$). The corresponding optimal social welfare is $SW_a=\theta_0\overline{Q}-c_H$.
  \item Otherwise, the travel cost on $H$-path is too high to send any user there $(\hat{x}_H=0)$ and we choose $a=1$ to avoid large content destruction. The corresponding social welfare is $SW_a=\theta_0\underline{Q}$.
\end{itemize}
The resultant price of anarchy under the optimal content-restriction mechanism is $PoA_a=1/2$.
\end{theorem}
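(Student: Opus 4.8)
The plan is to reduce the theorem to the four-regime (diverse) and three-regime (similar) decompositions already carried out above, where the stable-equilibrium social welfare in each regime has been tabulated, and then to decide the optimal $a$ by a single comparison between the best path-diversity welfare and the zero-diversity welfare $\theta_0\underline{Q}$ delivered by $a=1$. Throughout I would use the content-restriction social welfare that charges for the destroyed content, so that a type-$i$ user on the restricted $L$-path contributes only $a\theta_i Q(\hat{x}_H,1)$ to the total.

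For the diverse case $\theta_2/\theta_1 > \overline{Q}/\underline{Q}$, the best path-diversity outcome comes from the \emph{lower medium} regime $\overline{a}_1 < a < \underline{a}_2$, which supports the stable equilibrium $\hat{x}_H = 0.5$ with all type-2 users on $H$ and all type-1 users on the restricted $L$-path. Since less destruction is always preferable, I would push $a$ to its largest admissible value $a \to \underline{a}_2 = (\theta_2\underline{Q}-c_H)/(\theta_2\underline{Q})$. Summing the realized payoffs, each type-2 user receiving $\theta_2\overline{Q} - c_H$ and each type-1 user receiving $a\theta_1\overline{Q}$, and substituting this $a$ gives the perfect-diversity welfare $\theta_0\overline{Q} - \frac{\theta_1\overline{Q}+\theta_2\underline{Q}}{2\theta_2\underline{Q}}c_H$. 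Requiring this to exceed $\theta_0\underline{Q}$ and using $\theta_0 = (\theta_1+\theta_2)/2$ yields exactly the stated threshold on $c_H$; above it, $a=1$ with $SW_a = \theta_0\underline{Q}$ dominates. For the similar case the lower-medium regime is absent, so perfect diversity is reached at $a=\overline{a}_1$ in the medium regime, where both types become indifferent at $\hat{x}_H=0.5$ and the welfare collapses to $\theta_0\overline{Q} - c_H$; comparing with $\theta_0\underline{Q}$ recovers the threshold $c_H < \theta_0(\overline{Q}-\underline{Q})$, matching Theorem \ref{prop7} as a consistency check.

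For $PoA_a = 1/2$ I would reuse the worst-case-equilibrium argument of Theorem \ref{prop7}. Every $a<1$ admits the all-$L$ equilibrium of welfare $a\theta_0\underline{Q} < \theta_0\underline{Q}$, whereas $a=1$ gives the unique equilibrium of welfare $\theta_0\underline{Q}$, so the best worst-case welfare is $\theta_0\underline{Q}$. Since the free optimum satisfies $SW(x_H^*) \le \theta_0 Q(0.5,1) = \theta_0\overline{Q}$, we obtain $PoA_a \ge \underline{Q}/\overline{Q} = Q_1(1)/(2Q_1(0.5)) \ge 1/2$ by monotonicity of $Q_1$, and tightness follows by inserting the piecewise-linear $Q_1$ of \eqref{fucdef} and taking $c_H$ small, exactly as in the earlier proofs.

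I expect the technical heart to be the diverse-case welfare at perfect diversity. Unlike the homogeneous/similar case it is strictly larger than $\theta_0\overline{Q} - c_H$, precisely because the type-1 users left on the $L$-path suffer the milder restriction $\underline{a}_2$ rather than $\overline{a}_1$; verifying $\frac{\theta_1\overline{Q}+\theta_2\underline{Q}}{2\theta_2\underline{Q}} < 1$ exactly under $\theta_2/\theta_1 > \overline{Q}/\underline{Q}$ is what makes diversity worthwhile over the larger $c_H$ range. A secondary subtlety is that the $SW_a$ used in selecting $a$ is an \emph{optimistic} (best-equilibrium) quantity while $PoA_a$ is \emph{pessimistic} (worst-equilibrium); I would keep these two notions separate, as the proof of Theorem \ref{prop7} does, to avoid conflating the designer's target with the anarchy guarantee.
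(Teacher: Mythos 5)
Your first half—the choice of the optimal $a$ and the welfare formulas—is correct and is essentially the paper's own argument: compare the best stable path-diversity welfare (push $a\to\underline{a}_2$ in the lower-medium regime when types are diverse, take $a=\overline{a}_1$ in the medium regime when they are similar) against the zero-diversity benchmark $\theta_0\underline{Q}$, with the realized-payoff accounting $0.5(\theta_2\overline{Q}-c_H)+0.5a\theta_1\overline{Q}$ producing exactly the stated expression; the $PoA$ lower bound $PoA_a\ge\underline{Q}/\overline{Q}\ge 1/2$ is also fine, since it only uses the $a=1$ equilibrium.

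The genuine gap is in your tightness argument. The supporting claim that ``every $a<1$ admits the all-$L$ equilibrium'' is false with heterogeneous types: at $x_H=0$ a type-2 user strictly prefers $H$ whenever $(1-a)\theta_2\underline{Q}>c_H$, so all-$L$ is an equilibrium only for $a\ge\underline{a}_2$. Precisely in the lower-medium regime $\overline{a}_1<a<\underline{a}_2$ the diversity equilibrium $\hat{x}_H=0.5$ is the \emph{unique} equilibrium, so $\max_a\min_{\hat{x}_H\in NE(a)}SW(\hat{x}_H)$ is not $\theta_0\underline{Q}$; for small $c_H$ it is at least $\theta_0\overline{Q}-\frac{\theta_1\overline{Q}+\theta_2\underline{Q}}{2\theta_2\underline{Q}}c_H$, which tends to the social optimum. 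Hence your proposed worst-case instance—the piecewise-linear $Q_1(x)=\min\{2qx,q\}$ with $c_H$ small—yields a ratio tending to $1$, not $1/2$. Indeed, the theorem's own first bullet already tells you that small $c_H$ cannot be the worst case here: removing the bad all-$L$ equilibrium is exactly how user diversity helps, and it is the structural difference from the homogeneous Theorem \ref{prop7}, where the small-$c_H$ instance does work because the all-$L$ equilibrium always coexists with the interior one.

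The paper's tightness construction is accordingly different, and you would need something like it. It takes maximal diversity $\theta_1\to 0$, so type-1 users never take the $H$-path and \emph{every} equilibrium, for \emph{every} $a$, has welfare at most $\max\{\theta_0\overline{Q}-0.5c_H,\ \theta_0\underline{Q}\}$; it then uses the steep content function $Q_1(x)=qx/\delta$ for $0\le x\le\delta$, $Q_1(x)=q$ for $\delta\le x\le 1$, together with the \emph{large} cost $c_H=2\theta_0 q$, so that both terms of the bound equal $\theta_0 q$ while the free optimum $SW(\delta)=2\theta_0 q-\delta c_H\to 2\theta_0 q$ as $\delta\to 0$. The steepness is essential: with $c_H=2\theta_0q$ your gentler $Q_1$ would make the free optimum itself collapse to $\theta_0 q$ and the ratio return to $1$. (Alternatively one could let $\theta_2/\theta_1\to 1$ and $c_H\to 0$ jointly, but that requires re-analyzing the equilibrium set on the interval $\underline{a}_1<a<\underline{a}_2$ that your all-$L$ claim papers over, and it degenerates to the homogeneous case rather than exhibiting a genuinely heterogeneous worst instance.)
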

\begin{proof}
When $\theta_2/\theta_1>\overline{Q}/\underline{Q}$, in the corresponding medium and lower medium content-restriction regime we have shown that the maximal social welfare attained at a stable path-diversity equilibrium ($0<\hat{x}_H<1$) is
$$\theta_0\overline{Q}-\frac{\theta_1\overline{Q}+\theta_2\underline{Q}}{2\theta_2\underline{Q}}c_H,$$
which is reached asymptotically. In the corresponding strong and weak content-restriction regime, we can infer that the maximal social welfare attained at a stable zero path-diversity equilibrium ($\hat{x}_H=0$ or $1$) is $\theta_0\underline{Q}$. Thus if
$$c_H<\frac{(\theta_1+\theta_2)(\overline{Q}-\underline{Q})\theta_2\underline{Q}}{\theta_1\overline{Q}+\theta_2\underline{Q}},$$
we have
$$\theta_0\overline{Q}-\frac{\theta_1\overline{Q}+\theta_2\underline{Q}}{2\theta_2\underline{Q}}c_H>\theta_0\underline{Q}.$$
Then we optimally choose $a=\underline{a}_2-\epsilon$ \footnote{Note that this value of $a$ is positive since $c_H<\theta_2\underline{Q}$.} to reach the perfect path diversity ($\hat{x}_H=0.5$) and  approach optimum social welfare
$$SW_a\rightarrow\theta_0\overline{Q}-\frac{\theta_1\overline{Q}+\theta_2\underline{Q}}{2\theta_2\underline{Q}}c_H.$$
Otherwise we choose $a=1$ to reach optimum social welfare $SW_a=\theta_0\underline{Q}$.

When $\theta_2/\theta_1\le\overline{Q}/\underline{Q}$, in the corresponding medium content-restriction regime we have shown that the maximal social welfare attained at a stable path-diversity equilibrium ($0<\hat{x}_H<1$) is
$$\theta_0\overline{Q}-c_H.$$
In the corresponding strong and weak content-restriction regime, we can infer that the maximal social welfare attained at a stable zero path-diversity equilibrium ($\hat{x}_H=0$ or $1$) is $\theta_0\underline{Q}$. Thus if
$$c_H<\theta_0(\overline{Q}-\underline{Q}),$$
we have
$$\theta_0\overline{Q}-c_H>\theta_0\underline{Q},$$
Then we optimally choose $a=\overline{a}_1$ \footnote{Note that this value of $a$ is positive since $c_H<\theta_1\overline{Q}$.} to reach the perfect path diversity ($\hat{x}_H=0.5$) and the optimum social welfare
$$SW_a=\theta_0\overline{Q}-c_H.$$
Otherwise we choose $a=1$ to reach optimum social welfare $SW_a=\theta_0\underline{Q}$.

Next we prove $PoA_a=1/2$. Using the notation introduced in the proof of Theorem \ref{prop7}, we have
$$PoA_a=\min\limits_{\theta_i,Q(\cdot,1),c_H}\frac{\max\limits_{a}\min\limits_{\hat{x}_H\in NE(a)}\{SW(\hat{x}_H)\}}{SW(x_H^*)}$$
$$\ge\min\limits_{Q(\cdot,1)}\frac{\underline{Q}}{\overline{Q}}\ge\frac{1}{2}$$
since $\min\limits_{\hat{x}_H\in NE(1)}\{SW(\hat{x}_H\}=\theta_0\underline{Q}$ and $SW(x_H^*)\le\theta_0\overline{Q}$.

Now we prove the bound is tight. Assuming maximum user diversity, i.e., $\theta_1$ is near to zero. In this case, $\theta_1Q(x_H,1)-c_H\le a\theta_1Q(x_H,1)$ for any $0\le x_H\le1$ and $0\le a\le1$. Since in any equilibrium, either there is no path diversity or type $\theta_2$ users are indifferent between $H$-path and $L$-path. In the former case, the social welfare is either $\theta_0\underline{Q}$ or $\theta_0\overline{Q}-H$. In the latter case, let $\hat{x}_H$ be any such incentivised equilibrium, then the social welfare is
$$0.5a\theta_1Q(\hat{x}_H,1)+0.5(\theta_2Q(\hat{x}_H,1)-c_H),$$
which is less than
$$\theta_0\overline{Q}-0.5c_H.$$
Thus,
$$\max\limits_{a}\min\limits_{\hat{x}_H\in NE(a)}\{SW(\hat{x}_H)\}\le\max\{\theta_0\overline{Q}-0.5c_H,\theta_0\underline{Q}\}.$$

Consider a specific $Q_1(\cdot)$ function as following
\begin{equation}
\label{fucdef5}
Q_1(x)=\left\{
\begin{array}{ll}
\frac{qx}{\delta} &0\le x\le\delta;\\
q &\delta\le x\le1.
\end{array}
\right.
\end{equation}
where we let $\delta$ be near zero. This content function corresponds to having a finite amount of content, and that we only need one user to collect all the information on a path. Then
\begin{equation}
\label{fucdef6}
Q(x,1)=\left\{
\begin{array}{ll}
\frac{qx}{\delta}+q &0\le x\le\delta;\\
2q &\delta\le x\le1-\delta;\\
q+\frac{(1-x)q}{\delta} &1-\delta\le x\le1.
\end{array}
\right.
\end{equation}
We let $c_H=2\theta_0q$. Then $\theta_0\overline{Q}-0.5c_H=\theta_0q$ and $\theta_0\underline{Q}=\theta_0q$, and it follows that
$$\max\limits_{a}\min\limits_{\hat{x}_H\in NE(a)}\{SW(\hat{x}_H)\}\le\theta_0q.$$
Note that $SW(\hat{x}_H)= SW(\delta)$ for infinitesimal $\delta>0$. Since $SW(\delta)=2\theta_0q-\delta c_H$, we have that the price of anarchy
$$PoA_a\le\frac{\theta_0q}{2\theta_0q-\delta c_H},$$
for infinitesimal $\delta>0$. Hence $PoA_a\le1/2$. This completes the proof.
\end{proof}

A corollary is that diversity in the user types increases the optimum system efficiency.
This is because type-2 users have a larger $\theta_2$ than the average $\theta_0$ and are more sensitive to the restriction on the content than the average user. Hence less restriction is needed to make users switch and obtain  path diversity than in the case of homogeneous users.
And this diversity is necessary to make this equilibrium unique. Of course user diversity does not play a role to improve efficiency if the cost $c_H$ is too large. In Section~\ref{simulation}, we will choose $Q_1(\cdot)$ as (\ref{Q}) and use extensive simulations to show that this mechanism is still efficient in most of the time.

\section{Combined Side-payment and Content-restriction for Heterogeneous Users}\label{combined}

In this section, we propose a combined mechanism: in addition to the payment function $g(x_H)$ in Section~3, we can also use content-restriction with coefficient $a\le1$ in Section~4 as follows.

 \emph{Combined Mechanism:} We collect from each $L$-path participant a payment $g(x_H)$ for only a fraction $a$ of the total information, i.e., $aQ(x_H,b)$, and give each $H$-path participant a subsidy $(b-x_H)g(x_H)/x_H$ to keep the budget balanced.

 This combined mechanism jointly optimizes the payment function $g(x_H)$ and restriction coefficient $a$. Depending on her path choice ($H$ or $L$), a type-$i$ user's payoff  changes from (\ref{payoff}) to
\[
\label{payoff4}
u_i(P,x_H)=\left\{
\begin{array}{ll}
a\theta_iQ(x_H,b)-g(x_H), &\mbox{if }P=L;\\
\theta_iQ(x_H,b)-c_H+\frac{b-x_H}{x_H}g(x_H), &\mbox{if }P=H.
\end{array}
\right.
\]

If we let $a=1$ or $g(x_H)=0$ for any $x_H$, the combined mechanism is simplified to either side-payment or content-restriction. This combined mechanism exploits the \emph{synergies} of the two mechanisms. The use of side-payments reduces the effective cost difference of the two paths. This allows for less content destruction to be necessary for type-2 users to choose the $H$-path (i.e., we can obtain the desired path diversity  in a more efficient way). Of course, since in this equilibrium type-1 users choose the $L$-path and must also pay the incentive fee, we have similar IR issues as we encountered before (see Section~3), but made more acute because the content is of lesser quality for users choosing the $L$-path. We need to consider if it is worthwhile to keep type-1 users in the system, or use a higher side payment and nicely control only type-2 users' choices with some $a<1$.
In the following, we will show how the combined mechanism works. Actually, as long as $c_H$ is not too large, we can show that it is optimal to have a positive flow on every path. More specifically, we have three equilibrium cases: (i) \emph{Case.IR21}: this case ensures both user types' IR for full participation and only type-2 users are indifferent in path choosing, (ii) \emph{Case.IR12}: this case also ensures full participation and only type-1 users are indifferent in choosing between the two paths, and (iii) \emph{Case.IR2}: this case excludes type-1 users by only satisfying type-2's IR due to side-payment.

 \emph{Analysis of Case.IR21}: this case ensures both user types' IR for full participation and only type-2 users are indifferent in path choosing. In this case, if we want to reach an equilibrium $\hat{x}_H$, the equilibrium payoffs of a type-2 by choosing the two paths are equal, i.e.,
$$\theta_2Q(\hat{x}_H,1)-c_H+\frac{1-\hat{x}_H}{\hat{x}_H}g(\hat{x}_H)=a\theta_2Q(\hat{x}_H,1)-g(\hat{x}_H),$$
which is equivalent to
\begin{equation}
\label{condition8}
g(\hat{x}_H)=(c_H-(1-a)\theta_2Q(\hat{x}_H,1))\hat{x}_H.
\end{equation}
Since $\theta_1<\theta_2$, we have
$$g(\hat{x}_H)<(c_H-(1-a)\theta_1Q(\hat{x}_H,1))\hat{x}_H,$$
for type-1 and all users of that type will choose $L$-path. Thus,  we have $x_H^*\in[0,0.5]$. Given $g(\hat{x}_H)$ in (\ref{condition8}), we still need to ensure type-1's IR, i.e.,
\begin{equation}
\label{condition2}
\theta_1aQ(\hat{x}_H,1)-(c_H-(1-a)\theta_2Q(\hat{x}_H,1))\hat{x}_H\ge0.
\end{equation}
As long as (\ref{condition2}) holds, we can incentivize any feasible equilibrium $\hat{x}_H\in(0,0.5]$ by using the following side-payment\footnote{In practice, $g(x_H)$ cannot be infinity, and we can simply replace it by a large enough value.}:
\begin{equation}
\footnotesize
\label{g1}
g(x_H)=\left\{
\begin{array}{ll}
\infty, &\text{if\ }x_H\in[0,\hat{x}_H);\\
(c_H-(1-a)\theta_2Q(\hat{x}_H,1))\hat{x}_H, & \text{if\ }x_H=\hat{x}_H;\\
-\infty, &\text{if\ }x_H\in(\hat{x}_H,1],
\end{array}
\right.
\end{equation}
which becomes (\ref{condition8}) when $x_H$ reaches the stable target $\hat{x}_H$. (\ref{g1}) also helps converge from any $x_H$ to the stable $x_H^*$:
\begin{itemize}
\item If $x_H>\hat{x}_H$, the corresponding (\ref{g1}) becomes negative infinity, $L$-path participants are rewarded with a large enough reward and hence more users are motivated to choose $L$-path and $x_H$ decreases.
\item If $x_H<\hat{x}_H$, the corresponding (\ref{g1}) becomes positive infinity, $L$-path participants are required to pay a large enough penalty and hence more users are motivated to choose $H$-path and $x_H$ increases.
\end{itemize}

Given the side-payment function in (\ref{g1}), now we only need to properly choose $a$ and target $\hat{x}_H$ to maximize the social welfare:
\begin{equation}
\small
\label{maximisation1}
\max\limits_{x_H\in(0,0.5]\atop a\in[0,1]}\big(x_H\theta_2+((\frac{1}{2}-x_H)\theta_2+\frac{1}{2}\theta_1)a\big)Q(x_H,1)-x_Hc_H
\end{equation}
subject to the IR constraint in (\ref{condition2}). We denote the optimal solution to (\ref{maximisation1}) as $(\check{x}_H,\check{a})$ and the corresponding social welfare is $SW(\check{x}_H,\check{a})$.

 \emph{Analysis of Case.IR12:} In this case, the equilibrium payoffs of a type-1 user by choosing the two paths are equal, i.e.,
\[
g(\hat{x}_H)=(c_H-(1-a)\theta_1Q(\hat{x}_H,1))\hat{x}_H.
\]
Since $\theta_2>\theta_1$,
$$g(\hat{x}_H)>(c_H-(1-a)\theta_2Q(\hat{x}_H,1))\hat{x}_H,$$
for type-2 and all users of type-2 will choose $H$-path. Thus, we have $\hat{x}_H\in[0.5,1]$.  We can show that the equilibrium of Case.IR21 is always better than that of Case.IR12. The reason is that if we want to incentivise an equilibrium $\hat{x}_H\in[0.5,1]$ in Case.IR12, we can also incentivise another equilibrium $1-\hat{x}_H\in[0,0.5]$ in Case.IR21 to collect the same information value in (\ref{content1}) but save the total travel cost on the $H$-path.
\begin{lemma}
\label{lemma3}
 The equilibrium of Case.IR12 cannot be better than that of Case.IR21.
\end{lemma}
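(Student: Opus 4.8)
The plan is to exploit the reflection symmetry $Q(y,1)=Q(1-y,1)$ (property (ii)) to map every Case.IR12 equilibrium into a feasible Case.IR21 configuration of weakly higher social welfare. Since the IR21 optimum $SW(\check x_H,\check a)$ dominates any particular feasible IR21 point, it suffices to produce, for an \emph{arbitrary} IR12 equilibrium, a single IR21 point that already beats it, and then take suprema over both families.

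First I would fix an arbitrary IR12 equilibrium with flow $y:=\hat x_H\in[0.5,1]$ and restriction $a$, and record its welfare. Writing $Q:=Q(y,1)$, the type-2 users (mass $\tfrac12$) all lie on $H$, a mass $y-\tfrac12$ of type-1 joins them on $H$, and the remaining mass $1-y$ of type-1 stays on $L$ with restricted content, so
\begin{equation*}
SW_{\mathrm{IR12}}=\big[\tfrac12\theta_2+(y-\tfrac12)\theta_1+(1-y)a\theta_1\big]Q-yc_H .
\end{equation*}
Next I would reflect to the IR21 candidate $x:=1-y\in[0,0.5]$ and set $a'=1$ (pure side payment, no destruction). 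By $Q(x,1)=Q$, the objective (\ref{maximisation1}) evaluated here collapses to $\theta_0Q-(1-y)c_H$, and the welfare gain is
\begin{equation*}
\big(\theta_0Q-(1-y)c_H\big)-SW_{\mathrm{IR12}}=\theta_1(1-y)(1-a)\,Q+(2y-1)c_H\ \ge 0,
\end{equation*}
both summands being nonnegative since $y\ge\tfrac12$ and $a\le1$. Thus, provided the reflected point is admissible, it dominates the chosen IR12 equilibrium; sweeping over all IR12 equilibria then yields the lemma.

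The main obstacle is admissibility: I must verify that $(x,a')=(1-y,1)$ satisfies the type-1 IR constraint (\ref{condition2}), which at $a'=1$ reduces to $\theta_1Q\ge(1-y)c_H$. This is exactly where the hypothesis that the \emph{original} IR12 point is an equilibrium enters. In Case.IR12 the type-1 indifference condition fixes the side payment $g(y)=y\big(c_H-(1-a)\theta_1Q\big)$, so the type-1 IR there, $a\theta_1Q-g(y)\ge0$, rearranges to $\theta_1Q\,[a+y(1-a)]\ge yc_H$. It then remains to check $\tfrac{y}{a+y(1-a)}\ge 1-y$, i.e.\ $y^2\ge a(1-y)^2$, which holds because $y\ge 1-y\ge0$ and $a\le1$; chaining the two inequalities delivers $\theta_1Q\ge(1-y)c_H$, so the reflected point is feasible for (\ref{maximisation1}).

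One technical wrinkle I would flag is that $a'=1$ makes the (type-blind) side payment leave \emph{both} types indifferent, so the reflected point sits on the boundary of the IR21 family rather than in the strict ``only type-2 indifferent'' interior. I would resolve this by using $a'=1-\epsilon$ and letting $\epsilon\rightarrow0^{+}$: for every $\epsilon>0$ type-1 strictly prefers $L$ (the H-minus-L indifference gap equals $\epsilon(\theta_1-\theta_2)Q<0$), the IR constraint holds by continuity, and the welfare approaches $\theta_0Q-(1-y)c_H$. Hence the supremum of admissible IR21 welfares still dominates $SW_{\mathrm{IR12}}$, completing the argument.
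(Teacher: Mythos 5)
Your proof is correct and rests on the same core idea as the paper's: use the symmetry $Q(y,1)=Q(1-y,1)$ to map any Case.IR12 equilibrium at flow $y=\acute{x}_H\in[0.5,1]$ into a feasible Case.IR21 point at flow $1-y$, then invoke optimality of $(\check{x}_H,\check{a})$. The executions differ, though. The paper keeps a nontrivial restriction coefficient at the reflected point, choosing $a'=\max\{\acute{a},\,1-c_H/(\theta_2 Q(\acute{x}_H,1))\}$, and proves dominance user-by-user: gross of the budget-balanced transfers, every type's payoff weakly improves at the reflected equilibrium, so aggregate welfare does too; type-1's IR there is asserted and needs a short implicit argument chaining the old IR with $\theta_1<\theta_2$ and $1-\acute{x}_H\le\acute{x}_H$. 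You instead remove restriction entirely ($a'=1$), which buys you a closed-form welfare gap $\theta_1(1-y)(1-a)Q+(2y-1)c_H\ge 0$ and a fully explicit IR verification via the inequality $y^2\ge a(1-y)^2$ --- tighter bookkeeping than the paper's, at the cost of landing on the boundary where both types are indifferent. Two remarks on that boundary issue: your $\epsilon$-perturbation is actually unnecessary, since $(1-y,1)$ already lies in the feasible set of (\ref{maximisation1}) (whose domain includes $a=1$) and satisfies (\ref{condition2}), which is all that the comparison with $SW(\check{x}_H,\check{a})$ requires; and, as stated, the fallback is slightly shaky, because ``IR holds by continuity'' can fail when $\theta_1 Q=(1-y)c_H$ holds with equality and $\theta_2(1-y)<\theta_1$, in which case the constraint is violated for every $\epsilon>0$. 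So rely on the $a'=1$ point directly. (Both your argument and the paper's also silently skip the degenerate case $y=1$, where the reflected flow $0$ falls outside $(0,0.5]$; a limiting argument disposes of it.)
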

\begin{proof}
Assume that by choosing $a=\acute{a}$ and $\hat{x}_H=\acute{x}_H\in[0.5,1]$ the social optimum in Case.IR12 is $SW(\acute{x}_H,\acute{a})$. Since at the equilibrium $\hat{x}_H=\acute{x}_H$, all type-1 users are indifferent in path choosing, thus,
$$g(\acute{x}_H)=(c_H-(1-\acute{a})\theta_1Q(\acute{x}_H,1))\acute{x}_H.$$
The payoff of a type-1 user at this equilibrium is
$$\theta_1\acute{a}Q(\acute{x}_H,1)-(c_H-(1-\acute{a})\theta_1Q(\acute{x}_H,1))\acute{x}_H\ge0.$$
All type-2 users will choose $H$-path, the payoff of a type-2 user at this equilibrium is
$$\theta_2Q(\acute{x}_H,1)-c_H+(c_H-(1-\acute{a})\theta_1Q(\acute{x}_H,1))(1-\acute{x}_H)\ge0.$$
Now we calculate the social welfare of each user without considering side-payment due to BB condition.
The payoff of a type-1 user who chooses $L$-path is
$$\theta_1\acute{a}Q(\acute{x}_H,1).$$
The payoff of a type-1 user who chooses $H$-path is
$$\theta_1Q(\acute{x}_H,1)-c_H.$$
The payoff of a type-2 user who must chooses $H$-path is
$$\theta_2Q(\acute{x}_H,1)-c_H.$$

Without consider IR constraint, we can always reach a stable equilibrium $\hat{x}_H=1-\acute{x}_H$ by using
$$a^\prime=\max\{\acute{a},1-\frac{c_H}{\theta_2Q(\acute{x}_H,1)}\},$$
and side-payment function given by (\ref{g1}). At the equilibrium $\hat{x}_H=1-\acute{x}_H$, all type-2 users are indifferent in path choosing, thus,
$$g(1-\acute{x}_H)=(c_H-(1-a^\prime)\theta_2Q(\acute{x}_H,1))(1-\acute{x}_H).$$
The payoff of a type-2 user at this equilibrium is
$$\theta_2Q(\acute{x}_H,1)-c_H+(c_H-(1-a^\prime)\theta_2Q(\acute{x}_H,1))\acute{x}_H\ge0.$$
All type-1 users will choose $L$-path, the payoff of a type-1 user at this equilibrium is
$$\theta_1a^\prime Q(\acute{x}_H,1)-(c_H-(1-a^\prime)\theta_2Q(\acute{x}_H,1))(1-\acute{x}_H)\ge0.$$
Then all type-1 users will participate and IR is satisfied. Now we calculate the social welfare of each user without considering side-payment due to BB condition. The payoff of a type-1 user who must choose $L$-path is
$$\theta_1a^\prime Q(\acute{x}_H,1)\ge\max\{\theta_1\acute{a}Q(\acute{x}_H,1),\theta_1Q(\acute{x}_H,1)-c_H\}.$$
The payoff of a type-2 user who chooses $H$-path is
$$\theta_2Q(\acute{x}_H,1)-c_H.$$
The payoff of a type-2 user who chooses $L$-path is
$$\theta_2a^\prime Q(\acute{x}_H,1)\ge\theta_2Q(\acute{x}_H,1)-c_H.$$
Denote the social welfare attained at this equilibrium as $SW(\acute{x}_H,a^\prime)$. Since at the equilibrium $\hat{x}_H=1-\acute{x}_H$, if we do not take the side-payment into account, each user receive a payoff better than the payoff she receive at the equilibrium $\hat{x}_H=\acute{x}_H$. Therefore, $SW(\acute{x}_H,a^\prime)\ge SW(\acute{x}_H,\acute{a})$.
\end{proof}

 \emph{Analysis of Case.IR2}: this case excludes type-1 users by only satisfying type-2's IR due to side-payment. This is equivalent to the half participation case with $b=0.5$ and $a=1$ in Section \ref{sidepayment}.  The optimal solution in this case is $\tilde{x}_H$ which is given by (\ref{half})  and the corresponding social welfare is $SW(\tilde{x}_H)$.

Actually, besides these three cases, there is another equilibrium case where all type-1 users choose L-path and all type-2 users choose H-path. We can show that this case with $\hat{x}_H=0.5$ is dominated by Case.IR21 as follows.
\begin{lemma}
\label{lemma4}
The equilibrium where all type-1 users choose $L$-path and all type-2 users choose $H$-path cannot be better than the equilibrium of Case.IR21.
\end{lemma}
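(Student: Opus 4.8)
The plan is to show that the ``corner'' equilibrium of this lemma --- all type-1 users on the $L$-path and all type-2 users on the $H$-path --- produces \emph{exactly the same allocation} as Case.IR21 evaluated at the boundary point $\hat{x}_H = 0.5$, and then to argue that this allocation is feasible for Case.IR21 whenever the corner is feasible. Since $\eta = 0.5$, the corner allocation places a mass $1/2$ of type-2 users on the $H$-path and a mass $1/2$ of type-1 users on the $L$-path, i.e.\ $\hat{x}_H = 0.5$. Plugging $x_H = 0.5$ into the Case.IR21 objective \eqref{maximisation1} gives $\big[0.5\theta_2 + 0.5\theta_1 a\big]Q(0.5,1) - 0.5c_H = 0.5\theta_2\overline{Q} + 0.5 a\theta_1\overline{Q} - 0.5c_H$, which is precisely the social welfare of the corner for the same restriction coefficient $a$ (type-2 users on $H$ enjoy full content $\theta_2\overline{Q}$, type-1 users on $L$ enjoy $a\theta_1\overline{Q}$, and the side-payments cancel by the BB condition). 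So for any common $a$ the two social-welfare values coincide.

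The crux is then a \emph{feasibility} comparison. In the corner, type-2 users must \emph{strictly} prefer the $H$-path, which (at $x_H = 0.5$, $b=1$) forces the collected side-payment $g$ to satisfy $\theta_2\overline{Q} - c_H + g \ge a\theta_2\overline{Q} - g$, i.e.\ $g_{\text{corner}} \ge \tfrac{1}{2}\big(c_H - (1-a)\theta_2\overline{Q}\big)$. By contrast, Case.IR21 only requires type-2 to be \emph{indifferent} at $\hat{x}_H = 0.5$, so by \eqref{condition8} its side-payment is exactly $g_{\text{IR21}} = \tfrac{1}{2}\big(c_H - (1-a)\theta_2\overline{Q}\big)$, the smallest admissible value. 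Hence $g_{\text{corner}} \ge g_{\text{IR21}}$. Because this payment is levied on the $L$-path (where the type-1 users sit), a larger $g$ makes the type-1 participation constraint \emph{harder}: the corner's IR condition is $a\theta_1\overline{Q} - g_{\text{corner}} \ge 0$, while Case.IR21's IR condition \eqref{condition2} at $x_H = 0.5$ reads $a\theta_1\overline{Q} - g_{\text{IR21}} \ge 0$. Since $g_{\text{IR21}} \le g_{\text{corner}}$, feasibility of the corner immediately implies feasibility of Case.IR21 at $\hat{x}_H = 0.5$ with the same $a$.

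Combining the two observations finishes the argument: whenever the corner equilibrium is realisable, the same $(a,\hat{x}_H=0.5)$ pair is a feasible Case.IR21 configuration with identical social welfare, and since Case.IR21 maximises over all $x_H \in (0,0.5]$ (the endpoint $0.5$ included) and all admissible $a$, we obtain $SW_{\text{corner}} = SW_{\text{IR21}}(0.5,a) \le SW(\check{x}_H,\check{a})$. The step I expect to be the main obstacle is the feasibility/IR comparison of the middle paragraph: one must check that ``strictly preferring $H$'' (the corner) is genuinely a \emph{more demanding} side-payment requirement than ``being indifferent'' (Case.IR21), so that the corner's type-1 participation constraint dominates Case.IR21's. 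Once the monotonicity $g_{\text{corner}} \ge g_{\text{IR21}}$ is in hand, everything else is bookkeeping with the BB-invariance of social welfare.
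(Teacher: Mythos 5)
Your proposal is correct and follows essentially the same route as the paper's own proof: both arguments use type-2's preference for the $H$-path at the corner to lower-bound the corner's side payment by the Case.IR21 indifference payment $\tfrac{1}{2}\bigl(c_H-(1-a)\theta_2\overline{Q}\bigr)$, then chain this with type-1's participation constraint at the corner to conclude that the IR condition \eqref{condition2} of Case.IR21 holds at $\hat{x}_H=0.5$ with the same $a$, yielding an identical allocation and (by budget balance) identical social welfare. Your explicit framing of the key step as the monotonicity $g_{\text{corner}}\ge g_{\text{IR21}}$ is just a cleaner restatement of the inequality chain the paper writes out directly.
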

\begin{proof}
Assume that by choosing $a=\grave{a}$ and $\hat{x}_H=0.5$ we can reach a stable equilibrium where all type-1 users choose $L$-path and all type-2 users choose $H$-path. The corresponding social welfare is denoted as $SW(0.5,\grave{a})$. Since at such equilibrium $\hat{x}_H=0.5$, all type-1 users choose $L$-path, thus,
$$g(0.5)\le0.5(c_H-(1-\grave{a})\theta_1\overline{Q}).$$
All type-2 users choose $H$-path, thus,
$$g(0.5)\ge0.5(c_H-(1-\grave{a})\theta_2\overline{Q}).$$
The payoff of a type-1 user at such equilibrium is
$$\theta_1\grave{a}\overline{Q}-g(0.5)\ge0.$$
It follows that
$$\theta_1\grave{a}\overline{Q}-0.5(c_H-(1-\grave{a})\theta_2\overline{Q})\ge0.$$
Thus, we use $a=\grave{a}$ and side-payment function given by (\ref{g1}) with $\hat{x}_H=0.5$ to reach the same equilibrium ($\hat{x}_H=0.5$) in Case.IR21. The corresponding social welfare are the same since we use the same content-restriction coefficient.
\end{proof}

Finally, the combined mechanism is to compare $SW(\check{x}_H,\check{a})$ in Case.IR21 and $SW(\tilde{x}_H)$ in Case.IR2 to decide to incentivize $\check{x}_H$ or $\tilde{x}_H$. Essentially, this is a tradeoff between the half participation with path diversity and full participation with content-destruction.
\begin{theorem}\label{theorem6}
Depending on the relationship between $SW(\check{x}_H,\check{a})$ and $SW(\tilde{x}_H)$, we decide $a$, $g(x_H)$ and which equilibrium $\hat{x}_H$ to incentivize:
\begin{itemize}
  \item Full participation: If $SW(\check{x}_H,\check{a})\ge SW(\tilde{x}_H)$, it is optimal to keep both types' IR and incentivize $\hat{x}_H=\check{x}$ by choosing $a=\check{a}$ and $g(x_H)$ in (\ref{g1}). The corresponding social welfare is  $SW(\check{x}_H,\check{a})$.
  \item Half participation: If $SW(\check{x}_H,\check{a})<SW(\tilde{x}_H)$, it is optimal to satisfy only type-2's IR and incentivize $\hat{x}_H=\tilde{x}$ by choosing $a=1$ and $g(x_H)$ in (\ref{function}) with $b=0.5$. The corresponding social welfare is $SW(\tilde{x}_H)$.
\end{itemize}
\end{theorem}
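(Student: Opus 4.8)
The plan is to treat Theorem~\ref{theorem6} as the final selection step of an exhaustive case analysis: I would first enumerate every stable equilibrium that the combined mechanism can support, then show that all but two of these configurations are dominated, and finally reduce the design problem to a single two-way comparison. Because $\theta_1<\theta_2$, at an interior equilibrium (positive flow on both paths) at most one type can be indifferent between $H$ and $L$: if type-2 is indifferent then type-1 strictly prefers $L$ (Case.IR21, forcing $\hat{x}_H\in(0,0.5]$), while if type-1 is indifferent then type-2 strictly prefers $H$ (Case.IR12, forcing $\hat{x}_H\in[0.5,1)$). The remaining possibilities are the pure split in which all type-1 users sit on $L$ and all type-2 users on $H$ (giving $\hat{x}_H=0.5$ since $\eta=0.5$), the half-participation configuration in which type-1 is driven out and only type-2 participates (Case.IR2), and the two degenerate zero-diversity equilibria $\hat{x}_H=0$ and $\hat{x}_H=1$.

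Next I would eliminate the dominated configurations. Lemma~\ref{lemma3} already shows that any welfare attainable in Case.IR12 at some $(a,x_H)$ is matched or exceeded by the reflected equilibrium $1-x_H\in[0,0.5]$ in Case.IR21, which collects the same content value as in \eqref{content1} while paying the $H$-path travel cost on a smaller mass; Lemma~\ref{lemma4} shows that the pure split at $\hat{x}_H=0.5$ is reproducible inside Case.IR21 at the same restriction coefficient and therefore yields identical welfare. The degenerate $\hat{x}_H=0$ and $\hat{x}_H=1$ equilibria give welfare $\theta_0 a\underline{Q}\le\theta_0\underline{Q}$ and $\theta_0\underline{Q}-c_H$ respectively, both of which are dominated by the limiting point $\hat{x}_H\to0$, $a=1$ of Case.IR21 (welfare $\theta_0\underline{Q}$) and hence cannot exceed its optimum $SW(\check{x}_H,\check{a})$. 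Thus the only two surviving candidates are the full-participation optimum $SW(\check{x}_H,\check{a})$, obtained by maximising \eqref{maximisation1} subject to the type-1 IR constraint \eqref{condition2}, and the half-participation optimum $SW(\tilde{x}_H)$ from \eqref{half}.

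Having reduced the search to these two, the global optimum of the combined mechanism is simply $\max\{SW(\check{x}_H,\check{a}),\,SW(\tilde{x}_H)\}$, which is exactly the dichotomy asserted in the theorem. It then remains to check that each branch is implementable. In the full-participation branch I would set $a=\check{a}$ and use the side-payment \eqref{g1} with target $\hat{x}_H=\check{x}_H$; the convergence argument following \eqref{g1} makes $\check{x}_H$ the unique stable equilibrium, while \eqref{condition2} guarantees that type-1 users' IR holds so that both types indeed participate. In the half-participation branch I would set $a=1$ and use \eqref{function} with $b=0.5$, which collapses the combined mechanism to the pure side-payment scheme of Section~\ref{sidepayment}; by Proposition~\ref{lemma1} this stably incentivises $\tilde{x}_H$ in the single-type ($\theta_2$) system, with type-1 users optimally excluded.

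The step I expect to be the main obstacle is establishing completeness of the enumeration, that is, verifying that every stable equilibrium of the combined mechanism really does fall into one of the listed configurations and that no interior equilibrium in which both types are simultaneously indifferent (or some other exotic structure) can beat the two survivors. This hinges on the strict ordering $\theta_1<\theta_2$, which generically prevents both indifference conditions from holding at once, together with Lemmas~\ref{lemma3} and~\ref{lemma4} to collapse the $H$-heavy and pure-split possibilities onto Case.IR21. Once completeness is secured, the concluding comparison and the implementation check are routine.
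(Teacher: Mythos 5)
Your proposal follows essentially the same route as the paper: the paper likewise enumerates Case.IR21, Case.IR12, the pure split at $\hat{x}_H=0.5$, and Case.IR2, eliminates Case.IR12 via Lemma~\ref{lemma3} and the pure split via Lemma~\ref{lemma4}, and then reduces the theorem to the two-way comparison of $SW(\check{x}_H,\check{a})$ against $SW(\tilde{x}_H)$, implemented with \eqref{g1} and with \eqref{function} at $b=0.5$ respectively. Your only addition --- explicitly dominating the zero-diversity equilibria through the limit $\hat{x}_H\to 0$, $a=1$ --- is something the paper covers only by its standing caveat that $c_H$ is not too large, so the two arguments are substantively identical.
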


To understand why combined mechanism can achieve strictly better social welfare than any of the two individual mechanisms, we consider the payment penalty to $L$-path participants given by (\ref{condition8}). It is smaller than (\ref{condition}) without content-restriction. Note that if $c_H$ and $a$ are small enough, (\ref{condition8}) can be negative, i.e., $L$-path participants may be not penalised but even rewarded. Negative payments on $L$-path occurs because degradation of content on $L$-path is more critical than the cost on $H$-path. This allows us to relax the IR condition on type-1 users and increase the range of equilibria where all users participate.
\begin{proposition}
\label{prop11}
Facing heterogeneous users, the combined mechanism performs significantly better than either the side-payment or content-restriction mechanism, by increasing the price of anarchy $PoA_{ag}$ to more than 0.7 (i.e., $PoA_{ag}\ge0.7$).
\end{proposition}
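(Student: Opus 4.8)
The plan is to sandwich the ratio: bound the combined welfare $SW_{ag}$ from below by evaluating a few always-feasible operating points, bound the unconstrained optimum $SW(x_H^*)$ from above, and then minimise the resulting ratio over all instances. By Theorem \ref{theorem6}, $SW_{ag}=\max\{SW(\check{x}_H,\check{a}),SW(\tilde{x}_H)\}$; since the planner may always set $a=1$, $g\equiv0$ and route everyone on the $L$-path, I also record the trivial fallback $SW_{ag}\ge\theta_0\underline{Q}$. For the ceiling I would use $SW(x_H^*)\le\theta_0 Q(0.5,1)=\theta_0\overline{Q}$ (the cost term is nonnegative and $Q(\cdot,1)$ peaks at $0.5$), together with the structural identities $Q(0,1)=Q_1(1)=\underline{Q}$ and $Q(0.5,1)=2Q_1(0.5)=\overline{Q}$, which with concavity of $Q_1$ give $\underline{Q}\le\overline{Q}\le2\underline{Q}$.

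The decisive lower bound comes from the full-participation, path-diversity equilibrium of Case.IR21, where the synergy is explicit: the $L$-path charge in (\ref{condition8}) is strictly smaller than the pure side-payment charge (\ref{condition}), so a moderate restriction $a<1$ lets the type-1 IR constraint (\ref{condition2}) survive over a far wider range of $c_H$ than side payments alone. At the balanced target $\hat{x}_H=0.5$ the attainable welfare is $\tfrac12\theta_2\overline{Q}-\tfrac12 c_H+\tfrac12 a\theta_1\overline{Q}$; since this is increasing in $a$, I would take $a$ as large as (\ref{condition2}) permits, and solving that IR inequality for the largest feasible $a$ yields a closed-form lower bound on $SW(\check{x}_H,\check{a})$ involving only $\theta_1,\theta_2,c_H,\overline{Q}$. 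Taking the maximum of this quantity, the half-participation value $SW(\tilde{x}_H)$, and the no-diversity value $\theta_0\underline{Q}$ gives the lower bound on $SW_{ag}$ that I would carry into the optimisation. Allowing $\hat{x}_H<0.5$ (feasible through the payment function (\ref{g1})) is what rescues the regime where content saturates quickly, so I would keep that option available in the interior-cost case.

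Finally I would perform the minimisation. Normalising $\theta_0=\overline{Q}=1$ and writing $\theta_{1,2}=1\mp d$ with $d\in[0,1)$, all the bounds and the ceiling $\theta_0\overline{Q}$ become functions of $d$, the content ratio $\underline{Q}/\overline{Q}\in[\tfrac12,1]$, the half-participation content, and the normalised $c_H$. Because each ingredient is either fixed by $\underline{Q},\overline{Q}$ or is linear/concave in the free data, I expect the worst case to be attained at an extremal, piecewise-linear $Q_1$ of the type used in (\ref{fucdef}) and (\ref{fucdef5}), collapsing the problem to a finite optimisation whose value I anticipate to be at least $0.7$. The hardest part is exactly this optimisation: the coefficient $a$ is pinned by the type-1 IR constraint, so the achievable diversity welfare and the adversarial shape of $Q_1$ are coupled rather than separable, and one must verify that in \emph{every} cost regime the better of full participation, half participation and the no-diversity fallback stays above $0.7\,SW(x_H^*)$. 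The tight region is the intermediate $c_H$ where content cannot be collected at a vanishing $\hat{x}_H$ and the IR constraint forces $a$ strictly below $1$ simultaneously; establishing that even there the three fallbacks jointly clear the $0.7$ threshold is where the real work lies.
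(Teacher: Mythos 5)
Your skeleton matches the paper's: lower-bound $SW_{ag}$ by always-feasible operating points (the no-diversity fallback $\theta_0\underline{Q}$, the half-participation optimum $SW(\tilde{x}_H)$ from (\ref{half}), and the balanced Case.IR21 point), upper-bound the optimum, and minimise the ratio over instances. Indeed your balanced point is at least as strong as the paper's bound (\ref{swag1}): with $a=1-c_H/(\theta_2\overline{Q})$ the equilibrium charge in (\ref{g1}) vanishes, so type-1 IR (\ref{condition2}) holds automatically, and your ``largest IR-feasible $a$'' is at least this value. The fatal gap is your ceiling $SW(x_H^*)\le\theta_0\overline{Q}$: discarding the cost term $-x_H^*c_H$ makes the certified ratio collapse to $1/2$ in exactly the intermediate-cost regime you yourself flag as tight. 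Concretely, take $Q_1(x)=2qx$ on $[0,0.5]$ and $Q_1(x)=q$ on $[0.5,1]$, $\theta_2=1$, $\theta_1\to0$ (so $\theta_0\to 0.5$), and $c_H=q$. Then $\underline{Q}=q$, $\overline{Q}=2q$, $Q(x,0.5)\equiv q$, and one checks that the fallback, the half-participation value, and the Case.IR21 value (maximising the objective (\ref{maximisation1})) over all $\hat{x}_H\le 0.5$ and all $a$ allowed by (\ref{condition2}), which yields at most $q\hat{x}_H$) all equal $0.5q$, while $\theta_0\overline{Q}=q$. Your certified ratio is therefore $1/2$, although the true ratio in this instance is $1$, because $SW(x_H^*)=q-0.5c_H=0.5q$. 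So an honest minimisation with your denominator returns $1/2$, not a number above $0.7$; no strengthening of the numerator can repair this, since the slack sits entirely in the denominator. The paper avoids it by never relaxing the denominator: it writes $SW(x_H^*)=\tfrac{(1+\alpha)(s+t)}{2}-\delta c_H$ exactly, with $\theta_1=\alpha$, $\theta_2=1$, $s=Q_1(\delta)$, $t=Q_1(1-\delta)$ and optimum at $\delta$, and argues by contradiction from the assumption $\beta=SW_{ag}/SW(x_H^*)\in[0.5,0.75]$.

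The second gap is that the decisive step of your plan, the finite optimisation that is supposed to come out above $0.7$, is only anticipated, never executed; but that computation is essentially the entire content of the paper's proof. There it takes the form of four fallback-derived inequalities on $c_H$ (the no-incentive equilibrium gives an upper bound; the balanced combined point, the full-participation pure side-payment case with $a=1,b=1$ --- via an auxiliary flow $x_H=r\delta$ with $r=(3s+3t-4)/(4s+4t-4)$ at which type-1 IR must fail --- and the half-participation case with $a=1,b=0.5$ each give lower bounds), followed by a case split $\delta\in(0,0.25]$ versus $\delta\in[0.25,0.5)$ and a minimisation of the resulting rational functions over the feasible region $\delta\le s\le t$, $1-\delta\le t\le 1$, $\delta t\le(1-\delta)s$, which yields $\beta>0.7$. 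Note also that you carry only three operating points; the paper needs the fourth (the $a=1,b=1$ case, whose inequality is the only one involving $\alpha$ nontrivially) to close its contradiction, and it is not clear your three alone generate constraints strong enough for moderate $\theta_1$. To turn your proposal into a proof: keep the denominator exact, include the fourth fallback, and actually carry out the constrained minimisation --- at which point you will have reconstructed the paper's argument rather than found an alternative to it.
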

\begin{proof}
 Notice that the combined mechanism is better than the side-payment mechanism, we have $SW_{ag}\ge SW_g$. The routing equilibrium under optimal side-payment mechanism is better than the routing equilibrium without incentive design, i.e., $SW_g\ge\theta_0\underline{Q}$.
Now we will show
\begin{equation}
\label{swag1}
SW_{ag}\ge\theta_0\overline{Q}-\frac{\theta_1+\theta_2}{2\theta_2}c_H.
\end{equation}
 If $c_H>\theta_2(\overline{Q}-\underline{Q})$, we have $$SW_{ag}\ge SW_g\ge\theta_0\underline{Q}>\theta_0\overline{Q}-\frac{\theta_1+\theta_2}{2\theta_2}c_H.$$
 If $c_H\le\theta_2(\overline{Q}-\underline{Q})$, we can incentivize $0.5$ as a unique and stable equilibrium by choosing $$a=1-\frac{c_H}{\theta_2\overline{Q}},$$
 and $g(x_H)$ given by (\ref{g1}) with $\hat{x}_H=0.5$. In this equilibrium, all type $\theta_1$ users choose $L$-path and receive a payoff given by $\theta_1a\overline{Q}$ while all type $\theta_1$ users choose $H$-path and receive a payoff given by $\theta_2a\overline{Q}$. Then the social welfare attained at this equilibrium is given by
$$0.5(\theta_1+\theta_2)a\overline{Q}=\theta_0\overline{Q}-\frac{\theta_1+\theta_2}{2\theta_2}c_H.$$

To show $PoA_{ag}\ge0.7$, we need to show
$$\min\limits_{\theta_i,c_H,Q(\cdot,1)}\frac{SW_{ag}}{SW(x_H^*)}\ge0.7.$$
Now we assume there exists an instance which is defined by the parameters $\theta_1$ $\theta_2$, $c_H$ and function $Q_1(\cdot)$ satisfies that
$$\frac{SW_{ag}}{SW(x_H^*)}=\beta,\ \beta\in[0.5,0.75].$$
Without loss of generality, we assume in this instance $\theta_1=\alpha\in[0,1)$, $\theta_2=1$, $\underline{Q}=1$ and the social optimum is attained for $x_H^*=\delta\in(0,0.5)$. We let $Q_1(\delta)=s$ and $Q_1(1-\delta)=t$ where $s,t$ should satisfy $\delta\le s\le t$, $1-\delta\le t\le 1$ and $\delta t\le(1-\delta)s$ due to our assumption on $Q_1(x)$. Then the social optimum is $$SW(x_H^*)=\frac{(1+\alpha)(s+t)}{2}-\delta c_H.$$
Because of $SW_{ag}\ge \theta_0\underline{Q}$, $SW_{ag}\ge SW_g$ and (\ref{swag1}), one can easily check the following
\begin{enumerate}
  \item The routing equilibrium without incentive design: $SW=(1+\alpha)/2$ should be less than $\beta SW(\delta)$, thus, $$\frac{1+\alpha}{2}<\frac{\beta(1+\alpha)(s+t)}{2}-\beta\delta c_H$$
      It follows that
      $$c_H<\frac{(1+\alpha)(\beta(s+t)-1)}{2\beta\delta}.$$
  \item Content-restriction with $a=1-c_H/(\theta_2\overline{Q})$ and $g(x_H)$ given by (\ref{g1}) with $\hat{x}_H=0.5$: \begin{eqnarray}
      SW_{ag}&\ge&\theta_0\overline{Q}-\frac{\theta_1+\theta_2}{2\theta_2}c_H=\frac{(1+\alpha)(\overline{Q}-c_H)}{2}\nonumber\\
      &\ge&\frac{(1+\alpha)(s+t-c_H)}{2},\nonumber
      \end{eqnarray}
       which should be less than $\beta SW(\delta)$, it follows that
       $$c_H>\frac{(1-\beta)(1+\alpha)(s+t)}{(1+\alpha)-2\beta\delta}.$$
  \item Full participation case with $a=1$ and $b=1$: Let $r=(3s+3t-4)/(4s+4t-4)$, and note that
        $$r(s+t)+(1-r)=\frac{3}{4}(s+t),\ r<\frac{3}{4}.$$
        Since
        \begin{eqnarray}
        &&\theta_0Q(r\delta,1)-r\delta c_H c_H\nonumber\\
        &\ge&\theta_0(rQ(\delta,1)+(1-r)Q(0,1))-\frac{3}{4}\delta\nonumber\\
        &\ge&\beta SW(\delta),\nonumber
        \end{eqnarray}

        IR constraint can not hold for type-1 at $x_H=r\delta$. Thus,
        $$\frac{3}{4}\theta_1(\delta,1)-r\delta c_H\le\theta_1Q(r\delta,1)-r\delta c_H<0.$$
        It follows that
        $$c_H>\frac{\alpha(3s+3t-3)(s+t)}{(3s+st-4)\delta}.$$
  \item Half participation case with $a=1$ and $b=0.5$: If $\delta\in[0,0.25]$, we can obtain at least
        $$0.5\theta_2Q(\delta,0.5)-\delta c_H$$
        which should be less than $\beta SW(\delta)$. Note that $0.5-\delta\ge\delta$ and hence $$Q_1(0.5-\delta)\ge\frac{1}{2-4\delta}Q_1(\delta)+\frac{1-4\delta}{2-4\delta}Q_1(1-\delta).$$
        It follows that
        $$c_H>\frac{(3-4\delta)s+(1-4\delta)t-\beta(1+\alpha)(s+t)(2-4\delta)}{2\delta(1-\beta)(2-4\delta)},$$
        when $\delta\in[0,0.25]$. If $\delta\in[0.25,0.5]$, we can obtain at least
        $$0.5\theta_2Q(0.25,0.5)-0.25 c_H\ge\frac{\theta_2Q_1(\delta)}{4\delta}-0.25c_H,$$
        which should be less than $\beta SW(\delta)$. It follows that either
        $$\beta\ge\frac{1}{4\delta},$$
        or
        $$c_H>\frac{s-2\delta\beta(1+\alpha)(s+t)}{\delta-4\beta\delta^2}.$$
\end{enumerate}

When $\delta\in(0,0.25]$, we have
\begin{equation}
\label{fucdef5}
\left\{
\begin{array}{l}
c_H<\frac{(1+\alpha)(\beta(s+t)-1)}{2\beta\delta},\\
c_H>\frac{(1-\beta)(1+\alpha)(s+t)}{(1+\alpha)-2\beta\delta},\\
c_H>\frac{\alpha(3s+3t-3)(s+t)}{(3s+st-4)\delta},\\
c_H>\frac{(3-4\delta)s+(1-4\delta)t-\beta(1+\alpha)(s+t)(2-4\delta)}{2\delta(1-\beta)(2-4\delta)}.
\nonumber\end{array}
\right.
\end{equation}
It follows that, 
 {
  \footnotesize
 \begin{equation}
 \beta>\max\{\frac{-3s-3t+4\delta-3s\delta-3t\delta}{(s+t)(-3s-3t-2\delta +3s\delta+3t\delta)},\frac{24 s - 21 s^2 + 16 t - 36 s t - 15 t^2 - 40 s \delta +
 36 s^2 \delta - 40 t \delta + 72 s t \delta + 36 t^2 \delta}{(s+t)(12 - 6 s - 3 s^2 - 2 t - 12 s t - 9 t^2 - 24 \delta +
 8 s \delta + 12 s^2 \delta + 8 t \delta + 24 s t \delta +
 12 t^2 \delta)}\}.\nonumber
\end{equation}
 }By minimizing the right hand side of 
 {
  \footnotesize
 \begin{equation}
\max\{\frac{-3s-3t+4\delta-3s\delta-3t\delta}{(s+t)(-3s-3t-2\delta +3s\delta+3t\delta)},\frac{24 s - 21 s^2 + 16 t - 36 s t - 15 t^2 - 40 s \delta +
 36 s^2 \delta - 40 t \delta + 72 s t \delta + 36 t^2 \delta}{(s+t)(12 - 6 s - 3 s^2 - 2 t - 12 s t - 9 t^2 - 24 \delta +
 8 s \delta + 12 s^2 \delta + 8 t \delta + 24 s t \delta +
 12 t^2 \delta)}\}\nonumber
\end{equation}
 }in the feasible region of $s,t,\delta$, i.e.,
\begin{equation}
\delta\le s\le t,\ 1-\delta\le t\le 1,\ \delta t\le(1-\delta)s,\ \delta\in(0,0.25],\nonumber
\end{equation}
we get $\beta>0.7$.

When $\delta\in[0.25,0.5)$, we have either
\begin{equation}
\label{fucdef6}
\left\{
\begin{array}{l}
c_H<\frac{(1+\alpha)(\beta(s+t)-1)}{2\beta\delta},\\
c_H>\frac{(1-\beta)(1+\alpha)(s+t)}{(1+\alpha)-2\beta\delta},\\
c_H>\frac{\alpha(3s+3t-3)(s+t)}{(3s+st-4)\delta},\\
c_H>\frac{s-2\delta\beta(1+\alpha)(s+t)}{\delta-4\beta\delta^2},
\nonumber\end{array}
\right.
\end{equation}
or
\begin{equation}
\label{fucdef7}
\left\{
\begin{array}{l}
c_H<\frac{(1+\alpha)(\beta(s+t)-1)}{2\beta\delta},\\
c_H>\frac{(1-\beta)(1+\alpha)(s+t)}{(1+\alpha)-2\beta\delta},\\
c_H>\frac{\alpha(3s+3t-3)(s+t)}{(3s+st-4)\delta},\\
\beta\ge\frac{1}{4\delta}.
\nonumber\end{array}
\right.
\end{equation}
It follows that either
{
  \footnotesize
 \begin{equation}
 \beta>\max\{\frac{-3s-3t+4\delta-3s\delta-3t\delta}{(s+t)(-3s-3t-2\delta +3s\delta+3t\delta)},\frac{1}{4\delta}\}.\nonumber
\end{equation}
 }or 
{
  \footnotesize
 \begin{equation}
 \beta>\max\{\frac{-3s-3t+4\delta-3s\delta-3t\delta}{(s+t)(-3s-3t-2\delta +3s\delta+3t\delta)},\frac{-7 s + 6 s^2 - 3 t + 9 s t +
 3 t^2}{(s + t) (-s - 3 t + 3 s t + 3 t^2 - 12 \delta +
   12 s \delta + 12 t \delta)}\}.\nonumber
\end{equation}
 }By minimizing
 {
  \footnotesize
 \begin{equation}
\max\{\frac{-3s-3t+4\delta-3s\delta-3t\delta}{(s+t)(-3s-3t-2\delta +3s\delta+3t\delta)},\frac{1}{4\delta}\}\nonumber
\end{equation}
 }and
{
  \footnotesize
 \begin{equation}
 \max\{\frac{-3s-3t+4\delta-3s\delta-3t\delta}{(s+t)(-3s-3t-2\delta +3s\delta+3t\delta)},\frac{-7 s + 6 s^2 - 3 t + 9 s t +
 3 t^2}{(s + t) (-s - 3 t + 3 s t + 3 t^2 - 12 \delta +
   12 s \delta + 12 t \delta)}\}\nonumber
\end{equation}
 }in the feasible region of $s,t,\delta$, i.e.,
  \begin{equation}
\delta\le s\le t,\ 1-\delta\le t\le 1,\ \delta t\le(1-\delta)s,\ \delta\in[0.25,0.5),\nonumber
\end{equation}
we get $\beta>0.7$.

In conclusion, we have proven that if $\beta\le0.75$, it must follow that $\beta>0.7$. Therefore, $PoA_{ag}>0.7$.
\end{proof}
\vspace{-5pt}
\section{Simulation Results}\label{simulation}
\begin{figure}[!t]
  \centering
  \includegraphics[width=3in]{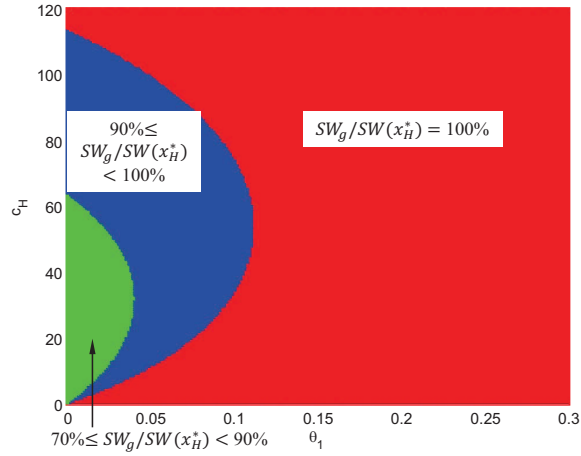}
  \caption{The ratio $SW_g/SW(x_H^*)$ between the optimum SW value under the side-payment and the social optimum $SW(x_H^*)$. This ratio is larger than 70\% globally. We set $\theta_0=0.5$ and use $Q_1(x_H)$ from (\ref{Q}). We observe  inefficiency when $\theta_1$ decreases while $c_H$ remain at some moderately low value}
  \label{swg}
\end{figure}
\begin{figure}[!t]
 \centering
  \includegraphics[width=3in]{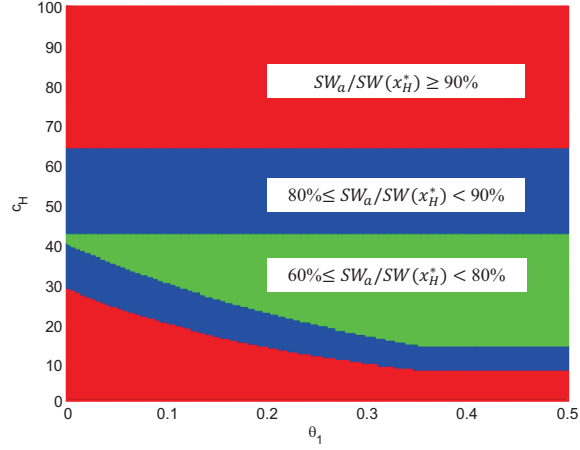}
  \caption{The ratio $SW_a/SW(x_H^*)$ between the optimum SW value under the content-restriction and the social optimum $SW(x_H^*)$. The ratio is larger than 60\% globally. We set $\theta_0=0.5$ and use $Q_1(x_H)$ from (\ref{Q}). .}
  \label{swa}
\end{figure}
In this section, we use extensive simulations to show the performances of the side-payment and content-restriction mechanisms in the heterogeneous user case.\footnote{We skip the homogeneous user case here, as Proposition~\ref{theorem1} shows that the side-payment already achieves the social optimum there.} Recall that both these two mechanisms can be as bad as the original system without incentives, as we have $PoA_g=PoA_a=1/2$ in the worst case by choosing an arbitrary content function $Q_1(x_H)$. In the following, we reasonably specify $Q_1(x_H)$ as (\ref{Q}) and examine the performances by varying the other parameter values.

 Figure \ref{swg} compares the maximum social welfare achieved by the side-payment with the social optimum, by showing the ratio $SW_g/SW(x_H^*)$ under different $\theta_1$ and $c_H$ values. By fixing the average valuation $\theta_0$ as 0.5, we can vary $\theta_1$ value to change user diversity. We observe that as long as $\theta_1$ is large, our side-payment mechanism can achieve the social optimum with $SW_g/SW(x_H^*))=100\%$. This explains this mechanism performs well when users are not diverse. Even when $\theta_1$ is small, this mechanism is still efficient to achieve $SW_g/SW(x_H^*))\geq70\%$. When $c_H$ is zero, users will automatically reach path-diversity and this ratio is 1 in Fig. \ref{swg}. When $c_H$ is too large, both this mechanism and the social optimum do not assign any user to explore the H-path. Thus, the ratio is also 1.

 Figure \ref{swa} compares the maximum social welfare achieved by the content-restriction with the social optimum, by showing the ratio $SW_a/SW(x_H^*))$ under different $\theta_1$ and
$c_H$ values. Unlike the side-payment, this mechanism performs well when $\theta_1$ is small and users are diverse. Even when $\theta_1$ is large, this mechanism is still efficient to achieve $SW_a/SW(x_H^*))\geq 60\%$. Similar to Fig.~\ref{swg}, when $c_H$ is zero or very high, the ratio is 1 in Fig. \ref{swa} without any efficiency loss.
\begin{figure}[!t]
  \centering
  \includegraphics[width=3in]{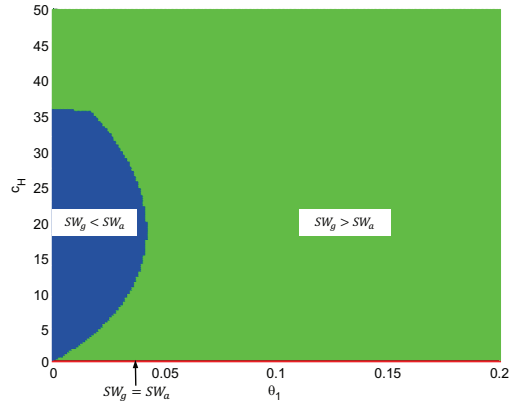}
  \caption{Performance comparison between side-payment and content-restriction in terms of the achieved social welfare under different $\theta_1$ and $c_H$ values. We observe that in most cases we have $SW_g> SW_a$,
and $SW_g< SW_a$ only when users are diverse (with small $¦È_1$) and $c_H$ is small. Here, we set $\theta_0=0.5$ and use specified $Q_1(x_H)$ in (\ref{Q}).}
  \label{figure5}
\end{figure}

 As $PoA_a=PoA_g=1/2$ in the worst case, it is difficult to analytically compare the two mechanisms' performances. Figure \ref{figure5} numerically compares the side-payment and content-restriction mechanisms in terms of social welfare ($SW_a$ versus $SW_g$) under different values of $\theta_1$ and $c_H$. We observe that in most cases we have $SW_g>SW_a$, and $SW_g<SW_a$ only when users are diverse (with small $\theta_1$) and $c_H$ is small. When users are diverse, the type-1 users' IR becomes a problem under the side-payment, which may sacrifice type-1 users' participation to reach path-diversity in the half-participation case. Differently, the content-restriction mechanism does not worry about IR and can achieve prefect path-diversity in the full-participation case. Thus, $SW_g<SW_a$ when both $\theta_1$ and $c_H$ are small. Note that when $c_H$ is small, to achieve the path-diversity, the content-restriction does not need to destruct much information to motivate users to H-path.
\begin{figure}[!t]
  \centering
  \includegraphics[width=3in]{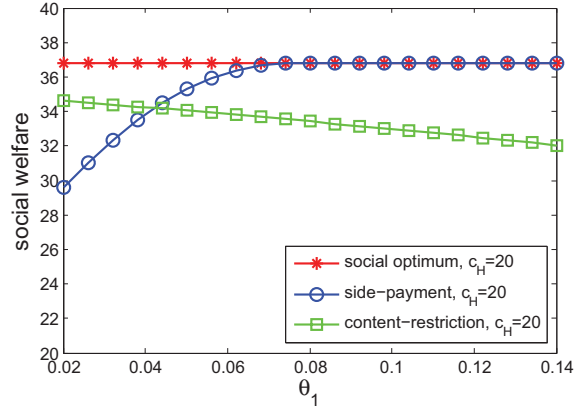}
  \caption{Comparison between $SW(x_H^*)$, $SW_g$ and $SW_a$ under different values of $\theta_1$. We observe $SW_a$ $>SW_g$ for small $\theta_1$ and $SW_a<SW_g$ for large $\theta_1$. We set $\theta_0=0.5$ and use $Q_1(x_H)$ in (\ref{Q}).}
  \label{figure6}
\end{figure}

Figure \ref{figure6} further shows the social welfare values achieved by the two mechanisms and compares to the social optimum. When $c_H$ is small (e.g., $c_H=20$ in Fig. \ref{figure6}), we have $SW_a>SW_g$ for small $\theta_1\in [0, 0.042]$. But when users are less diverse and $\theta_1$ increases beyond 0.042, we have $SW_a<SW_g$. When $\theta_1$ is large enough, $SW_g$ achieves the social optimum $SW(x_H^*))$.
\section{Network Model Generalization}\label{generalization}
Unlike our simple two-path model in Fig.~1, people in practice may have more than two choices of routes if they would like to travel from one point to another. In addition, different routes may have some overlap with each other. In this section, we consider a more general network with $K\geq 3$ parallel paths following a path overlap in Fig.~8, and want to answer the following questions:
\begin{itemize}
\item How does the path multiplicity change the original social efficiency without incentive design?
\item How to apply the side-payment and content-restriction mechanisms to this generalized network model and what are the new challenges?
\item Will our main results developed in previous sections still hold here and what are the new insights?
\end{itemize}
\subsection{New Network Model and Equilibrium}
\begin{figure}[!t]
  \centering
  \includegraphics[width=3in]{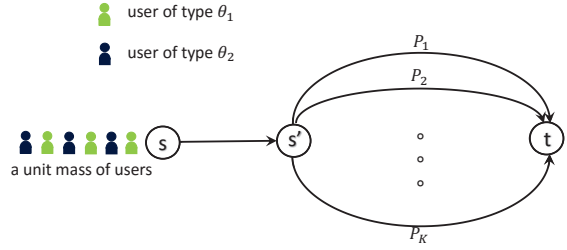}
  \caption{ There are $K$ possible routes from $s$ to $t$ and all routes contain a common link from $s$ to intermediary $s^\prime$ which contains $N_0$ pieces of information.  From $s^\prime$ to $t$, users must select one among $K$ possible links,  denoted as $P_1$, $P_2$, $\cdots$, $P_K$. Each of these parallel links uniformly contains $N/K$ pieces of information, keeping the total information constant. }
  \label{figuremulti}
\end{figure}
As shown in Fig.~\ref{figuremulti}, we consider $K$ routes from $s$ to $t$ and all routes have a common link from $s$ to intermediary $s^\prime$ which contains $N_0$ pieces of information. As all users go through this common link, their collected information is
\begin{equation}
\label{Q0}
Q_0=N_0\big(1-(1-\frac{\phi}{N_0})^{n}\big),
\nonumber\end{equation}
which is calculated as in (\ref{Q}). Without much loss of generality, we assume traveling on the common link from $s$ to $s^\prime$ generates no cost. From $s^\prime$ to $t$, users must select one out of $K$ links, denoted as $P_1$, $P_2$, $\cdots$, $P_K$. Each of these parallel links uniformly contains $N/K$ pieces of information. To keep the average travel cost over all $K$ paths at $c_H/2$, i.e., the same as in the prior two-path model,, we set the cost over any path $P_k$ as
$$c_k=\frac{k-1}{K-1}c_H, \ \ \ \forall k\in\{1,2,\cdots,K\}.$$
We summarize the unit mass of users' routing choices as non-negative flows $(x_1,x_2,\cdots,x_K)$ such that $\sum_{k=1}^K x_k=1$.
Similar to (\ref{content1}) in the two-path model, we define the total information content collected and aggregated by the users from the $K$ routes as $Q(x_1,x_2,\cdots,x_K)$, which is given by
\begin{equation}
Q(x_1,x_2,\cdots,x_K)=Q_0+\sum\limits_{k=1}^{K}Q_1(x_k),\nonumber
\end{equation}
where $Q_1(x_k)$ denotes the 
information content collected over a single link $P_k$ when a fraction $x_i$ of the users travel over that link. Similar to the specific $Q_1(\cdot)$ function in (\ref{Q}) defined in Section \ref{systemmodel}, now we have
\begin{equation}
Q_1(x_k)=\frac{N}{K}\big(1-(1-\frac{K\phi}{N})^{nx_k}\big) \,.\nonumber
\end{equation}
We assume all users are of the same type (i.e., $\theta_1=\theta_2=\theta_0$) for tractability of analysis.\footnote{We plan to study the two heterogeneous user types for the general network model in the future, where the IR constraint may be violated for type-1 users as in side-payment mechanism design in Section~\ref{subsection:heterogeneous}.}
Similar to (\ref{payoff}) in Section \ref{systemmodel}, here the payoff  function of a homogeneous user by choosing path $P_k$ is
\begin{equation}
u(P_k,x_1,x_2,\cdots,x_K)=\theta_0Q(x_1,\cdots,x_K)-c_k.
\end{equation}
In the following, we similarly define the routing equilibrium $(\hat{x}_1,\hat{x}_2,\cdots,\hat{x}_K)$ as in Definition \ref{def1}.
\begin{definition}
\label{def2}
A feasible flow $(\hat{x}_1,\hat{x}_2,\cdots,\hat{x}_K)$ in the content routing game  is an equilibrium if no user traveling over any of the $K$ paths will
profit by  deviating from her current path choice to increase her payoff.
\end{definition}

The social welfare now becomes
\begin{equation}
SW(x_1,x_2\cdots,x_K)=\theta_0Q(x_1,x_2\cdots,x_K)-\sum\limits_{k=1}^{K}c_kx_k,\label{eq:SWnew}
\end{equation}
and the social optimum flow $(x_1^*,x_2^*\cdots,x_K^*)$ is to maximize (\ref{eq:SWnew}). 

Since path $P_1$ does not incur any cost ($c_1=0$), all users are better off by choosing path $P_1$. We have the following result by generalizing Propositions~\ref{prop} and \ref{prop2} (where $K=2$) via a similar proof analysis.
\begin{proposition}
There exists a unique equilibrium for the content routing game: $\hat{x}_1=1$ and $\hat{x}_2=\cdots=\hat{x}_K=0$. The price of anarchy of the content routing game without incentive design is $PoA=1/K$.
\end{proposition}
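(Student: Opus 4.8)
The plan is to mirror the two-path arguments of Propositions~\ref{prop} and~\ref{prop2}, with the balanced split $x_k=1/K$ now playing the role that $x_H=0.5$ played there. For uniqueness, I would first note that by the non-atomic assumption an individual user's choice has infinitesimal effect on $Q(x_1,\dots,x_K)$, so the perceived content value $\theta_0 Q$ is identical across all paths and each user simply minimises her travel cost $c_k=\frac{k-1}{K-1}c_H$. Since $c_H>0$ gives $c_1=0<c_k$ for every $k\ge2$, any user on a path $P_k$ with $k\ge 2$ strictly gains by switching to $P_1$; hence $\hat x_1=1,\ \hat x_2=\cdots=\hat x_K=0$ is the unique (and stable) equilibrium, with welfare $\theta_0(Q_0+Q_1(1))$ because $Q_1(0)=0$ and $c_1=0$.

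For the price of anarchy I would bound the optimum from above by discarding the nonnegative cost term and using that $Q_1$ is increasing and concave with $Q_1(0)=0$: by symmetry of the $K$ links and Jensen's inequality, $\sum_{k=1}^K Q_1(x_k)$ under $\sum_k x_k=1$ is maximised at $x_k=1/K$, so $SW(x_1^*,\dots,x_K^*)\le \theta_0(Q_0+KQ_1(1/K))$. Combining with the equilibrium welfare gives
\begin{equation}
PoA\ge\min_{Q_0\ge0,\,Q_1}\frac{Q_0+Q_1(1)}{Q_0+KQ_1(1/K)}\ge\min_{Q_1}\frac{Q_1(1)}{KQ_1(1/K)}\ge\frac1K,\nonumber
\end{equation}
where the middle inequality uses concavity ($Q_1(1)\le KQ_1(1/K)$, so adding the common $Q_0$ only pushes the ratio up toward $1$) and the last uses monotonicity ($Q_1(1)\ge Q_1(1/K)$). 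This is precisely the $K$-path analogue of the chain $Q_1(1)/(2Q_1(0.5))\ge 1/2$ used in Proposition~\ref{prop2}.

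For tightness I would reuse the step-content construction of~\eqref{fucdef}: set $Q_0=0$ and $Q_1(x)=Kqx$ on $[0,1/K]$, $Q_1(x)=q$ on $[1/K,1]$, so that $Q_1(1)=Q_1(1/K)=q$ and the aggregate content attains its maximum $Kq$ only at the balanced flow. For small $c_H$ the social optimum is then the balanced flow with value $\theta_0 Kq-\tfrac12 c_H$, whereas the unique equilibrium yields $\theta_0 q$; minimising the resulting ratio $\theta_0 q/(\theta_0 Kq-\tfrac12 c_H)$ over admissible $c_H$ (i.e.\ letting $c_H\rightarrow0$) attains $1/K$ and matches the lower bound.

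The step I expect to be the main obstacle is this tightness construction, because unlike the symmetric two-path case the $K$ links now carry distinct costs $c_k$, so I must confirm that for small $c_H$ the planner really loads all $K$ links equally rather than avoiding the expensive ones. The clean way is to compare the balanced-on-all-$K$-links welfare against the balanced-on-$m$-links welfare and show the increment $\theta_0 q-\frac{c_H}{2(K-1)}$ is positive exactly when $c_H<2(K-1)\theta_0 q$, so that the full balanced flow is optimal in the regime of interest before passing to $c_H\rightarrow0$. The remaining ingredients---uniqueness and the Jensen bound $\sum_k Q_1(x_k)\le KQ_1(1/K)$---are routine generalisations of the $K=2$ arguments.
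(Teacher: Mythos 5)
Your proposal is correct and follows essentially the same route as the paper's proof: uniqueness via cost-minimisation under path-independent perceived content value, the lower bound $PoA\ge 1/K$ by bounding the optimum with the content-maximising balanced flow $x_k=1/K$, and tightness via the identical step-function construction $Q_1(x)=\min\{Kqx,q\}$ with $Q_0=0$ and $c_H\rightarrow 0$. The only difference is that you explicitly justify steps the paper asserts (Jensen for the balanced-flow bound, the mediant inequality for the $Q_0$ term), and your worry about identifying the exact optimum is unnecessary, since the tightness argument only needs $SW(x_1^*,\dots,x_K^*)\ge SW(1/K,\dots,1/K)$, which holds by definition of the optimum.
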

\begin{proof}
Let $S_K$ denote the set of all feasible flow partitions, i.e.,
$$S_K:=\{(x_1,x_2,\cdots,x_K)|x_1,x_2,\cdots,x_K\in[0,1]\mbox{ and } x_1+x_2+\cdots+x_K=1\}.$$
Note that the maximal quantity of content is attained by letting $1/K$ flow go through each path, hence
\begin{eqnarray}
SW(x_1^*,x_2^*,\cdots,x_K^*)&\le&\max\limits_{(x_1,x_2,\cdots,x_K)\in S_K}\{\theta_0Q(x_1,x_2,\cdots,x_K)\}\nonumber\\
&=&3\theta_0Q_1(\frac{1}{K})+\theta_0Q_0.\nonumber
\end{eqnarray}
Thus the price of anarchy is lower-bounded as follows:
$$PoA\ge\min\limits_{\theta_i,Q_1(\cdot),c_H}\frac{\theta_0Q_1(1)+\theta_0Q_0}{K\theta_0Q_1(\frac{1}{K})+\theta_0Q_0}\ge\frac{1}{K}.$$

We can show that the bound is also tight and hence $PoA=1/K$  by constructing a suitable function $Q_1(x)$. Let
\begin{equation}
\label{fucdef}
Q_1(x)=\left\{
\begin{array}{ll}
Kqx &0\le x\le\frac{1}{K};\\
q &\frac{1}{K}\le x\le1.
\end{array}
\right.
\end{equation}
Consider $Q_0=0$ and any fixed $c_H>0$. Using this setup, the $PoA$ for this specific instance becomes
$$PoA\le\min\limits_{\theta_i,c_H}\frac{\theta_0Q_1(1)}{SW(\frac{1}{K},\frac{1}{K},\cdots,\frac{1}{K})}=\min\limits_{\theta_i,c_H}\frac{\theta_0q}{K\theta_0q-\sum\limits_{k=1}^{K}\frac{c_k}{K}}=\frac{1}{K}.$$
This completes the proof that $PoA=1/K$.
\end{proof}

The equilibrium completely ignores the content on all other paths except path $P_1$, leading to a low $POA=1/K$.  As the path multiplicity $K$ increases to infinity, the price of anarchy and social efficiency decrease to zero. Thus, mechanism design is more needed in a multi-path network.  The path overlap does not make any difference because all users must travel through the overlapping path and collect the fixed content $Q_0$. For illustration purpose and without much loss of generality, in the following mechanism design we focus on $K=3$ and also explain the results for arbitrary $K$.
\subsection{Side-payment as Incentive}
When applying side-payment to the multi-path model, we design individual side-payment functions $g^k(x_1,x_2,\cdots,x_K)$ for participants over path $P_k$ with $k\in\{1,2,\cdots,K\}$. 
Unlike payment function \eqref{function} in Proposition~\ref{lemma1}, now it is more complicated to design $K$ many side-payment functions to incentivize the social welfare maximizer $(x_1^*,x_2^*\cdots,x_K^*)$ as the equilibrium target, while maintaining the budget balanced. Moreover, given a user has more routing options to deviate from the existing choice, the stability of equilibrium is more difficult to prove even in the three-path network. In that respect we have introduced more sophisticated Lyapunov function techniques to ensure and prove the target equilibrium's asymptotic stability. This allows us to successfully design side-payment functions for the three paths which are shown in Proposition \ref{prop6}. For the more general case $K>3$, similar approach applies.  

\begin{proposition}
\label{prop6}
In a three-path network model with travel costs $(c_1, c_2, c_3)=(0, c_H/2, c_H)$, to incentivize users to choose paths $P_2$ and $P_3$  and converge from any initial flow to the social welfare maximizer $(x_1^*, x_2^*, x_3^*)$,  we design side-payment functions on paths $P_1$ and $P_2$  as
\begin{equation}
\label{payment2}
\left\{
\begin{array}{l}
g_{(x_1^*,x_2^*,x_3^*)}^1(x_1,x_2,x_3)=\frac{(1-x_2-x_3)(x_2^*c_2+x_3^*c_3)}{1-x_2^*-x_3^*}, \\
g_{(x_1^*,x_2^*,x_3^*)}^2(x_1,x_2,x_3)=-\frac{(1-x_2-x_3)^2x_2^*c_2}{(1-x_2^*-x_3^*)x_2}+x_3(c_3-c_2).
\end{array}
\right.
\end{equation}
Note that as expected the side-payment on path $P_1$ is larger than that on $P_2$ (i.e., $g^1>g^2$). This side-payment mechanism perfectly achieves price of anarchy $PoA_g$ $=1$ without any loss of system efficiency.
\end{proposition}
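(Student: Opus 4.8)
The plan is to establish the three static properties of the mechanism — that $(x_1^*,x_2^*,x_3^*)$ is an equilibrium, that the scheme is budget balanced, and that individual rationality holds — and then to devote the main effort to the dynamic claim of global convergence via a Lyapunov argument. I would first record the \emph{effective cost} seen by a user on path $P_k$, namely $\tilde{c}_k := c_k + g^k$, since the common information term $\theta_0 Q$ cancels from every pairwise comparison and a user strictly prefers the path of smallest $\tilde{c}_k$. A short computation shows that the third payment is forced by budget balance: imposing $x_1 g^1 + x_2 g^2 + x_3 g^3 \equiv 0$ for every flow yields $g^3 = -(x_1 g^1 + x_2 g^2)/x_3$, and substituting the stated $g^1,g^2$ produces a closed form carrying the same $1/x_3$ barrier that $g^2$ carries in $1/x_2$.

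The first block of the proof is then purely algebraic. Evaluating $\tilde{c}_1,\tilde{c}_2,\tilde{c}_3$ at the target flow and using $x_1^*+x_2^*+x_3^*=1$, I would verify that all three collapse to the common value $x_2^* c_2 + x_3^* c_3$, so that at $(x_1^*,x_2^*,x_3^*)$ no user can profit by switching, which is exactly Definition~\ref{def2}. Budget balance holds by the construction of $g^3$ above, and individual rationality follows as in Proposition~\ref{theorem1}: with homogeneous users the equilibrium payoff is $\theta_0 Q(x^*) - (x_2^* c_2 + x_3^* c_3)$, nonnegative whenever the social optimum itself is, so IR never binds. Since the incentivised flow is the unconstrained welfare maximiser, $PoA_g = 1$ follows at once, mirroring the two-path reasoning of Propositions~\ref{lemma1} and~\ref{theorem1}.

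The real work is the convergence claim. I would posit the standard non-atomic (replicator) adjustment $\dot{x}_k = x_k(\bar{\tilde{c}} - \tilde{c}_k)$ with $\bar{\tilde{c}} = \sum_j x_j \tilde{c}_j$; because budget balance makes $\sum_j x_j g^j \equiv 0$, the average effective cost simplifies to $\bar{\tilde{c}} = x_2 c_2 + x_3 c_3$. As a Lyapunov candidate I would take the relative entropy $V(x) = \sum_k x_k^* \ln(x_k^*/x_k)$, positive on the interior of the simplex and vanishing only at $x^*$. Differentiating along the dynamics gives $\dot{V} = \sum_k (x_k^* - x_k)\,\tilde{c}_k(x)$, so global asymptotic stability reduces to the variational inequality $\sum_k (x_k - x_k^*)\,\tilde{c}_k(x) > 0$ for all interior $x \ne x^*$. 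Here the engineered singular terms do the heavy lifting: as $x_2 \rightarrow 0$ (resp. $x_3 \rightarrow 0$) the subsidy in $g^2$ (resp. $g^3$) drives $\tilde{c}_2 \rightarrow -\infty$ (resp. $\tilde{c}_3 \rightarrow -\infty$), while $\tilde{c}_1 \rightarrow 0$ as $x_1 \rightarrow 0$ makes $P_1$ the strictly cheapest path there, so every face of the simplex is a repeller and trajectories remain interior.

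The step I expect to be the main obstacle is verifying this variational inequality \emph{globally} — that the map $x \mapsto \tilde{c}(x)$ is monotone with $x^*$ as its unique root — because the payments are non-separable (each $\tilde{c}_k$ couples all of $x_1,x_2,x_3$) and singular, so the clean separable potential available for Proposition~\ref{lemma1} cannot be used. If the plain entropy does not yield sign-definiteness directly, I would fall back on a composite Lyapunov function adding the quadratic $\tfrac{1}{2}\sum_k (x_k - x_k^*)^2$, or else establish local asymptotic stability by computing the Jacobian of $\tilde{c}$ at $x^*$ and then patch in the boundary-repulsion estimates above to upgrade to global convergence. The same template extends to $K>3$ by indexing the barrier terms over $P_2,\dots,P_K$.
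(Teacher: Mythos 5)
Your static analysis is correct and essentially identical to the paper's: you verify that the effective costs $\tilde{c}_k=c_k+g^k$ all equal $x_2^*c_2+x_3^*c_3$ at the target flow, recover $g^3$ from budget balance (which matches the paper's refund $-\frac{x_1^2x_3^*c_3}{x_1^*x_3}+x_2(c_2-c_3)$), and get IR and $PoA_g=1$ exactly as the paper does. The genuine gap is in the convergence claim, which is the actual content of the proposition. You correctly reduce global asymptotic stability under replicator dynamics to the variational inequality $\sum_k(x_k-x_k^*)\,\tilde{c}_k(x)>0$ for interior $x\neq x^*$, but then you stop: you declare this verification ``the main obstacle'' and list fallback strategies without executing any of them. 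Since the whole point of the proposition is that these particular singular payment functions make the social optimum globally attracting, deferring exactly this step means the convergence statement is asserted, not proved.

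The paper closes precisely this step, by a computation you could import verbatim into your own framework. It notes that, with the content term cancelling, the game is a routing game with cost field $C(x)=(\tilde{c}_1,\tilde{c}_2,\tilde{c}_3)$, adopts Smith's pairwise-switching dynamics and Smith's Lyapunov theorem, and then proves monotonicity of $C$ by computing its Jacobian $J$. Writing $a=x_2^*c_2/x_1^*$ and $b=x_3^*c_3/x_1^*$, the symmetric part of $J$ satisfies $v^{\top}\!\left(\tfrac{J+J^{\top}}{2}\right)\!v \;=\; a\bigl(v_1-\tfrac{x_1}{x_2}v_2\bigr)^2+b\bigl(v_1-\tfrac{x_1}{x_3}v_3\bigr)^2\ \ge\ 0$, with equality only along $v\propto(x_1,x_2,x_3)$, a direction that is not tangent to the simplex; hence $C$ is strictly monotone on the flow simplex (the paper's ``positive definite'' claim should be read as positive definite on the tangent space $\{v:\sum_k v_k=0\}$). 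This is exactly what you need: integrating the quadratic form along the segment from $x^*$ to $x$ gives $\langle C(x)-C(x^*),x-x^*\rangle>0$, and $\langle C(x^*),x-x^*\rangle=0$ because all components of $C(x^*)$ are equal and both flows sum to one, which is your variational inequality. So your architecture (relative entropy plus replicator, Hofbauer--Sandholm style) is sound and genuinely different from the paper's (Smith's theorem), but the one step you postponed is the theorem's core, it does not follow from the boundary-repulsion estimates alone, and it is the step the paper actually carries out.
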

\begin{proof}
To enforce BB, the money we refund to each path participant over path $P_3$ is,
\begin{equation}\label{Hpayment}
\frac{(1-x_2-x_3)^2x_3^*c_3}{(1-x_3^*-x_2^*)x_3}-x_2(c_2-c_2).
\end{equation}
Given \eqref{Hpayment} and the side-payment function (\ref{payment2}), a user's payoff now becomes
\begin{equation}
\label{payoffside}
u(P_k,x_1,x_2,x_3)=\left\{
\begin{array}{ll}
\theta_0Q(x_1,x_2,x_3)-\frac{x_1(x_2^*c_2+x_3^*c_3)}{x_1^*}, &\mbox{if }k=1;\\
\theta_0Q(x_1,x_2,x_3)+\frac{x_1^2x_2^*c_2}{x_1^*x_2}-x_3(c_3-c_2)-c_2, &\mbox{if }k=2;\\
\theta_0Q(x_1,x_2,x_3)+\frac{x_1^2x_3^*c_3}{x_1^*x_3}-x_2(c_2-c_3)-c_3, &\mbox{if }k=3.
\end{array}
\right.
\end{equation}
where $x_1=1-x_2-x_3$, $x_1^*=1-x_2^*-x_3^*$.
One can check that when $(x_1,x_2,x_3)=(x_1^*,x_2^*,x_3^*)$, each user's payoff is
$$u(P_k,x_1^*,x_2^*,x_3^*)=\theta_0Q(x_1^*,x_2^*,x_3^*)-x_2^*c_2-x_3^*c_3\ge0,$$
for any $P_k\in\{P_1,P_2,P_3\}$. Thus, every user receives the same payoff no matter which path she chooses and hence $(x_1^*,x_2^*,x_3^*)$ is an equilibrium. Note that each user's equilibrium payoff is equal to the optimal social welfare which is nonnegative, IR is satisfied. IC is also satisfied since the payment function only depends on the path choice and is irrelevant to users' types.

Next we prove  $(x_1^*,x_2^*,x_3^*)$ is a unique equilibrium. One can check that if $(\hat{x}_1,\hat{x}_2,\hat{x}_3)$ is an equilibrium, then $\hat{x}_k\not=0$ for any $k\in\{1,2,3\}$. For example, we assume $\hat{x}_1=0$, then the payoff function \eqref{payoffside} evaluated at $(\hat{x}_1,\hat{x}_2,\hat{x}_3)$ is
\begin{equation}
u(P_k,\hat{x}_1,\hat{x}_2,\hat{x}_3)=\left\{
\begin{array}{ll}
\theta_0Q(\hat{x}_1,\hat{x}_2,\hat{x}_3), &\mbox{if }k=1;\\
\theta_0Q(\hat{x}_1,\hat{x}_2,\hat{x}_3)-\hat{x}_3(c_3-c_2)-c_2, &\mbox{if }k=2;\\
\theta_0Q(\hat{x}_1,\hat{x}_2,\hat{x}_3)-\hat{x}_2(c_2-c_3)-c_3, &\mbox{if }k=3.
\end{array}
\right.
\nonumber\end{equation}
 Note that now $u(P_3,\hat{x}_1,\hat{x}_2,\hat{x}_3)=u(P_2,\hat{x}_1,\hat{x}_2,\hat{x}_3)<u(P_1,\hat{x}_1,\hat{x}_2,\hat{x}_3)$, thus all users have incentive to switch to $L$-path. Hence, $\hat{x}_1\not=0$. Similarly, we can prove that $\hat{x}_2\not=0$ and $\hat{x}_3\not=0$. Thus, at an equilibrium $(\hat{x}_1,\hat{x}_2,\hat{x}_3)$, each path has positive flow. As a result, we have
 \begin{equation}\label{eq:side}
u(P_3,\hat{x}_1,\hat{x}_2,\hat{x}_3)=u(P_2,\hat{x}_1,\hat{x}_2,\hat{x}_3)=u(P_1,\hat{x}_1,\hat{x}_2,\hat{x}_3).
 \end{equation}
 The only feasible solution to \eqref{eq:side} is
 $$(\hat{x}_1,\hat{x}_2,\hat{x}_3)=(x_1^*,x_2^*,x_3^*).$$
 Therefore, we have proven that $(x_1^*,x_2^*,x_3^*)$ is a unique equilibrium.

 Finally, we will prove $(x_1^*,x_2^*,x_3^*)$ is a stable equilibrium. In \cite{smith1984stability}, the author studied the stability of Wardrop's equilibrium. Firstly, the author defined the dynamics of the flow: the flows switch form path $P$ to $P'$ at rate $x_P\max\{0,C_P-C_{P'}\}$ where $C_P$ is the cost on $P$-path. Then Lyapunov function is used to obtain the stability results. Now we will modify the payoff defined in \eqref{payoffside} in order to apply the results in \cite{smith1984stability}. Note that in \eqref{payoffside}, the content part $\theta_0Q(x_1,x_2,x_3)$ is the same for each user no matter which path she chooses. Thus, our network model is equivalent to having cost functions defined on each path as follows,
\begin{equation}
C_P(x_1,x_2,x_3)=\left\{
\begin{array}{ll}
\frac{x_1(x_2^*c_2+x_3^*c_3)}{x_1^*}, &\mbox{if }P=P_1;\\
-\frac{x_1^2x_2^*c_2}{x_1^*x_2}+x_3(c_3-c_2)+c_2, &\mbox{if }P=P_2;\\
-\frac{x_1^2x_3^*c_3}{x_1^*x_3}+x_2(c_2-c_3)+c_3, &\mbox{if }P=P_3.
\end{array}
\right.
\nonumber\end{equation}
Define $\bm{C}(x_1,x_2,x_3):=(C_{P_1}(x_1,x_2,x_3),C_{P_2}(x_1,x_2,x_3),C_{P_3}(x_1,x_2,x_3))$. Now we can use exactly the same dynamical system (5) defined in \cite{smith1984stability}. To prove the stability of this dynamical system,  we only need to prove $\bm{C}(\cdot)$ is monotone.
Let $\bm{J}$ be the Jacobian matrix of $\bm{C}(\cdot)$ evaluated at $(x_1,x_2,x_3)$, then
\begin{gather*}
\bm{J}=\begin{bmatrix}
\frac{x_2^*c_2+x_3^*c_3}{x_1^*}&0&0\\
-\frac{2x_1x_2^*c_2}{x_2x_1^*}&\frac{x_1^2x_2^* c_2}{x_3^2x_1^*}&-c_2+c_3\\
-\frac{2x_1x_3^*c_3}{x_3x_1^*}&c_2-c_3&\frac{x_1^2x_3^* c_3}{x_3^2x_1^*}
\end{bmatrix}.
\end{gather*}
One can easily check that $\bm{J}$ is positive definite. Thus,  $(x_1^*,x_2^*,x_3^*)$ is asymptotically stable as an equilibrium.

 Therefore, we can obtain the social optimal flow $(x_1^*,x_2^*,x_3^*)$  as a stable and unique equilibrium by applying the side-payment function defined in (\ref{payment2}). In addition, IR, IC, and BB are all satisfied.
\end{proof}

With the help of the above side-payment mechanism, we greatly increase the $POA$ from 1/3 (Proposition~5) to 1. 
The path multiplicity does not change the side-payment's efficiency and $PoA=1$ holds for arbitrary $K$.

\subsection{Content-restriction as Incentive}
We now apply content-restriction to the multi-path network model. To motivate more users to choose the higher cost paths ($P_2,\cdots,P_K$) instead of zero cost $P_1$, the system planner should only provide a fraction $a_k\in[0,1]$ of the total information content to participants over the lower-cost path $P_k\in\{P_1, \cdots, P_{K-1}\}$.
As lower cost path participants need more incentives to change routes, we expect that $a_k$ increases with the path index $k$.

Unlike the two-path model in Section~\ref{contentrestriction}, more paths here imply more possible equilibria when under the content-restriction mechanism.
In addition, multiple paths may challenge the stability of the equilibria. We may obtain infinitely many equilibria but only a subset of them is stable. Still, we design Lyapunov function to ensure and prove that our content-restriction coefficients $a_k$'s lead the user flow partition to the stable subset of equilibria (instead of a particular one), where each equilibrium in this stable set \emph{attains the same total system efficiency}. In the following proposition, we design content-restriction coefficients for three-path case. For a more general case ($K>3$), one can derive the content-restriction coefficients analogously although the analysis will be more tedious.

\begin{proposition}\label{prop7}
The content-restriction operates differently according to the travel cost distribution:
\begin{itemize}
  \item In low cost regime (i.e., $c_H\leq 2\theta_0(3Q_1(\frac{1}{3})-2Q_1(\frac{1}{2}))$,
it is optimal to choose 
   \begin{equation}
  a_1=1-\frac{c_3}{3\theta_0Q_1(\frac{1}{3})+\theta_0Q_0-\epsilon},\ a_2=1-\frac{c_3-c_2}{3\theta_0Q_1(\frac{1}{3})+\theta_0Q_0-\epsilon},
   \nonumber
  \end{equation}
  with $a_1<a_2$ to approach the path diversity equilibrium $(\hat{x}_1,\hat{x}_2, \hat{x}_3)=$ $(1/3,1/3,1/3)$ among the three paths, and the optimal social welfare is $SW_a\rightarrow3\theta_0Q_1(1/3)+\theta_0Q_0-c_3$.
  \item In medium cost regime ($2\theta_0(3Q_1(\frac{1}{3})-2Q_1(\frac{1}{2}))< c_H\leq  2\theta_0(2Q_1(\frac{1}{2})-Q_1(1)) $),
it is optimal to choose 
    \begin{equation}
  a_1=1-\frac{c_2}{2\theta_0Q_1(\frac{1}{2})+\theta_0Q_0-\epsilon},\ a_2=1\nonumber
  \end{equation}
to approach only the path diversity between path $P_1$ and path $P_2$ with equilibrium $(\hat{x}_1,\hat{x}_2, \hat{x}_3)=(1/2,1/2,0)$, and optimal social welfare is $SW_a\rightarrow2\theta_0Q_1(1/2)+\theta_0Q_0-c_2$.
\item In high cost regime ($c_H> 2\theta_0(2Q_1(\frac{1}{2})-Q_1(1))$),
   it is optimal to choose $a_1=a_2=1$ and keep the zero path-diversity equilibrium, and the corresponding social welfare is $\theta_0Q_1(1)+\theta_0Q_0$.
\end{itemize}
By searching over all possible parameters and information value function $Q_1(\cdot)$, the mechanism achieves $PoA_a=1/3$.
\end{proposition}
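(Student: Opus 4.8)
The plan is to follow the same three-step template as the two-path homogeneous analysis, adapting it to the richer equilibrium geometry created by a third route. Under restriction coefficients $(a_1,a_2,a_3)=(a_1,a_2,1)$ a user on path $P_k$ earns payoff $a_k\theta_0 Q(x_1,x_2,x_3)-c_k$, and the crucial observation is that these payoffs depend on the flow only through the scalar total content $Q=Q_0+\sum_k Q_1(x_k)$. Hence each pairwise indifference condition merely pins $Q$ to a fixed level: $P_1$ and $P_2$ are indifferent iff $(a_2-a_1)\theta_0 Q=c_2$, and $P_2,P_3$ iff $(1-a_2)\theta_0 Q=c_3-c_2$. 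I would first enumerate the three candidate diversity patterns — full diversity $(P_1,P_2,P_3)$, partial diversity $(P_1,P_2)$ with $x_3=0$, and the no-diversity all-on-$P_1$ outcome — and solve these equations for the coefficients. Choosing $a_1=1-c_3/(\theta_0 Q^{t})$, $a_2=1-(c_3-c_2)/(\theta_0 Q^{t})$ makes all indifference levels coincide at $Q^{t}$; setting $Q^{t}=\bq-\epsilon/\theta_0$ with $\bq=3Q_1(\tfrac13)+Q_0$ (respectively the two-path maximiser $2Q_1(\tfrac12)+Q_0$) yields exactly the $a_1,a_2$ stated, the $-\epsilon$ keeping the target strictly below the content maximum so the equilibrium set is a small closed loop around $(\tfrac13,\tfrac13,\tfrac13)$ rather than the single unstable apex.

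Next I would compute the achieved welfare (measured with destroyed content, as in the two-path proof) for each pattern. Since at any indifference equilibrium every user earns the common payoff $\theta_0 Q-c_3$ or $\theta_0 Q-c_2$, the induced welfare collapses to $W_3=\theta_0\bq-c_H$, $W_2=2\theta_0 Q_1(\tfrac12)+\theta_0 Q_0-\tfrac{c_H}{2}$, and $W_1=\theta_0(Q_1(1)+Q_0)$; this is precisely where all dependence on the individual $x_k$ vanishes, so every point of the loop shares one efficiency. Comparing the three values gives the regimes: $W_3\ge W_2$ iff $c_H\le 2\theta_0(3Q_1(\tfrac13)-2Q_1(\tfrac12))$, and $W_2\ge W_1$ iff $c_H\le 2\theta_0(2Q_1(\tfrac12)-Q_1(1))$. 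Concavity of $Q_1$ with $Q_1(0)=0$ supplies $3Q_1(\tfrac13)\ge 2Q_1(\tfrac12)\ge Q_1(1)$ and, writing $\tfrac12=\tfrac34\cdot\tfrac13+\tfrac14\cdot1$, the inequality $4Q_1(\tfrac12)\ge 3Q_1(\tfrac13)+Q_1(1)$, which orders the two thresholds correctly and partitions $c_H$ into the low, medium and high regimes.

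The main obstacle is establishing stability and convergence, exactly the point flagged for the multi-path model. Unlike the side-payment mechanism of Proposition~\ref{prop6}, the content term no longer cancels across paths: the effective cost vector is $C_k(x)=c_k-a_k\theta_0 Q(x)$ with $\partial C_k/\partial x_j=-a_k\theta_0 Q_1'(x_j)$, a rank-one Jacobian of the form $-\theta_0\,\bm a\,\bm q^{\top}$ whose symmetric part is indefinite, so Smith's monotonicity criterion cannot be invoked verbatim. I would instead build a Lyapunov function tailored to the level-set structure: intuitively the signed gap $Q(x)-Q^{t}$ governs every user's preferred direction — when $Q>Q^{t}$ the full-content route $P_3$ strictly dominates and draws flow off balance, pushing $Q$ back down, and when $Q<Q^{t}$ the cheap route $P_1$ is preferred — so a function measuring the distance of the flow to the target loop $\{Q=Q^{t}\}$ should decrease along the switching dynamics of \cite{smith1984stability}. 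Proving that this candidate is a genuine Lyapunov function on the simplex, that the attracting set is exactly the loop on which every point realises $W_3$, and that the perturbation $-\epsilon$ renders the loop a nondegenerate attractor with $SW_a\rightarrow\theta_0\bq-c_3$ as $\epsilon\rightarrow0$, is the technical heart of the argument.

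Finally I would settle $PoA_a=1/3$ by the robust worst-case argument used for two paths. For the lower bound, taking $a_1=a_2=1$ leaves all-on-$P_1$ as the unique equilibrium with welfare $\theta_0\uq$, $\uq=Q_1(1)+Q_0$, so $\max_{a}\min_{\hat x\in NE(a)}SW\ge\theta_0\uq$; since $SW(x^*)\le\theta_0\bq$ and $Q_1(1)\ge Q_1(\tfrac13)$ while $Q_1(1)\le 3Q_1(\tfrac13)$, we obtain $PoA_a\ge\uq/\bq\ge 1/3$. The matching upper bound comes from the piecewise-linear instance $Q_1(x)=3qx$ for $x\le\tfrac13$ and $Q_1(x)=q$ otherwise, with $Q_0=0$ and small $c_H$: here $\uq=q$, $\bq=3q$, the social optimum tends to $3\theta_0 q$, while no coefficient choice can guarantee a worst-case equilibrium above $\theta_0\uq=\theta_0 q$, so the ratio tends to $1/3$. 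I would also check that for any $a_1<1$ an all-on-$P_1$ equilibrium with destroyed welfare $a_1\theta_0\uq<\theta_0\uq$ persists, which is what certifies that full content $a_1=a_2=1$ maximises the worst-case guarantee and closes the argument.
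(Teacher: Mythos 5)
Your proposal follows the paper's own strategy almost step for step: enumerate equilibrium patterns by the number of used paths, use the fact that payoffs depend on the flow only through the scalar content level $Q$ to pin down the coefficients via indifference conditions, compare the three resulting welfares to obtain the cost regimes, handle stability with a Lyapunov function built on the squared distance of $Q$ to the target level (the paper's $V=(Q-Q^{t})^2$ on the region $2x_3+x_2>1$ where $\partial Q/\partial x_3<0$), and bound the $PoA$ from below via $a_1=a_2=1$ and from above via a piecewise-linear content function. Your extra verification of the threshold ordering ($4Q_1(\tfrac12)\ge 3Q_1(\tfrac13)+Q_1(1)$ from concavity) is a nice addition the paper omits. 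However, two of your concrete claims are false as written. First, the medium regime: substituting the two-path maximiser $2Q_1(\tfrac12)+Q_0$ into the \emph{same} formulas $a_1=1-c_3/(\theta_0Q^{t})$, $a_2=1-(c_3-c_2)/(\theta_0Q^{t})$ does \emph{not} yield the stated coefficients $a_1=1-c_2/(2\theta_0Q_1(\tfrac12)+\theta_0Q_0-\epsilon)$, $a_2=1$. The stated mechanism makes only $P_1$ and $P_2$ indifferent (condition $(1-a_1)\theta_0Q=c_2$) with $P_3$ strictly dominated; your formulas would instead make all three paths indifferent on the lower level set, admit equilibria with $x_3>0$, and give common payoff (hence welfare) $\theta_0Q^{t}-c_3$ rather than $\theta_0Q^{t}-c_2$ --- strictly worse, and inconsistent with the value $W_2$ you use later. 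The partial-diversity pattern needs its own derivation, including the argument (present in the paper) for why the pair $\{P_1,P_2\}$ with $a_2=1$ dominates the other pairs.

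Second, the claim you say ``certifies'' the $PoA$ upper bound --- that for any $a_1<1$ an all-on-$P_1$ equilibrium persists --- is wrong. Take $a_1$ small, $a_2=1$, $c_H$ small: at the all-on-$P_1$ profile a user earns $a_1\theta_0 q$ but would earn $\theta_0 q-c_2>a_1\theta_0 q$ by deviating to $P_2$, so all-on-$P_1$ is not an equilibrium. What \emph{is} true, and what rescues your instance, is weaker: since in your instance every single-path profile produces the same content $q$ (because $Q_1(1)=Q_1(\tfrac13)=q$), for any $(a_1,a_2)$ the profile putting everyone on a path in $\argmax_k\{a_k\theta_0 q-c_k\}$ is an equilibrium, and its welfare is at most $\theta_0 q$; hence $\max_{a}\min_{\hat x\in NE(a)}SW\le\theta_0 q$ while $SW(x^*)\to 3\theta_0 q$ as $c_H\to 0$. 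You should also note that your instance leans entirely on the pessimistic (min over $NE$) selection in the $PoA$ definition: with small $c_H$ the \emph{best} full-diversity equilibrium has welfare close to $3\theta_0 q-c_H$, so your bound would collapse under optimistic selection. The paper's instance ($Q_1(x)=qx/\delta$ with $\delta\to 0$ and $c_H=2\theta_0 q$, making all three candidate welfares coincide at $\theta_0 q$) is immune to this issue. Finally, be aware that your stability discussion is an acknowledged sketch; the paper's restoring-force argument works only on the portion of the level set with $2x_3+x_2>1$, not on the whole ``loop,'' though since all points of the level set share one welfare this does not change the conclusion.
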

\begin{proof}
 When we apply content-restriction as incentive, at the equilibrium,  there are three possibilities: 1) Only one path is used by the users. 2) Exactly two paths are used by the users. 3) All three paths are used by the users. We analyse these three case in the following:

\emph{Analysis of 1)}: Since only one path is used, the content collected hence is $Q_1(1)+Q_0$. Note that we can simply adopt the original equilibrium and avoid any loss of social welfare due to destruction of content. Thus, we optimally choose $a_1=a_2=1$ and the corresponding social welfare at the equilibrium is $\theta_0Q_1(1)+\theta_0Q_0$.

\emph{Analysis of 2)}: If we expect exactly two paths are used at the equilibrium, we need to make the payoffs from choosing these two paths be equal and strictly larger than the payoff from choosing the third path. Among all possible pairs of paths, we always prefer path $P_1$ and path $P_2$. This is because to achieve the same path diversity, less content needs to be restricted when we only use $a_1<1$ and $a_2=1$ to make the payoffs from choosing paths $P_1$ and $P_2$ the same. Now this case is equivalent to using content-restriction as incentive in two-path model. According to the results from Section \ref{contentrestriction}, we optimally choose $a_1=1-c_2/(2\theta_0Q_1'(1/2)+\theta_0Q_0-\epsilon)$ and $a_2=1$ to approach the perfect path diversity between path $P_1$ and path $P_2$ $(0,1/2,1/2)$ and optimal social welfare is $2\theta_0Q_1'(1/2)+\theta_0Q_0-c_2$.

\emph{Analysis of 3)}: If we expect all the three paths are used at the target equilibrium $(\hat{x}_1,\hat{x}_2,\hat{x}_3)$, we need to choose proper $a_1$ and $a_2$ to make the payoffs from choosing these three paths be equal:
$$a_1\theta_0Q(\hat{x}_1,\hat{x}_2,\hat{x}_3)=a_2\theta_0Q(\hat{x}_1,\hat{x}_2,\hat{x}_3)-c_2=\theta_0Q(\hat{x}_1,\hat{x}_2,\hat{x}_3)-c_3.$$
The social welfare attained at the target equilibrium $(\hat{x}_1,\hat{x}_2,\hat{x}_3)$ is
$$\theta_0Q(\hat{x}_1,\hat{x}_2,\hat{x}_3)-c_3.$$
Thus, the equilibrium we prefer is $(1/3,1/3,1/3)$, i.e., flow are partitioned equally among three paths, where the content is maximized. The content-restriction coefficients we use to incentive this perfect path-diversity equilibrium are
$$a_1=1-\frac{c_2}{3\theta_0Q_1(\frac{1}{3})+\theta_0Q_0},\quad a_2=1-\frac{c_3-c_2}{3\theta_0Q_1(\frac{1}{3})+\theta_0Q_0}.$$
The corresponding social welfare is $3\theta_0Q_1(1/3)+\theta_0Q_0-c_3$. However, $(1/3,1/3,1/3)$ is unstable. Instead, we will choose
\begin{equation}\label{alam}
a_1=1-\frac{c_3}{3\theta_0Q_1(\frac{1}{3})+\theta_0Q_0-\epsilon},\quad a_2=1-\frac{c_3-c_2}{3\theta_0Q_1(\frac{1}{3})+\theta_0Q_0-\epsilon}
\end{equation}
to incentivize a stable set of equilibria as follows:
$$E:=\{(x_1,x_2,x_3)\in S_3|Q(x_1,x_2,x_3)=3\theta_0Q_1(\frac{1}{3})+\theta_0Q_0-\epsilon,\ 2x_3+x_2>1\}$$
Because the continuity and symmetry of $Q(x_1,x_2,x_3)$, $E$ is not empty. First we define the evolution of the system. We let $\mathcal{P}_{max}(x_1,x_2,x_3)$ denote the set of most profitable paths when the current flow partition is $(x_1,x_2,x_3)$, i.e.,
$$\mathcal{P}_{max}(x_1,x_2,x_3)=\argmax\limits_{P_k\in\{P_1,P_2,P_3\}}\{u(P_k,x_1,x_2,x_3)\}.$$
 Let $P_{min}(x_1,x_2,x_3)$ denote the least profitable path when the current flow partition is $(x_1,x_2,x_3)$, i.e.,
 $$\mathcal{P}_{min}(x_1,x_2,x_3)=\argmin\limits_{P_k\in\{P_1,P_2,P_3\}}\{u(P_k,x_1,x_2,x_3)\}.$$
 We denote a path in $\mathcal{P}_{max}$ ($\mathcal{P}_{min}$) by $P_{max}$ ($P_{min}$).  Now we define the differential equation which governs the evolution of the system by
\begin{equation}\label{de1}
\frac{\dif x_3}{\dif t}=\left\{
\begin{array}{ll}
\mu(u(P_3,x_1,x_2,x_3)-u(P_{min}(x_1,x_2,x_3),x_1,x_2,x_3)), &\mbox{if }P_3\in\mathcal{P}_{max}(x_1,x_2,x_3);\\
\mu(u(P_3,x_1,x_2,x_3)-u(P_{max}(x_1,x_2,x_3),x_1,x_2,x_3)), &\mbox{if }P_3\in\mathcal{P}_{min}(x_1,x_2,x_3);\\
0, &\mbox{else}.
\end{array}
\right.
\end{equation}
\begin{equation}\label{de2}
\frac{\dif x_2}{\dif t}=\left\{
\begin{array}{ll}
\mu(u(P_2,x_1,x_2,x_3)-u(P_{min}(x_1,x_2,x_3),x_1,x_2,x_3)), &\mbox{if }P_2\in\mathcal{P}_{max}(x_1,x_2,x_3);\\
\mu(u(P_2,x_1,x_2,x_3)-u(P_{max}(x_1,x_2,x_3),x_1,x_2,x_3)), &\mbox{if }P_2\in\mathcal{P}_{min}(x_1,x_2,x_3);\\
0, &\mbox{else}.
\end{array}
\right.
\end{equation}
where $\mu>0$ indicating the convergence rate. Thus, a user's strategy in each step is to firstly check whether his current path is the least profitable one. If her path is the least profitable  one, she will change to the most profitable path with some positive probability. Otherwise, she will keep to her current path choice.

 Note that $x_1+x_2+x_3=1$,  $Q(x_1,x_2,x_3)$ is determined given $x_2$ and $x_3$. Thus, we can write $Q(x_1,x_2,x_3)$ as a function of only $x_2$ and $x_3$, i.e., we define
 $$Q(x_2,x_3,1):=Q_0+Q_1(x_2)+Q_1(x_3)+Q_1(1-x_2-x_3).$$
 Let $I=\{(x_2,x_3)|(x_1,x_2,x_3)\in S_3\mbox{ and }2x_3+x_2>1\}$ and $E_2=\{(x_2,x_3)|(x_1,x_2,x_3)\in E\}$ it is easy to see that for all $(x_1,x_2,x_3)\in I$,
$$\frac{\partial Q(x_2,x_3,1)}{\partial x_3}<0.$$
Now we define the following Lyapunov function
$$V(x_2,x_3)=(Q(x_2,x_3,1)-3\theta_0Q_1(\frac{1}{3})-\theta_0Q_0+\epsilon)^2\quad \forall (x_1,x_2,x_3)\in I.$$
Compute the first derivative, $\forall (,x_2,x_3)\in I$,
$$\frac{\dif V(x_2,x_3)}{\dif t}=2(Q(x_2,x_3,1)-3\theta_0Q_1(\frac{1}{3})-\theta_0Q_0+\epsilon))(\frac{\partial Q(x_2,x_3,1)}{\partial x_3}\frac{\partial x_3}{\partial t}+\frac{\partial Q(x_2,x_3,1)}{\partial x_2}\frac{\partial x_2}{\partial t}).$$
Note that if $Q(x_2,x_3,1)-3\theta_0Q_1'(\frac{1}{3})-\theta_0Q_0+\epsilon>0$, we have $u(P_3,x_1,x_2,x_3)>u(P_2,x_1,x_2,x_3)>u(P_1,x_1,x_2,x_3)$. Then, according to the dynamic we define,
$$\frac{\partial x_H}{\partial t}>0,\quad\frac{\partial x_M}{\partial t}=0. $$
If $Q(x_H,x_M,1)-3\theta_0Q_1'(\frac{1}{3})-\theta_0Q_0+\epsilon<0$, we have $u(H,x_H,x_M)<u(M,x_H,x_M)<u(L,x_H,x_M)$. Then, according to the dynamic we define,
$$\frac{\partial x_3}{\partial t}<0,\quad\frac{\partial x_2}{\partial t}=0. $$
It follows that
$$\frac{\dif V(x_2,x_3)}{\dif t}<0,\quad\forall (x_2,x_3)\in I/E.$$
Since $V(x_2,x_3)=0$ for all $(x_2,x_3)\in E_2$ and $V(x_2,x_3)>0$ for all $(x_2,x_3)\in I/E_2$, $E$ is a stable set of equilibria. Note that $a_1$ and $a_2$ are fixed, value of $Q(x_1,x_2,x_3)$ are the same for all the equilibria in $E$, we conclude that we achieve the same social welfare for each equilibrium in $E$. As $\epsilon\rightarrow0$, the social welfare obtained at any equilibrium in $E$ goes to $\theta_0Q(\hat{x}_1,\hat{x}_2,\hat{x}_3)-c_3$.

Now we only need to compare the optimal social welfare attained in the three cases:
$$\theta_0Q_1(1)+\theta_0Q_0,\quad 2\theta_0Q_1(\frac{1}{2})+\theta_0Q_0-c_2,\quad 3\theta_0Q_1(\frac{1}{3})+\theta_0Q_0-c_3,$$
and choose proper $a_1$ and $a_2$ to incentivize the equilibrium with the largest social welfare.
Note that $(c_1,c_2,c_3)=(0,c_H/2,c_H)$. If $c_H> 2\theta_0(2Q_1(\frac{1}{2})-Q_1(1))$,  it is optimal to choose $a_1=a_2=1$. If $2\theta_0(3Q_1(\frac{1}{3})-2Q_1(\frac{1}{2}))< c_H\leq  2\theta_0(2Q_1(\frac{1}{2})-Q_1(1)) $, we optimally choose $a_1=1-c_2/(2\theta_0Q_1(1/2)+\theta_0Q_0-\epsilon)$ and $a_2=1$. If $c_H> 2\theta_0(2Q_1(\frac{1}{2})-Q_1(1))$, we optimally choose $a_1=1-c_3/(3\theta_0Q_1(1/3)+\theta_0Q_0-\epsilon)$ and $a_2=1-(c_3-c_2)/(3\theta_0Q_1(1/3)+\theta_0Q_0-\epsilon)$.

Next we will prove that the price of anarchy is $1/3$. Note that $SW(x_1^*,x_2^*,x_3^*)\le3\theta_0Q_1(\frac{1}{3})+\theta_0Q_0$ and we can always choose $a_1=a_2=1$ and obtain the zero path-diversity equilibrium, the price of anarchy is lower-bounded as follows:
 $$PoA_a\ge\min\limits_{\theta_i,Q_1(\cdot),Q_0,c_H}\frac{\theta_0Q_1(1)+\theta_0Q_0}{3\theta_0Q_1(\frac{1}{3})+\theta_0Q_0}\ge\frac{1}{3}.$$
 Consider a specific $Q_1(\cdot)$ function as following
\begin{equation}
\label{fucdef6new}
Q_1(x)=\left\{
\begin{array}{ll}
\frac{qx}{\delta} &0\le x\le\delta;\\
q &\delta\le x\le1.
\end{array}
\right.
\end{equation}
where we let $\delta$ be near zero. Let $c_H=2\theta_0q$ and $Q_0=0$, then
$$\theta_0Q_1(1)+\theta_0Q_0=2\theta_0Q_1(\frac{1}{2})+\theta_0Q_0-c_2=3\theta_0Q_1(\frac{1}{3})+\theta_0Q_0-c_3=\theta_0q.$$
Note that $SW(x_1^*,x_2^*,x_3^*)= SW(1-2\delta,\delta,\delta)$ for infinitesimal $\delta>0$. Since $SW(1-2\delta,\delta,\delta)=3\theta_0q-\delta c_2-\delta c_3$, we have for this specific instance the price of anarchy is
$$PoA_a\le\frac{\theta_0q}{3\theta_0q-\delta c_2-\delta c_3},$$
for infinitesimal $\delta>0$. Hence $PoA_a\le1/3$. This completes the proof.
\end{proof}



\section{Extensions to Multiple user types and Dynamic information collection}\label{extensions}
\subsection{Continuous User Types}
In this subsection, we extend our mechanisms to the uniformly distributed user types in a continuous range.  Assume the users' valuation for content $\theta$ follows a uniform distribution $U[0,1]$.
\subsubsection{Routing Equilibrium without Incentive Design}
Without incentive design, a user's path choice does not depend on her type. Notice no matter which path a user of type-$\theta$ chooses, the information value she perceived is always $\theta Q(x_H,1)$ and hence her choice only depends on path cost. Therefore, the selfish routing strategy is for every user to choose the $L$-path at the equilibrium which is the same as in Proposition \ref{prop}.
\begin{proposition}
There exists a unique equilibrium for the content routing game: $\hat{x}_H=0$ and the resultant social welfare is
$$SW(0)=\int_0^1\theta Q(0,1)\dif\theta=0.5Q(0,1).$$
\end{proposition}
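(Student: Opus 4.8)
The plan is to mirror the proof of Proposition~\ref{prop} almost verbatim, since the only change is that the two-point type distribution is replaced by the continuous uniform law $U[0,1]$, and neither the equilibrium argument nor the non-atomic structure cares about the shape of the type distribution. First I would invoke the non-atomicity of users: because an individual user's mass is infinitesimal, her own routing decision perturbs the aggregate content $Q(x_H,1)$ only infinitesimally, so her perceived information value $\theta Q(x_H,1)$ is identical whether she selects the $H$- or the $L$-path. Thus, exactly as in \eqref{payoff}, a type-$\theta$ user's choice reduces to a comparison of travel costs alone.

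Second, since $c_L=0<c_H$, the payoff on the $L$-path, $\theta Q(x_H,1)$, strictly exceeds the payoff on the $H$-path, $\theta Q(x_H,1)-c_H$, for \emph{every} value of $\theta$ in the support $[0,1]$. This strict preference is uniform across the entire continuum of types, so no positive mass of users can be sustained on the $H$-path at equilibrium; by Definition~\ref{def1} the unique equilibrium is $\hat{x}_H=0$.

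Finally, for the welfare value I would evaluate $SW$ at this equilibrium. With $\hat{x}_H=0$ all users travel the $L$-path at zero cost, so a type-$\theta$ user obtains payoff $\theta Q(0,1)$. Integrating over the uniform distribution and factoring out the constant $Q(0,1)$ gives $\int_0^1\theta Q(0,1)\,\dif\theta=Q(0,1)\int_0^1\theta\,\dif\theta=0.5\,Q(0,1)$, as claimed.

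There is no genuine obstacle here; the only step meriting a word of care is the uniqueness claim, and it is settled by the \emph{strictness} of the inequality $c_H>0$. Because each user strictly (not merely weakly) prefers the $L$-path, there is no indifference anywhere in the type space, ruling out any alternative or mixed equilibrium and making $\hat{x}_H=0$ the unique one.
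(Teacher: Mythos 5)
Your proof is correct and follows essentially the same route as the paper: the paper likewise argues that, by non-atomicity, a user's perceived value $\theta Q(x_H,1)$ is unaffected by her own path choice, so the decision reduces to comparing travel costs, forcing every type onto the $L$-path since $c_H>0$, and the welfare follows by integrating $\theta Q(0,1)$ over the uniform type distribution. Your added remark that strictness of the preference (for all $\theta\in[0,1]$, including $\theta=0$) rules out any indifference and hence any other equilibrium is a sound, if implicit, elaboration of the paper's argument.
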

\subsubsection{Side-payment as Incentive}
The side-payment mechanism is defined in Section \ref{sidepayment}. Similar to (\ref{payoff2}), the payoff of a user of type-$\theta$ under the side-payment mechanism is given by
\begin{equation}
\label{payoff2a}
u_\theta(P,x_H)=\left\{
\begin{array}{ll}
\theta Q(x_H,b)-g(x_H), &\mbox{if }P=L;\\
\theta Q(x_H,b)-c_H+\frac{b-x_H}{x_H}g(x_H), &\mbox{if }P=H.
\end{array}
\right.
\end{equation}
Since side-payment does not depend on a user's type, Proposition \ref{lemma1} still holds. We only need to reconsider users' IR in the side-payment design. Given the side-payment function (\ref{function}), an equilibrium $\hat{x}_H$ satisfies IR for any user with type-$\theta$ if,
\[
\theta Q(\hat{x}_H,b)-\frac{\hat{x}_Hc_H}{b}\ge0,
\]
where $b$ is the active participation fraction ($0\le b\le1$). Since $1-b$ is the lowest type whose IR is satisfied, we have the following condition for $b\in(0,1]$:
\[
(1-b)Q(\hat{x}_H,b)-\frac{\hat{x}_Hc_H}{b}=0.
\]
We rewrite the condition such that it also holds for $b=0$ as follows:
\begin{equation}
\label{IR2}
(1-b)bQ(\hat{x}_H,b)-\hat{x}_Hc_H=0.
\end{equation}
Now we only need to properly choose $b$ and $\hat{x}_H$ to maximize the social welfare which is given by
$$\int_{1-b}^1\theta Q(\hat{x}_H,1)\dif\theta=\frac{b(2-b)}{2}Q(\hat{x}_H,b)-\hat{x}_Hc_H.$$
We denote the optimal solution to
\begin{equation}
\label{swgu}
\max\limits_{b\in[0,1]\atop x_H\in[0,b]} \frac{b(2-b)}{2}Q(x_H,b)-x_Hc_H
\end{equation}
subject to (\ref{IR2}) as $(x_{Hg},b_g)$ and the corresponding social welfare as $SW_g(x_{Hg},b_g)$.
\begin{proposition}
\label{prop11}
When the user types follows uniform distribution $U[0,1]$, the optimal side-payment design is to choose the side-payment in (\ref{function}) with $\hat{x}_H=x_{Hg}$ to achieve the optimal social welfare
$$SW_g(x_{Hg},b_g)=\frac{b_g(2-b_g)}{2}Q(x_{Hg},b_g)-x_{Hg}c_H.$$
\end{proposition}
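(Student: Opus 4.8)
The plan is to reduce the continuous-type problem to the same two-stage structure used for the discrete heterogeneous case in Section~\ref{subsection:heterogeneous}: first fix a target flow $\hat{x}_H$ together with an active mass $b$, show that Proposition~\ref{lemma1} implements this pair as a stable, budget-balanced equilibrium, and then optimise the induced welfare over all admissible $(\hat{x}_H,b)$.

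First I would invoke Proposition~\ref{lemma1}. Because the payment function in (\ref{function}) and the resulting payoffs in (\ref{payoff2a}) do not depend on a user's type $\theta$, the proof that $\hat{x}_H$ is a stable equilibrium of the flow among a mass $b$ of active users transfers verbatim from the discrete analysis; the same observation gives IC (the payment depends only on the path chosen) and BB (by construction of (\ref{function})). At this equilibrium every active user, on either path, perceives the common effective cost $\hat{x}_H c_H/b$, so a type-$\theta$ participant earns the payoff $\theta Q(\hat{x}_H,b)-\hat{x}_H c_H/b$ of (\ref{payoff6}).

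The key step is to note that this payoff is strictly increasing in $\theta$, which endows the participation decision with a threshold structure: users opt in precisely when $\theta$ exceeds some cut-off. Since types are distributed as $U[0,1]$ and the active mass equals $b$, self-consistency forces the cut-off to be $1-b$, and the marginal participant's IR must bind, $(1-b)Q(\hat{x}_H,b)-\hat{x}_H c_H/b=0$; multiplying through by $b$ recovers exactly the constraint (\ref{IR2}). I would then check that this is a genuine equilibrium: active users are indifferent between the two paths by design, while every type below $1-b$ would earn a strictly negative payoff and hence stays out, so no agent can profitably deviate. The social welfare then follows by integrating the content value over the active types and subtracting the only real social cost, the aggregate travel cost $\hat{x}_H c_H$ (the budget-balanced transfers cancel): since $\int_{1-b}^1 \theta\,d\theta=b(2-b)/2$, the objective is exactly $\frac{b(2-b)}{2}Q(\hat{x}_H,b)-\hat{x}_H c_H$ as in (\ref{swgu}). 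Maximising this over the constraint manifold (\ref{IR2}) defines $(x_{Hg},b_g)$ and delivers the stated welfare.

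I expect the principal obstacle to lie in the self-consistency argument of the third step: one must argue that the endogenous participation set is exactly the top-$b$ fraction of types, and that pinning $b$ through the binding IR condition (\ref{IR2}) is both necessary (no looser cut-off is sustainable as an equilibrium) and simultaneously compatible with $\hat{x}_H$ remaining a stable split of that same mass. Some care is also needed at the boundary $\hat{x}_H=0$, where no payment is levied, $b=1$, and (\ref{IR2}) degenerates into the no-incentive baseline of Proposition~\ref{prop}.
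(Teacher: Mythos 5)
Your proposal is correct and follows essentially the same route as the paper: the paper likewise carries over Proposition~\ref{lemma1} verbatim (since payments are type-independent), uses the monotonicity of the equilibrium payoff $\theta Q(\hat{x}_H,b)-\hat{x}_Hc_H/b$ in $\theta$ to get the threshold participation cut-off at $1-b$ with binding IR giving (\ref{IR2}), integrates over the participating types to obtain the objective (\ref{swgu}), and defines $(x_{Hg},b_g)$ as its constrained maximiser. Your additional remarks on self-consistency of the cut-off and the degenerate boundary $\hat{x}_H=0$, $b=1$ are sound refinements of points the paper leaves implicit.
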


In Section \ref{sidepayment}, the active participation fraction $b$ is either 0.5 or 1. We only need to compare the full participation case with $b=1$ with the half participation case with $b=0.5$ and choose the better one as in Theorem \ref{prop3}. When user types are continuous, we need to optimally choose the active participation fraction $b$ in the range $[0,1]$ as in Proposition \ref{prop11}.
\subsubsection{Content-restriction as Incentive}
The content-restriction mechanism is defined in Section \ref{contentrestriction}. Similar to Section \ref{contentrestriction}, if we expect positive flows with $\hat{x}_H\in(0,1)$ and $1-\hat{x}_H$ on both paths at the equilibrium, any user with type-$\theta=1-\hat{x}_H$ should be indifferent in choosing between the both paths. Her payoffs by choosing $L$-path and $H$-path are equal, i.e.,
$$(1-\hat{x}_H)Q(\hat{x}_H,1)=\frac{c_H}{(1-a)}.$$
Then all users with type $\theta$ smaller than $(1-\hat{x}_H)$ will choose $L$-path and  all users with type $\theta$ larger than $(1-\hat{x}_H)$ will choose $H$-path. Thus, the social welfare attained at equilibrium $\hat{x}_H$ is
\begin{eqnarray}
&&\int_{0}^{1-\hat{x}_H}\theta aQ(\hat{x}_H)\dif\theta+\int_{1-\hat{x}_H}^1Q(\hat{x}_H)\dif\theta-\hat{x}_Hc_H\nonumber\\
&=&0.5(Q(\hat{x}_H,1)-(1+\hat{x}_H)c_H).\nonumber
\end{eqnarray}
Now we only need to properly choose $a$ and target $\hat{x}_H$ to maximize the social welfare
\begin{equation}
\label{swau}
\max\limits_{ x_H\in(0,1)} 0.5(Q(x_H,1)-(1+x_H)c_H).
\end{equation}
We denote the optimal solution to (\ref{swau}) as $x_{Ha}$ and the corresponding social welfare is $SW_a(x_{Ha})$. Then, the content-restriction is to compare $SW_a(x_{Ha})$ and the equilibrium social welfare with $a=1$, i.e., $0.5Q(0,1)$ to decide to incentivize $\hat{x}_H=x_{Ha}$ or $\hat{x}_H=0$.
\begin{proposition}
\label{prop12}
When the user types follows uniform distribution $U[0,1]$, depending on the relationship between $SW_a(x_{Ha})$ and $0.5Q(0,1)$, we decide $a$ and which equilibrium $\hat{x}_H$ to incentivize:
\begin{itemize}
  \item If the travel cost over $H$-path is small, i.e.,
  $$c_H<\frac{Q(x_{Ha},1)-Q(0,1)}{1+x_{Ha}},$$
  it is optimal to choose
  $$a=1-\frac{c_H}{(1-x_{Ha})Q(x_{Ha},1)}$$
  to reach the optimal social welfare
  $$SW_a(x_{Ha})=0.5(Q(x_{Ha},1)-(1+x_{Ha})c_H).$$
  \item If the travel cost is large, i.e.,
  $$c_H\ge\frac{Q(x_{Ha},1)-Q(0,1)}{1+x_{Ha}},$$
  it is optimal to choose $a=1$ since content destruction must be too excessive to incentivize path diversity. The corresponding social welfare is $0.5Q(0,1)$.
\end{itemize}
\end{proposition}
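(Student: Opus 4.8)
The plan is to combine two welfare expressions---the best interior path-diversity equilibrium and the unrestricted outcome---and reduce optimality to a single comparison from which the stated $c_H$-threshold drops out. I would first confirm the path-diversity branch. Because $\theta\sim U[0,1]$, if a mass $\hat{x}_H$ takes the $H$-path the indifferent (marginal) type is exactly $\theta^*=1-\hat{x}_H$, and writing its indifference condition $(1-a)\theta^* Q(\hat{x}_H,1)=c_H$ gives the restriction coefficient $a=1-c_H/\big((1-\hat{x}_H)Q(\hat{x}_H,1)\big)$. Substituting this $a$ into the welfare, in which types below $\theta^*$ earn $a\theta Q(\hat{x}_H,1)$ and types above earn $\theta Q(\hat{x}_H,1)-c_H$, the $Q$-weighted contributions collapse to $Q(\hat{x}_H,1)/2$ while the cost contributions collapse to $(1+\hat{x}_H)c_H/2$, yielding $SW=0.5\big(Q(\hat{x}_H,1)-(1+\hat{x}_H)c_H\big)$. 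Maximising this over $\hat{x}_H$ is precisely problem (\ref{swau}), whose maximiser is $x_{Ha}$, and the incentivising coefficient is the displayed formula evaluated at $\hat{x}_H=x_{Ha}$.

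Second, the only competing stable outcome is to impose no restriction, $a=1$, which collapses to the selfish equilibrium $\hat{x}_H=0$ with welfare $0.5Q(0,1)$. The optimal mechanism therefore selects the larger of $SW_a(x_{Ha})$ and $0.5Q(0,1)$. Rearranging $0.5\big(Q(x_{Ha},1)-(1+x_{Ha})c_H\big)\ge 0.5Q(0,1)$ gives exactly $c_H\le \big(Q(x_{Ha},1)-Q(0,1)\big)/(1+x_{Ha})$, the small-cost threshold; in that regime we deploy the path-diversity $a$, and otherwise $a=1$.

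The main obstacle I anticipate is implementability rather than the comparison itself. I would verify that $x_{Ha}$ is attainable by an admissible $a\in[0,1]$: the constraint $a\ge0$ is $c_H\le(1-x_{Ha})Q(x_{Ha},1)$, which I expect to follow from the small-cost condition together with the concavity of $Q(\cdot,1)$ on $[0,0.5]$. I would also import the stability reasoning from the two-type content-restriction analysis of Section~\ref{contentrestriction}: among the at most two intersections of the decision threshold with $Q(\cdot,1)$, only the one on the stable side survives a small flow perturbation, so $x_{Ha}$ is a genuine equilibrium, and no other stable interior equilibrium can exceed it---making the two-candidate comparison exhaustive.
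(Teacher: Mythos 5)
Your proposal is correct and follows essentially the same route as the paper: identify the marginal indifferent type $\theta^*=1-\hat{x}_H$, back out $a=1-c_H/\big((1-\hat{x}_H)Q(\hat{x}_H,1)\big)$, show the welfare integral collapses to $0.5\big(Q(\hat{x}_H,1)-(1+\hat{x}_H)c_H\big)$, maximize via (\ref{swau}), and compare against the $a=1$ outcome $0.5Q(0,1)$ to obtain the stated threshold on $c_H$. Your added feasibility check ($a\ge 0$, which indeed holds since $x_{Ha}\le 0.5$ and $Q$ is increasing force $x_{Ha}^2Q(x_{Ha},1)\le Q(0,1)$) and the stability remark are points the paper leaves implicit, but they do not change the argument.
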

Similar to Section \ref{contentrestriction}, when the travel cost $c_H$ is too high, we may skip the path-diversity, as it is too expensive for any user type to cover the $H$-path. When the travel cost $c_H$ is small, different from two user types case, the optimal equilibrium is not always the perfect path-diversity equilibrium $\hat{x}_H=0.5$ and it is determined by the specific value of $c_H$.
\subsection{Dynamic Information Model}\label{repeated}
In this subsection we no longer consider the one-shot content routing game as in Section \ref{systemmodel}, we study the dynamic version of our content routing game. Here users need to make a travel decision in each time slot and such decision-making is myopic and happens repeatedly. In each time slot $t$, each user chooses a path $H$ or $L$. We denote by $x_{Ht}$ the fraction of the users that choose the $H$-path at the $t$th ($t\in\mathbb{N}$) time period and leaving $1-x_{Ht}$ to use the $L$-path.

We assume each path has $N/2$  independent information pieces with no information overlap between paths. At time 0, there are $Q_0=Q_{H0}+Q_{L0}$ independent pieces of information which is known to every user where $Q_{H0}$ is the amount of information on $H$-path and $Q_{L0}$ is the amount of information on $L$-path. We denote by $\Delta Q_{Ht}$ and the expected number of independent pieces of information collected over $H$-path during the $t$th time period. According to (\ref{Q}) we have
\begin{equation}
\Delta Q_{Ht}=\frac{N}{2}\big(1-(1-\frac{2\phi}{N})^{nx_{Ht}}\big).
\end{equation}
Similarly, we denote by $\Delta Q_{Lt}$ and the expected number of independent pieces of information collected over $L$- respectively during the $t$th time period, then
\begin{equation}
\Delta Q_{Lt}=\frac{N}{2}\big(1-(1-\frac{2\phi}{N})^{n(1-x_{Ht})}\big).
\end{equation}
We denote by $Q_t$ the quantity of information available to the users at time $t$. It consists of the initial information $Q_0$ and the information collected during $t$ time periods. The information collected by the users may not be useful all the time, we need to eliminate the outdated information. We model this fact by a information discount factor  $\gamma\in(0,1)$. We suppose the quantity of the useful information decreases at a fixed rate of $\gamma$. For example, if at time $t-1$, there are $Q_{t-1}$ amount of useful information in the information pool, then at time $t$ only $\gamma Q_{t-1}$ amount of information are still useful. We denote by $Q_{Ht}$ the quantity of useful information on $H$-path at time $t$, then
\begin{equation}
\label{qht}
Q_{Ht}=\frac{N}{2}(1-(1-\frac{2\gamma Q_{H(t-1)}}{N})(1-\frac{2\Delta Q_{Ht}}{N})).
\end{equation}
 We denote by $Q_{Lt}$ the quantity of useful information on $L$-path at time $t$, then
\begin{equation}
\label{qlt}
Q_{Lt}=\frac{N}{2}(1-(1-\frac{2\gamma Q_{L(t-1)}}{N})(1-\frac{2\Delta Q_{Lt}}{N})),
\end{equation}
and $Q_t=Q_{Ht}+Q_{Lt}$.
We assume the users are non-atomic and have same valuations for content $\theta=0.5$. We assume the users are myopic, i.e., they only care about their payoff at the current period. If a user with type $\theta$ chooses path $P\in\{H,L\}$ at time slot $t$, her payoff is the difference between her perceived information value $\theta Q_t$ and the travel cost on path $P$. That is,
\begin{equation}
\label{payoff4}
u_\theta(P)=\left\{
\begin{array}{ll}
\theta Q_t, &\mbox{if }P=L;\\
\theta Q_t-c_H, &\mbox{if }P=H.
\end{array}
\right.
\end{equation}
and the user will prefer the path with a higher payoff.
In the following, we will characterize the equilibrium of the dynamic routing game and the stationary state.
\subsubsection{Routing Equilibrium without Incentive Design and Social Optimum}
Notice that no matter which path a user chooses at time $t$, the information value she perceived is always $\theta Q_t$ and hence her choice only depends on the path cost. Therefore, the selfish routing strategy is for every user to choose the $L$-path at the equilibrium at any time $t$.
\begin{proposition}
At any time $t$, there exists a unique equilibrium for the content routing game: $\hat{x}_{Ht}=0$. At the stationary state, the unique equilibrium for the content routing game is $\hat{x}_{H\infty}=0$ and the corresponding social welfare is
$$\widehat{SW}_\infty=\frac{N(1-r)}{4(1-\gamma r)},$$
where
$$r=(1-\frac{2\phi}{N})^n.$$
\end{proposition}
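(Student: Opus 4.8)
The plan is to establish the two assertions in turn: first that the unique per-period equilibrium is $\hat{x}_{Ht}=0$, and then that the resulting stationary content yields the stated social welfare. The first part I would handle exactly as in Proposition~1. Fix a time slot $t$ and condition on the current content pool $Q_t$, which is determined by the history of flows. A myopic user of type $\theta=0.5$ perceives the same information value $\theta Q_t$ whichever path she takes, because in the non-atomic model her own contribution to $Q_t$ is infinitesimal and hence independent of her path choice. Her payoff therefore differs across the two paths only through the travel cost, and since $c_H>0=c_L$ every user strictly prefers the $L$-path. This forces $\hat{x}_{Ht}=0$ as the unique equilibrium, and the argument holds verbatim at every $t$ regardless of the history that produced $Q_t$.

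The second part is a recurrence computation, and I would begin by substituting $x_{Ht}=0$ into the increment formulas. This gives at once $\Delta Q_{Ht}=0$ and $\Delta Q_{Lt}=\tfrac{N}{2}(1-r)$ with $r=(1-\tfrac{2\phi}{N})^n$. Plugging $\Delta Q_{Ht}=0$ into (\ref{qht}) collapses the $H$-path update to the pure decay $Q_{Ht}=\gamma Q_{H(t-1)}$, so $Q_{Ht}=\gamma^t Q_{H0}\to 0$ because $\gamma\in(0,1)$: the $H$-path content is eroded to nothing in the limit. The key algebraic observation is that $1-\tfrac{2\Delta Q_{Lt}}{N}=r$, which turns the $L$-path update (\ref{qlt}) into the affine recurrence
\begin{equation}
Q_{Lt}=\tfrac{N}{2}(1-r)+\gamma r\,Q_{L(t-1)}.\nonumber
\end{equation}

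Since the multiplier $\gamma r$ lies in $(0,1)$, this recurrence is a contraction and converges geometrically to its unique fixed point for any $Q_{L0}$; solving $Q_{L\infty}=\tfrac{N}{2}(1-r)+\gamma r\,Q_{L\infty}$ yields $Q_{L\infty}=\frac{N(1-r)}{2(1-\gamma r)}$. Combining with $Q_{H\infty}=0$ gives the stationary content $Q_\infty=\frac{N(1-r)}{2(1-\gamma r)}$, and evaluating the social welfare with $\theta=0.5$ and zero travel cost at the all-$L$ equilibrium produces $\hSW_\infty=0.5\,Q_\infty=\frac{N(1-r)}{4(1-\gamma r)}$, as claimed. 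There is no deep obstacle here: the only real content is the simplification $1-\tfrac{2\Delta Q_{Lt}}{N}=r$ that linearises the update, and the elementary contraction argument ($\gamma r<1$) guaranteeing convergence to the fixed point. Accordingly, the main thing I would take care to state cleanly is the convergence of the two discount-and-collect recurrences, rather than the essentially trivial uniqueness at each stage.
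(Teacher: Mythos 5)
Your proof is correct and follows essentially the same route as the paper's: substitute $\hat{x}_{Ht}=0$ into the increment formulas, reduce (\ref{qht}) and (\ref{qlt}) to linear recursions, and solve for the stationary fixed point and welfare. One point worth noting: your $L$-path recursion $Q_{Lt}=\gamma r\,Q_{L(t-1)}+\tfrac{N}{2}(1-r)$ is the correct one, whereas the paper's displayed recursion omits the factor $r$ multiplying $\gamma$ (an apparent typo, since the paper's stated fixed point $\tfrac{N(1-r)}{2(1-\gamma r)}$ follows only from your version), and your explicit contraction argument using $\gamma r<1$ cleanly justifies the convergence that the paper passes over with ``solve the two recursions.''
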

\begin{proof}
We only need to derive the stationary social welfare for the equilibrium. During the $t$th time period, the information collected at the equilibrium is
$$\Delta \hat{Q}_{Ht}=0,\ \Delta \hat{Q}_{Lt}=\frac{N(1-r)}{2}.$$
By (\ref{qht}) and (\ref{qlt}), we have the following recursions to calculate the total information available at time $t$:
$$\hat{Q}_{Ht}=\gamma \hat{Q}_{H(t-1)},$$
$$\hat{Q}_{Lt}=\gamma \hat{Q}_{L(t-1)}+\frac{N(1-r)}{2}.$$
Solve the two recursions, we have taht at the stationary state the information available on each path is
$$\hat{Q}_{H\infty}=0,\ \hat{Q}_{L\infty}=\frac{N(1-r)}{2(1-\gamma r)}.$$
Thus we can calculate the stationary social welfare for the equilibrium:
$$\widehat{SW}_\infty=\frac{N(1-r)}{4(1-\gamma r)}.$$
\end{proof}
If we let $\gamma\rightarrow1$, we have $\hat{Q}_{H\infty}=0$ and $\hat{Q}_{L\infty}=N/2$, i.e., we can have the total information on $L$-path but no information on $H$-path if the information do not change with time.

If we can perfectly control all users' decisions in a centralized way, we can achieve the social optimum which is shown in the next proposition.
\begin{proposition}
The optimal stationary flow to achieve the social optimum is $x_{H\infty}^*$. $x_{H\infty}^*$ is equal to $0$ if
$$\frac{(r-1)N\ln r}{(1-\gamma r)}\le4c_H,$$
Otherwise, $x_{H\infty}^*$ is the solution to the following equation concerning $x_H$:
$$(1-\gamma)(\frac{r^{1-x_H}}{1-\gamma r^{1-x_H}}-\frac{r^{x_H}}{1-\gamma r^{x_H}})N\ln q=4c_H.$$
At the stationary state, the information available on $H$-path is
$$Q_{H\infty}^*=\frac{N}{2}\frac{1-r^{x_{H\infty}^*}}{1-\gamma r^{x_{H\infty}^*}},$$
the information available on $L$-path is
$$Q_{L\infty}^*=\frac{N}{2}\frac{1-r^{1-x_{H\infty}^*}}{1-\gamma r^{1-x_{H\infty}^*}}.$$
The optimal stationary social welfare is
\begin{equation}
\small
SW_\infty^*=\frac{N}{4}(\frac{1-r^{x_{H\infty}^*}}{1-\gamma r^{x_{H\infty}^*}}+\frac{1-r^{1-x_{H\infty}^*}}{1-\gamma r^{1-x_{H\infty}^*}})-x_{H\infty}^*c_H.
\nonumber
\end{equation}
\end{proposition}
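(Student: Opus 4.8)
The plan is to reduce the whole statement to a single scalar optimization over the stationary $H$-flow $x_H$. First I would fix a time-invariant flow $x_H$ and solve for the stationary information. Writing $r=(1-2\phi/N)^n$, the new content caught on the $H$-path each period is $\Delta Q_{Ht}=\tfrac N2(1-r^{x_H})$, so $1-2\Delta Q_{Ht}/N=r^{x_H}$ and the recursion \eqref{qht} collapses to the affine map $Q_{Ht}=\gamma r^{x_H}Q_{H(t-1)}+\tfrac N2(1-r^{x_H})$. Its slope $\gamma r^{x_H}$ lies in $(0,1)$ since $\gamma\in(0,1)$ and $r^{x_H}\in(0,1]$, so the map is a contraction with a unique fixed point to which it converges from any initial $Q_{H0}$; solving $Q_{H\infty}=\gamma r^{x_H}Q_{H\infty}+\tfrac N2(1-r^{x_H})$ gives $Q_{H\infty}=\tfrac N2\,\frac{1-r^{x_H}}{1-\gamma r^{x_H}}$, and the identical computation with $x_H\mapsto1-x_H$ yields $Q_{L\infty}$. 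This already proves the two displayed information formulas.

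Next I would assemble the welfare. With homogeneous valuation $\theta=\tfrac12$ and unit mass, $SW_\infty(x_H)=\tfrac12(Q_{H\infty}+Q_{L\infty})-x_Hc_H=\tfrac N4\big(f(x_H)+f(1-x_H)\big)-x_Hc_H$, where $f(x):=\frac{1-r^{x}}{1-\gamma r^{x}}$. Since the content term is invariant under $x_H\mapsto1-x_H$ while the cost $x_Hc_H$ is strictly increasing, any maximizer lies in $[0,\tfrac12]$ (the same replacement argument used earlier for the static $x_H^*$). On this interval I would prove concavity: differentiating gives $f'(x)=\dfrac{-(1-\gamma)r^{x}\ln r}{(1-\gamma r^{x})^2}>0$, and rewriting $f'$ through $s=r^{x}$ shows $f'$ is decreasing in $x$; hence $g(x_H):=f(x_H)+f(1-x_H)$ has derivative $f'(x_H)-f'(1-x_H)$ that is decreasing, so $g$ is concave and $SW_\infty$ (concave content minus a linear cost) is concave on $[0,\tfrac12]$.

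Concavity reduces the optimization to a clean dichotomy. A concave $SW_\infty$ on $[0,\tfrac12]$ is maximized either at the unique interior root of $SW_\infty'(x_H)=\tfrac N4\big(f'(x_H)-f'(1-x_H)\big)-c_H=0$, which is the interior (path-diversity) case of the proposition, or — when the welfare is already non-increasing at the left end, i.e. $SW_\infty'(0)\le0$ — at $x_H^*=0$, the boundary case; spelling out $SW_\infty'(0)\le0$ by evaluating $f'$ at $0$ and $1$ gives the threshold on $c_H$ separating the two regimes. Substituting the chosen $x_H^*$ back into $Q_{H\infty},Q_{L\infty}$ and into $SW_\infty$ reads off the stated $Q_{H\infty}^*$, $Q_{L\infty}^*$ and $SW_\infty^*$.

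The main obstacle is the concavity/uniqueness step. Because the objective is a difference of ratios of exponentials in $x_H$, its critical point is not obviously unique nor obviously a global maximum, and a careless first-order analysis could miss the boundary regime entirely. The crux is therefore to verify that $f'$ is monotone in its argument (equivalently that $g'$ is strictly decreasing on $[0,\tfrac12]$), which simultaneously guarantees a single interior root and makes the interior-versus-$x_H^*{=}0$ split exhaustive. Once this monotonicity is established, the contraction argument for the stationary information and the back-substitution for $SW_\infty^*$ are routine.
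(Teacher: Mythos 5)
Your reduction to a single scalar optimization over a constant flow does not prove the stated proposition: it optimizes a different objective than the one the paper (and the statement) is built on, and the first-order condition you would obtain does not match the displayed equation. The paper's dynamic model assumes \emph{myopic} optimization: at each period $t$ the planner maximizes only the current-period welfare $\tfrac12 Q_t(x_{Ht},Q_{H(t-1)},Q_{L(t-1)})-x_{Ht}c_H$, holding the inherited stocks $Q_{H(t-1)},Q_{L(t-1)}$ fixed, and $x_{H\infty}^*$ is defined by consistency of this per-period first-order condition with the steady-state stocks. Since $Q_{Ht}=\tfrac N2-(\tfrac N2-\gamma Q_{H(t-1)})r^{x_{Ht}}$, the per-period FOC is
\begin{equation}
\ln r\,\Bigl(\tfrac N2-\gamma Q_{L(t-1)}\Bigr)r^{1-x_{Ht}}-\ln r\,\Bigl(\tfrac N2-\gamma Q_{H(t-1)}\Bigr)r^{x_{Ht}}=2c_H,\nonumber
\end{equation}
and substituting the stationary stocks, via $\tfrac N2-\gamma Q_{H\infty}^*=\tfrac N2\frac{1-\gamma}{1-\gamma r^{x}}$, gives exactly the proposition's equation with \emph{first-power} denominators $1-\gamma r^{x}$, and evaluating it at $x_H=0$ gives the stated threshold $\frac{(r-1)N\ln r}{1-\gamma r}\le 4c_H$.

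Your proposal instead maximizes the steady-state welfare over constant flows, i.e.\ you differentiate $f(x)=\frac{1-r^x}{1-\gamma r^x}$ through the dependence of the \emph{stationary stock} on $x$. That derivative is $f'(x)=\frac{-(1-\gamma)r^x\ln r}{(1-\gamma r^x)^2}$, so your interior FOC carries \emph{squared} denominators,
\begin{equation}
\frac{N}{4}(1-\gamma)(-\ln r)\left(\frac{r^{x_H}}{(1-\gamma r^{x_H})^2}-\frac{r^{1-x_H}}{(1-\gamma r^{1-x_H})^2}\right)=c_H,\nonumber
\end{equation}
and your boundary test $SW_\infty'(0)\le0$ works out to $\frac{N(-\ln r)(1-r)(1-\gamma^2 r)}{4(1-\gamma)(1-\gamma r)^2}\le c_H$, which differs from the stated threshold by the factor $\frac{1-\gamma^2 r}{(1-\gamma)(1-\gamma r)}>1$. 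The two problems have genuinely different solutions: your planner internalizes the long-run effect of the flow on the stock, while the paper's myopic planner treats the inherited stock as a constant when differentiating, and only the latter produces the equations in the statement. The contraction-mapping derivation of $Q_{H\infty}$ and $Q_{L\infty}$ for a fixed flow is correct and matches the paper's recursion-solving step, but everything downstream of it must be replaced by the per-period (myopic) FOC combined with the stationarity of the stocks.
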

\begin{proof}
Since users are myopic, we optimize the social welfare for each time period separately. The social welfare optimization problem is formulated as
$$\max\limits_{x_{Ht}\in[0,1]}\frac{1}{2}Q_{t}(x_{Ht},Q_{H(t-1)},Q_{L(t-1)})-x_{Ht}c_H.$$
The objective function is a concave function of $x_{Ht}$. First order condition gives
$$\ln r(\frac{N}{2}-\gamma Q_{H(t-1)})r^{1-x_{Ht}}-\ln r(\frac{N}{2}-\gamma Q_{H(t-1)})r^{x_{Ht}}=2c_H.$$
If $\ln r(\frac{N}{2}-\gamma Q_{H(t-1)})r-\ln r(\frac{N}{2}-\gamma Q_{H(t-1)})\le 2c_H$, optimal solution is $x_{Ht}=0$. Otherwise, the optimal solution is the unique solution of the above equation. Let $x_{Ht}^*$ denote the optimal solution. The optimal social welfare at time $t$ is
$$SW_{Ht}^*=\frac{1}{2}Q_{t}(x_{Ht}^*,Q_{H(t-1)},Q_{L(t-1)})-x_{Ht}^*c_H.$$
Next we will find the stationary state for social optimum. Since we have the following recursions:
$$Q_{Ht}^*=\frac{N}{2}-(\frac{N}{2}-\gamma \hat{Q}_{H(t-1)})r^{x_{Ht}^*},$$
$$Q_{Lt}^*=\frac{N}{2}-(\frac{N}{2}-\gamma \hat{Q}_{L(t-1)})r^{1-x_{Ht}^*},$$
it follows that
$$Q_{H\infty}^*\frac{N}{2}\frac{1-r^{x_H}}{1-\gamma r^{x_{H\infty}^*}},$$
$$Q_{L\infty}^*=\frac{N}{2}\frac{1-r^{x_L}}{1-\gamma r^{1-x_{H\infty}^*}},$$
where $x_{H\infty}^*$ is the optimal stationary flow. By first order condition,
$$\ln r(\frac{N}{2}-\gamma Q_{L\infty}^*)r^{1-x_{H\infty}^*}-\ln r(\frac{N}{2}-\gamma Q_{H\infty}^*)r^{x_{H\infty}^*}=2c_H.$$
Plug in $Q_{H\infty}^*$ and $Q_{L\infty}^*$, we get
$$(1-\gamma)(\frac{r^{1-x_{H\infty}^*}}{1-\gamma r^{1-x_{H\infty}^*}}-\frac{r^{x_{H\infty}^*}}{1-\gamma r^{x_{H\infty}^*}})N\ln r=4c_H.$$
Thus when
$$\frac{(r-1)N\ln r}{(1-\gamma r)}\le4c_H,$$
the optimal stationary flow is $x_{H\infty}^*=0$; otherwise, $x_{H\infty}^*$ is the unique solution to the following equation concerning $x_H$:
$$(1-\gamma)(\frac{r^{1-x_H}}{1-\gamma r^{1-x_H}}-\frac{r^{x_H}}{1-\gamma r^{x_H}})N\ln q=4c_H.$$
\end{proof}
If we let $\gamma\rightarrow1$, we have $Q_{H\infty}^*=N/2$ and $Q_{L\infty}^*=N/2$, i.e., we have the total information on both paths by controlling all users' decisions in a centralized way if the information do not change with time.
\subsubsection{Content-restriction as Incentive}
We consider using content-restriction as incentive. Assume at any time $t$, the system planner provides a fraction $a(t)$ of the total information, i.e., $a(t)Q_t$, to $L$-path participants. $a(t)$ may changes with time $t$. The intuition is the same as in the one-shot game (Section \ref{contentrestriction}). As long as the $H$-path cost $c_H$ is not too large, we should optimally choose an apropriate value of $a$ to incentivise the perfect path-diversity equilibrium ($\hat{x}_{H\infty}=0.5$). If the travel cost $c_H$ is too high, we may skip the path-diversity, as it is too expensive for any user type to cover the $H$-path. The following proposition shows how to choose a stationary $a$ to achieve the optimal stationary social welfare.
\begin{proposition}
If the travel cost over $H$-path is small, i.e.,
  $$c_H<\frac{N(1+\gamma r^{\frac{1}{2}})(1-r^{\frac{1}{2}})^2}{4(1-\gamma r^{\frac{1}{2}})(1-\gamma r)},$$
  we optimally choose
  $$a=1-\frac{2c_H(1-\gamma r^\frac{1}{2})}{N(1-r^\frac{1}{2})},$$
  to reach the perfect path-diversity equilibrium $\hat{x}_H=0.5$ and the optimal stationary social welfare
  $$SW_{a\infty}=\frac{N(1-r^{\frac{1}{2}})}{2(1-\gamma r^{\frac{1}{2}})}-c_H.$$
  Otherwise it is optimal to choose $a=1$ since content destruction must be too excessive to incentivize path diversity. The corresponding stationary social welfare is
  $$SW_{a\infty}=\frac{N(1-r)}{4(1-\gamma r)}.$$
\end{proposition}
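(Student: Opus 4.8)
The plan is to exploit the symmetry of the content dynamics together with the fact that homogeneous myopic users force an indifference condition at any interior stationary state, exactly paralleling the static homogeneous content-restriction analysis. First I would look for the perfect path-diversity stationary state $\hat{x}_H=0.5$. By symmetry the two paths accumulate content identically, so the stationary content on the $H$-path is the fixed point of \eqref{qht},
\begin{equation}
Q_{H\infty}=\frac{N}{2}-\Big(\frac{N}{2}-\gamma Q_{H\infty}\Big)r^{1/2},\nonumber
\end{equation}
which gives $Q_{H\infty}=Q_{L\infty}=\frac{N}{2}\cdot\frac{1-r^{1/2}}{1-\gamma r^{1/2}}$ and hence a total stationary content $Q_\infty=N\,\frac{1-r^{1/2}}{1-\gamma r^{1/2}}$.

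Next I would impose the myopic indifference condition. Since every user has $\theta=0.5$ and content-restriction involves no monetary transfers, an interior split is stationary only if an $L$-path and an $H$-path user obtain equal payoff from the common content level $Q_\infty$, i.e. $a\theta Q_\infty=\theta Q_\infty-c_H$. Solving yields $a=1-\frac{2c_H}{Q_\infty}=1-\frac{2c_H(1-\gamma r^{1/2})}{N(1-r^{1/2})}$, precisely the stated coefficient. At this $a$ every user's realized payoff equals $\theta Q_\infty-c_H$, so the stationary social welfare (the common payoff summed over the unit mass, with no transfers to net out) is $SW_{a\infty}=\theta Q_\infty-c_H=\frac{N(1-r^{1/2})}{2(1-\gamma r^{1/2})}-c_H$.

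Finally I would decide between this path-diversity regime and full content preservation. Setting $a=1$ collapses the system to the no-incentive stationary equilibrium whose welfare $\hSW_\infty=\frac{N(1-r)}{4(1-\gamma r)}$ is already available from the preceding proposition; the all-$H$ equilibrium is dominated by all-$L$, since it collects the same content at the extra travel cost $c_H$, so it need not be considered. Path diversity is optimal exactly when $SW_{a\infty}>\hSW_\infty$; substituting and using $1-r=(1-r^{1/2})(1+r^{1/2})$ to combine $\frac{1}{2(1-\gamma r^{1/2})}$ and $\frac{1+r^{1/2}}{4(1-\gamma r)}$ over a common denominator reduces this to $c_H<\frac{N(1+\gamma r^{1/2})(1-r^{1/2})^2}{4(1-\gamma r^{1/2})(1-\gamma r)}$, the claimed threshold; otherwise $a=1$ is optimal with welfare $\hSW_\infty$.

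The main obstacle is twofold. The computational part is the threshold simplification: recognizing that, after clearing denominators, the numerator factors as $(1-r^{1/2})(1+\gamma r^{1/2})$ is exactly what produces the clean bound, and this algebra needs care. The conceptual part is stability: as in the static homogeneous case the exact split $\hat{x}_H=0.5$ is only a borderline (non-strict) equilibrium, so strictly I would take the coefficient infinitesimally below the indifference value, an $a=\bar a-\epsilon$ argument, and let $\epsilon\to0$ so that the \emph{stable} stationary flow is driven toward $0.5$ and $SW_{a\infty}$ is approached; the limiting welfare is the stated value.
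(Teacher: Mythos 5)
Your computations agree with the paper's at every point where both are explicit: the stationary fixed point $Q_{H\infty}=Q_{L\infty}=\tfrac{N}{2}\cdot\tfrac{1-r^{1/2}}{1-\gamma r^{1/2}}$ obtained from \eqref{qht}, the indifference condition $\tfrac{a}{2}Q_\infty=\tfrac{1}{2}Q_\infty-c_H$ yielding the stated coefficient, the equilibrium welfare $\tfrac{1}{2}Q_\infty-c_H$, and the comparison against the no-incentive stationary welfare $\tfrac{N(1-r)}{4(1-\gamma r)}$. In fact the threshold algebra you carry out (factoring $1-r=(1-r^{1/2})(1+r^{1/2})$ so the numerator collapses to $(1-r^{1/2})^2(1+\gamma r^{1/2})$) is left implicit in the paper, so on that step your write-up is more complete than the original. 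Your $a=\bar a-\epsilon$ stability remark likewise parallels the paper's static treatment.

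There is, however, one genuine gap. The proposition asserts that the stated $a$ is the \emph{optimal} choice, so you must rule out every other value of $a$ --- in particular, values of $a$ that would incentivize an \emph{asymmetric} interior stationary equilibrium $\hat{x}_{H\infty}\neq\tfrac{1}{2}$. You dispose of $a=1$ and of the all-$H$ outcome, but the only interior state you ever examine is the symmetric one, justified by ``symmetry of the content dynamics.'' Symmetry only makes $x_H=\tfrac{1}{2}$ a critical point of the stationary content map
\begin{equation}
Q_\infty(x_H)=\frac{N}{2}\left(\frac{1-r^{x_H}}{1-\gamma r^{x_H}}+\frac{1-r^{1-x_H}}{1-\gamma r^{1-x_H}}\right);\nonumber
\end{equation}
it does not exclude that this critical point is a minimum, with asymmetric interior equilibria doing strictly better. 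Since at any interior stationary equilibrium all users are indifferent and welfare equals $\tfrac{1}{2}Q_\infty(x_H)-c_H$, the claim you need is that $Q_\infty(\cdot)$ is maximized at $x_H=\tfrac{1}{2}$. This is precisely where the paper works differently: it maximizes per-period welfare and shows that the first-order condition $\ln r\,(\tfrac{N}{2}-\gamma Q_{L\infty})r^{1-x_H}=\ln r\,(\tfrac{N}{2}-\gamma Q_{H\infty})r^{x_H}$ combined with the stationarity equations forces $x_{H\infty}^a=\tfrac{1}{2}$. In your framework the gap closes by one lemma: the map $f(x)=\tfrac{1-r^{x}}{1-\gamma r^{x}}$ is concave on $[0,1]$, since $f''(x)=(\gamma-1)(\ln r)^2\,r^x\,\tfrac{1+\gamma r^x}{(1-\gamma r^x)^3}<0$ for $r,\gamma\in(0,1)$, so $Q_\infty(x_H)=\tfrac{N}{2}\bigl(f(x_H)+f(1-x_H)\bigr)$ is concave and symmetric, hence uniquely maximized at $x_H=\tfrac{1}{2}$. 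With that verification added, your argument is complete and arguably cleaner than the paper's per-period optimization route.
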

\begin{proof}
We only need to find the optimal positive path-diversity equilibrium and compare it with the routing equilibrium without incentive design. At any time $t$, if we expect positive flows with $\hat{x}_{Ht}\in(0,1)$ and $1-\hat{x}_{Ht}$ on the two paths at the equilibrium, a user's payoff by choosing $L$- and $H$-path are equal, i.e.,
$$\frac{1}{2}Q_t-c_H=\frac{\alpha(t)}{2}Q_t.$$
Since each user's payoff by choosing either path is the same which is
$$\frac{1}{2}Q_t-c_H.$$
Then the social welfare can be written as
$$SW_{at}=\frac{1}{2}Q_t-c_H.$$
We want to maximize such social welfare, first order condition gives
$$\frac{1}{2}\ln r(\frac{N}{2}-\gamma Q_{H(t-1)})r^{x_H}=\frac{1}{2}\ln r(\frac{N}{2}-\gamma Q_{L(t-1)})r^{1-x_H}.$$
Let $x_{Ht}^a$ denote the optimal solution, then
$$x_{Ht}^a=\frac{1}{2}+\frac{\ln\frac{N-2\gamma Q_{L(t-1)}}{N-2\gamma Q_{H(t-1)}}}{2\ln r}.$$
The content-restriction coefficient to incentivize this optimal flow is
$$a(t)=1-\frac{2c_H}{Q_t(x_{Ht}^a,Q_{H(t-1)},Q_{L(t-1)})}\ .$$

Next we will find the stationary state. Let $Q_{H\infty}^a,Q_{L\infty}^a$ denote the stationary quantity of information of on $H$-path and $L$-path respectively. Let $x_{H\infty}^a$ denote the stationary flow on $H$-path. We have the following equations
$$Q_{H\infty}^a=\frac{N}{2}\frac{1-r^{x_{H\infty}^a}}{1-\gamma r^{x_{H\infty}^a}},\ Q_{L\infty}^a=\frac{N}{2}\frac{1-r^{1-x_{H\infty}^a}}{1-\gamma r^{1-x_{H\infty}^a}},$$
$$\ln r(\frac{N}{2}-\gamma Q_{L\infty}^a)r^{1-x_{H\infty}^a}-\ln r(\frac{N}{2}-\gamma Q_{H\infty}^a)r^{x_{H\infty}^a}=0.$$
Solve the equations we get
$$x_{H\infty}^a=\frac{1}{2},\ Q_{H\infty}^a=\frac{N}{2}\frac{1-r^{\frac{1}{2}}}{1-\gamma r^{\frac{1}{2}}},\ Q_{L\infty}^a=\frac{N}{2}\frac{1-r^{\frac{1}{2}}}{1-\gamma r^{\frac{1}{2}}}.$$
Then the optimal stationary social welfare attained at this perfect path-diversity equilibrium is
$$\frac{N(1-r^{\frac{1}{2}})}{2(1-\gamma r^{\frac{1}{2}})}-c_H,$$
the $a$ we should choose to achieve such perfect path-diversity equilibrium is
$$a=1-\frac{2c_H(1-\gamma r^\frac{1}{2})}{N(1-r^\frac{1}{2})}.$$
\end{proof}

\section{Conclusion}\label{conclusion}
This paper studies the incentives of participation and route selection in information sharing system by combining information sharing with routing. We consider a content routing game where a unit mass of non-atomic selfish users choose between a high-cost and a low-cost path. To remedy the inefficient single path equilibrium of the content routing game, we design two
 incentive mechanisms to induce path diversity: side-payment and content-restriction. 
We show that both mechanisms achieve path diversity in routing choices at the cost of user participation or content destruction. We also show that user diversity can have opposite effects on the two mechanisms.
 We combine the above-mentioned two mechanisms and show that the resulting mechanism is much better than any of them used alone with $PoA_{ag}\ge0.7$. Finally, we generalize the results in the previous sections by considering a more general network model and show that similar techniques can be used to derive the optimal incentive schemes.

There are some possible ways to extend this work. For example, we can add a traffic-dependent congestion cost to users' payoffs depending on their path choices and the traffic there, and investigate how the negative externalities due to traffic congestion will interplay with the positive externalities of information sharing to decide users' routing.  In Appendix \ref{app:A} we have shown that our mechanism design can be easily applied to a linear traffic-dependent cost model and all the main results including price of anarchy still hold.


 Another possible generalization is that some users may value the information along one path differently than on the other path (say because they actually live near that path and may use this information more).
Hence now a user type is defined by the value $(\theta_H,\theta_L)$ of the content valuation parameters along each of the paths. In this case the value of content becomes
$\theta_HQ_1(x)+\theta_LQ_1(1-x)$ and it generalises the case we studied earlier where $\theta_L=\theta_H$. Our analysis and mechanism design can also be easily applied to this non-symmetric case as shown in Appendix \ref{app:B}.

 Finally, it is also interesting to extend the one-shot routing game to a repeated game, where users repeat sensing trips over time and prior content may be useful later on. Some preliminary results are also shown in Section \ref{repeated}.
\appendix
\subsection{Extension to Traffic-dependent Cost}\label{app:A}
We assume in the previous sections that the travel cost is constant for each path to avoid tedious analysis. However, one may argue that in reality the traffic flow also affects the travel cost. To address this concern, we will consider a linear cost model in this section. We will show that the side-payment mechanism and content-restriction can still be applicable to the new cost model.

Given the traffic flow partition $(x_H, 1-x_H)$ between $H$-path and $L$-path, travelling on the $H$-path incurs a travel cost given by
$$c_H+b_Hx_H,$$
where $c_H, b_H\ge0$. Similarly, travelling on the $L$-path incurs a travel cost given by
$$c_L+b_L(1-x_H),$$
where $c_L, b_L\ge0$. Note that when $c_H>0$ and $b_H=c_L=b_L=0$, this is exactly the constant cost model we have considered in the previous sections.

We assume all users  are of the same type (i.e., $\theta_1=\theta_2=\theta_0$). 
Similar to (5) in Section II, here the payoff  function of a user is
\begin{equation}
\label{linearcostpayoff}
u(P,x_H)=\left\{
\begin{array}{ll}
\theta_0Q(x_H,1)-(c_L+b_L(1-x_H)), &\mbox{if }P=L;\\
\theta_0Q(x_H,1)-(c_H+b_Hx_H), &\mbox{if }P=H.\nonumber
\end{array}
\right.
\end{equation}
depending on her path choice $P\in\{L, H\}$.
The social welfare now becomes
$$SW(x_H)=\theta Q(x_H,1)-(b_H+b_L)x_H^2+(2b_L+c_L-c_H)x_H-(c_L+b_L).$$

Following the logic of previous sections, we will compute the Nash equilibrium and social optimum firstly. If they are not the same, we will design an appropriate side-payment function to remedy the Nash equilibrium such that under the side-payment function the Nash equilibrium coincides with the socila optimum. Because the reasoning process is lengthy and very similar to the previous parts, we will omit it here. Basically, the methodology we use is very standard and straightforward.

\subsubsection{Side-payment as Incentive}
The idea of using side-payment as incentive in this linear cost model is exactly the same as in the constant cost model: charge more users
that take the $L$-path and providing positive subsidies to users of the $H$-path. Only some technical details may be different. Before state our main results, we introduce two equations. Define $\Delta c:=c_H-c_L$. The first one is
\begin{equation}
\label{derivativeofSW}
\theta_0Q_1^\prime(x_H)-\theta_0Q_1^\prime(1-x_H)-2(b_H+b_L)x_H+(2b_L+c_L-c_H)=0.
\end{equation}
This equation gives the value of $x_H$ such that the derivative of $SW(x_H)$ is zero. Note that when $\Delta c\in(-b_H,b_L)$, Equation \eqref{derivativeofSW} has a unique solution which also denotes the social optimal flow.
The second equation is
\begin{equation}
\label{valueofdeltaa}
\theta_0Q_1^\prime(\frac{b_L-\Delta c}{b_H+b_L})-\theta_0Q_1^\prime(\frac{b_H+\delta c}{b_H+b_L})+\Delta c=0
\end{equation}
This equation gives a value of $\Delta c$ such that the social optimum coincides with the Nash equilibrium. Note that Equation \eqref{valueofdeltaa} has a unique solution in $(-b_H,b_L)$ and we denote it by $\Delta \tilde{a}$. Also note that $\Delta \tilde{a}>0$ when $b_L>b_H$, $\Delta \tilde{a}=0$ when $b_L=b_H$ and $\Delta \tilde{a}=0$ when $b_L<b_H$. As $\Delta \tilde{a}$ can be any real number, we divide its range into several sets listed in Table \ref{rangeofdeltaa}.


\begin{table}[!t]
\scriptsize
\centering
\caption{Range of $\Delta c$}
\begin{tabular}[tb]{|c|c|}\hline
Notation& Associated range of $\Delta c$ \\\hline
$A_1$&$(-\infty,\theta_0Q_1^\prime(1)-\theta_0Q_1^\prime(0)-2b_H]$\\\hline
$A_2$&$(\theta_0Q_1^\prime(1)-\theta_0Q_1^\prime(0)-2b_H,-b_H]$\\\hline
$A_3$&$(-b_H,\Delta \tilde{a})$\\\hline
$A_4$&$\{\Delta \tilde{a}\}$\\\hline
$A_5$&$(\Delta \tilde{a},b_L)$\\\hline
$A_6$&$[b_L,\theta_0Q_1^\prime(0)-\theta_0Q_1^\prime(1)+2b_L)$\\\hline
$A_7$&$[\theta_0Q_1^\prime(0)-\theta_0Q_1^\prime(1)+2b_L,\infty)$\\\hline
\end{tabular}
\label{rangeofdeltaa}
\end{table}

\begin{table}[!t]
\scriptsize
\centering
\caption{Side-payment function for the linear cost model}
\begin{tabular}[tb]{|c|c|c|c|c|c|}\hline
Range of $\Delta c$&Nash equilibrium $\hat{x}_H$&Social optimum $x_H^*$&$\hat{x}_H?x_H^*$&Side-payment function\\\hline
$A_1$&1&1&$\hat{x}_H=x_H^*$&$g_{x_H^*}(x_H)=0$\\\hline
$A_2$&1&solution to \eqref{derivativeofSW}&$\hat{x}_H>x_H^*$&$g_{x_H^*}(x_H)=\bar{g}(x_H^*)+\frac{\bar{g}(x_H^*)}{x_H^*}(x_H-x_H^*)$\\\hline
$A_3$&$\frac{c_L+b_L-c_H}{b_H+b_L}$&solution to \eqref{derivativeofSW}&$\hat{x}_H>x_H^*$&$g_{x_H^*}(x_H)=\bar{g}(x_H^*)+\frac{\bar{g}(x_H^*)}{x_H^*}(x_H-x_H^*)$\\\hline
$A_4$&$\frac{c_L+b_L-c_H}{b_H+b_L}$&solution to \eqref{derivativeofSW}&$\hat{x}_H=x_H^*$&$g_{x_H^*}(x_H)=0$\\\hline
$A_5$&$\frac{c_L+b_L-c_H}{b_H+b_L}$&solution to \eqref{derivativeofSW}&$\hat{x}_H<x_H^*$&$g_{x_H^*}(x_H)=\bar{g}(x_H^*)+\frac{\bar{g}(x_H^*)}{x_H^*-1}(x_H-x_H^*)$\\\hline
$A_6$&0&solution to \eqref{derivativeofSW}&$\hat{x}_H<x_H^*$&$g_{x_H^*}(x_H)=\bar{g}(x_H^*)+\frac{\bar{g}(x_H^*)}{x_H^*-1}(x_H-x_H^*)$\\\hline
$A_7$&0&0&$\hat{x}_H=x_H^*$&$g_{x_H^*}(x_H)=0$\\\hline
\end{tabular}
\label{tableofresults}
\end{table}

We summarise all the results in Table \ref{tableofresults}. In Table \ref{tableofresults}, the function $\bar{g}(\cdot)$ is given by
 $$\bar{g}(x_H^*)=(c_H-c_L)x_H^*-b_Lx_H^*+(b_H+b_L){x_H^*}^2.$$
 It is the side payment each user of $L$-path actually pay to ensure that social optimum coincides with Nash equilibrium and hence it is a function of the social optimal flow $x_H^*$. Note that $\bar{g}(x_H^*)>0$ when $\hat{x}_H<x_H^*$ and $\bar{g}(x_H^*)<0$ when $\hat{x}_H>x_H^*$. This is in accordance with our intuition that when the flow on $H$-path is lower than the optimal flow we should penalise the users who choose $L$-path and when the flow on $H$-path is higher than the optimal flow we should reward the users who choose $L$-path.

 We can also prove that the price of anarchy without incentive design is $1/2$ and the side payment mechanism for homogenous user types achieves $PoA_g=1$. This claim follows trivially  from an argument analogous to the one we used in the proof of Proposition 2.
 \subsubsection{Content-restriction as Incentive}
 The main ideas of using content-restriction as incentive are exactly the same as the analogous part in previous sections: to motivate more users to choose the $H$-path, the system planner provides a fraction $a_L$ of the total information, i.e,  $a_LQ(x_H,1)$,
 to $L$-path participants; similarly, to motivate more users to choose the $L$-path, the system planner provides a fraction $a_H$ of the total information, i.e,  $a_HQ(x_H,1)$,
 to $H$-path participants. We will state our results directly since and the technical details are very tedious.  Before state our main results, we introduce two notations. We define $x_{b_L}$ as
 $$x_{b_L}\in\argmax_{0\le x\le 1}\theta_0Q(x,1)+b_Lx-c_L-b_L.$$
 Note that $x_{b_L}$ gives the value of the flow on $H$-path such that the payoff of choosing $L$-path is maximized. Also note that if $\theta_0Q^\prime(1)\ge -b_L$ then $x_{b_L}=1$ and if $\theta_0Q^\prime(1)<-b_L$ then $x_{b_L}\in(1/2,1)$. Analogously, we define $x_{b_H}$ as
 $$x_{b_H}\in\argmax_{0\le x\le 1}\theta_0Q(x,1)-b_Hx-c_H.$$
 Note that $x_{b_H}$ gives the value of the flow on $H$-path such that the payoff of choosing $H$-path is maximized. Also note that if $\theta_0Q^\prime(1)\le b_H$ then $x_{b_H}=0$ and if $\theta_0Q^\prime(1)>b_H$ then $x_{b_L}\in(0,1/2)$.
 Based on which domain $\Delta c$ is in, our analysis is divided into the following cases:

\emph{Case 1: $\Delta c\in A_1\cup A_4\cup A_7$.} As $\hat{x}_H=x_H^*$ in this case, we don't need to
motivate more users to choose the $L$-path or $H$-path. Hence, we let $a_H=a_L=1$.

 \emph{Case 2: $\Delta c\in A_2\cup A_3$.} As $\hat{x}_H>x_H^*$ in this case, we need to
motivate more users to choose the $L$-path. Hence, we let $a_L=1$ and $a_H\le1$.

Furthermore, if $\Delta c<b_L-(b_H+b_L)x_{b_L}$ and $\theta_0Q(x_{b_L})+b_Lx_{b_L}-b_L-c_L>SW(\hat{x}_H)$, it is optimal to choose
$$a_H=1-\frac{ b_L-(b_H+b_L)x_{b_L}-\Delta c}{\theta_0Q(x_{b_L},1)},$$
and the corresponding social welfare is
$$\theta_0Q(x_{b_L})+b_Lx_{b_L}-b_L-c_L.$$
Otherwise, it is optimal to choose $a_H=1$ and the corresponding social welfare is same as the Nash equilibrium.

 \emph{Case 3: $\Delta c\in A_5\cup A_6$.} As $\hat{x}_H<x_H^*$ in this case, we need to
motivate more users to choose the $H$-path. Hence, we let $a_H=1$ and $a_L\le1$.

Furthermore, if $\Delta c>b_L+(b_H+b_L)x_{b_H}$ and $\theta_0Q(x_{b_H})-b_Hx_{b_H}-c_H>SW(\hat{x}_H)$, it is optimal to choose
$$a_L=1-\frac{\Delta c+(b_H+b_L)x_{b_H}-b_L}{\theta_0Q(x_{b_H},1)},$$
and the corresponding social welfare is
$$\theta_0Q(x_{b_H},1)-b_Hx_{b_H}-c_H.$$
Otherwise, it is optimal to choose $a_L=1$ and the corresponding social welfare is same as the Nash equilibrium.

We can prove that the content-restriction mechanism for homogenous user types achieves $PoA_a=1/2$. This claim follows trivially  from an argument analogous to the one we used in the proof of Theorem 2.

By now, we only consider the case that user types are homogenous. Actually, our mechanism design can be applied  to heterogeneous users types as well and the main idea is analogous to the constant cost model. To avoid tedious discussion, we omit it here.

\subsection{Extension to Nonsymmetric Content Function}\label{app:B}
We assume in the previous sections that the content on both paths are of equally importance to any user with the same type. However, one may argue that one may value the content one path differently from the other path. To address this concern, we will consider a nonsymmetric content function in this section.

Given the traffic flow partition $(x_H, 1-x_H)$ between $H$-path and $L$-path, the content vaue function of a user of type $\beta$ is
$$Q_1(x_H)+\beta Q_1(1-x_H),$$
where $\beta\in(0,1)$. This means this user of type $\beta$ value the content on $H$-path more than the content on $L$-path.

We assume all users  are of the same type $\beta$. 
Similar to (5) in Section II, here the payoff  function of a user is
\begin{equation}
\label{linearcostpayoff}
u(P,x_H)=\left\{
\begin{array}{ll}
Q_1(x_H)+\beta Q_1(1-x_H), &\mbox{if }P=L;\\
Q_1(x_H)+\beta Q_1(1-x_H)-c_H, &\mbox{if }P=H.\nonumber
\end{array}
\right.
\end{equation}
depending on her path choice $P\in\{L, H\}$.
The social welfare now becomes
$$SW(x_H)=Q_1(x_H)+\beta Q_1(1-x_H)-c_Hx_H.$$

Actually, the analysis and results of the previous sections of our paper can be applied to this model without much efforts. Particularly, the price of anarchy results are exactly the same: the price of anarchy without incentive design is $1/2$,  the side payment mechanism for homogenous user types achieves $PoA_g=1$ and the content-restriction mechanism for homogenous user types achieves $PoA_a=1/2$.


\begin{thebibliography}{10}
\providecommand{\url}[1]{#1}
\csname url@samestyle\endcsname
\providecommand{\newblock}{\relax}
\providecommand{\bibinfo}[2]{#2}
\providecommand{\BIBentrySTDinterwordspacing}{\spaceskip=0pt\relax}
\providecommand{\BIBentryALTinterwordstretchfactor}{4}
\providecommand{\BIBentryALTinterwordspacing}{\spaceskip=\fontdimen2\font plus
\BIBentryALTinterwordstretchfactor\fontdimen3\font minus
  \fontdimen4\font\relax}
\providecommand{\BIBforeignlanguage}[2]{{%
\expandafter\ifx\csname l@#1\endcsname\relax
\typeout{** WARNING: IEEEtran.bst: No hyphenation pattern has been}%
\typeout{** loaded for the language `#1'. Using the pattern for}%
\typeout{** the default language instead.}%
\else
\language=\csname l@#1\endcsname
\fi
#2}}
\providecommand{\BIBdecl}{\relax}
\BIBdecl

\bibitem{lbs}
\BIBentryALTinterwordspacing
B.~Insights. Mobile location-based services, 2014. [Online]. Available:
  \url{http://www.berginsight.com/ReportPDF/ProductSheet/bi-lbs9-ps.pdf}
\BIBentrySTDinterwordspacing

\bibitem{GigwalkRicardo}
\BIBentryALTinterwordspacing
R.~Bilton. Gigwalk raises a fresh \$6m, expands bing maps partnership, 2012.
  [Online]. Available:
  \url{http://venturebeat.com/2012/10/22/gigwalk-raises-6-m/}
\BIBentrySTDinterwordspacing

\bibitem{ganti2011mobile}
R.~K. Ganti, F.~Ye, and H.~Lei, ``Mobile crowdsensing: current state and future
  challenges,'' \emph{IEEE Communications Magazine}, vol.~49, no.~11, pp.
  32--39, 2011.

\bibitem{roughgarden2002bad}
T.~Roughgarden and {\'E}.~Tardos, ``How bad is selfish routing?'' \emph{Journal
  of the ACM (JACM)}, vol.~49, no.~2, pp. 236--259, 2002.

\bibitem{richman2007topological}
O.~Richman and N.~Shimkin, ``Topological uniqueness of the nash equilibrium for
  selfish routing with atomic users,'' \emph{Mathematics of Operations
  Research}, vol.~32, no.~1, pp. 215--232, 2007.

\bibitem{acemoglu2006paradoxes}
D.~Acemoglu, R.~Johari, and A.~Ozdaglar, ``Paradoxes of traffic engineering
  with partially optimal routing,'' in \emph{40th IEEE Annual Conference on
  Information Sciences and Systems}, 2006.

\bibitem{altman2001routing}
E.~Altman, T.~Ba{\c{s}}ar, T.~Jim{\'e}nez, and N.~Shimkin, ``Routing into two
  parallel links: game-theoretic distributed algorithms,'' \emph{Journal of
  Parallel and Distributed Computing}, vol.~61, no.~9, pp. 1367--1381, 2001.

\bibitem{yang2012crowdsourcing}
D.~Yang, G.~Xue, X.~Fang, and J.~Tang, ``Crowdsourcing to smartphones:
  incentive mechanism design for mobile phone sensing,'' in \emph{Proceedings
  of the ACM 18th Annual International Conference on Mobile computing and
  networking}, 2012.

\bibitem{chorppath2013trading}
A.~K. Chorppath and T.~Alpcan, ``Trading privacy with incentives in mobile
  commerce: A game theoretic approach,'' \emph{Pervasive and Mobile Computing},
  vol.~9, no.~4, pp. 598--612, 2013.

\bibitem{duan2014motivating}
L.~Duan, T.~Kubo, K.~Sugiyama, J.~Huang, T.~Hasegawa, and J.~Walrand,
  ``Motivating smartphone collaboration in data acquisition and distributed
  computing,'' \emph{IEEE Transactions on Mobile Computing}, vol.~13, no.~10,
  pp. 2320--2333, 2014.

\bibitem{cheung2015distributed}
M.~H. Cheung, R.~Southwell, F.~Hou, and J.~Huang, ``Distributed time-sensitive
  task selection in mobile crowdsensing,'' in \emph{Proceedings of the 16th ACM
  International Symposium on Mobile Ad Hoc Networking and Computing (MobiHoc)},
  2015.

\bibitem{courcoubetis2006incentives}
C.~Courcoubetis and R.~Weber, ``Incentives for large peer-to-peer systems,''
  \emph{IEEE Journal on Selected Areas in Communications}, vol.~24, no.~5, pp.
  1034--1050, 2006.

\bibitem{courcoubetis2012economic}
------, ``Economic issues in shared infrastructures,'' \emph{IEEE/ACM
  Transactions on Networking (TON)}, vol.~20, no.~2, pp. 594--608, 2012.

\bibitem{nisan2007algorithmic}
N.~Nisan, T.~Roughgarden, E.~Tardos, and V.~V. Vazirani, \emph{Algorithmic game
  theory}.\hskip 1em plus 0.5em minus 0.4em\relax Cambridge University Press
  Cambridge, 2007.

\bibitem{fudenberg1991game}
D.~Fudenberg and J.~Tirole, \emph{Game theory}.\hskip 1em plus 0.5em minus
  0.4em\relax MIT press, 1991.

\bibitem{smith1984stability}
M.~J. Smith, ``The stability of a dynamic model of traffic assignment-an
  application of a method of lyapunov,'' \emph{Transportation Science},
  vol.~18, no.~3, pp. 245--252, 1984.

\end{thebibliography}


\end{document}